\newcommand\blfootnote[1]{%
  \begingroup
  \renewcommand\thefootnote{}\footnotetext{#1}%
  \endgroup
}
\theoremstyle{plain}
\newtheorem{theorem}{Theorem}[section]
\newtheorem{lemma}[theorem]{Lemma}
\newtheorem{corollary}[theorem]{Corollary}
\theoremstyle{definition}
\newtheorem{definition}[theorem]{Definition}
\newtheorem{assumption}[theorem]{Assumption}
\theoremstyle{remark}
\newtheorem{remark}[theorem]{Remark}
\newtheorem{claim}{Claim}
\newtheorem{fact}{Fact}
\DeclareMathOperator*{\argmin}{arg\,min}
\newcommand{\R}{\mathbb{R}}
\newcommand{\E}{\mathop{\mathbb{E}}}
\newcommand{\Proj}{\mathop{\Pi}}
\newcommand{\cA}{\mathcal{A}}
\newcommand{\cD}{\mathcal{D}}
\newcommand{\cG}{\mathcal{G}}
\newcommand{\cX}{\mathcal{X}}
\newcommand{\RegVar}{\mathrm{RegVar}}
\newcommand{\Reg}{\mathrm{Reg}}
\newcommand{\RegSE}{\mathrm{RegSE}}
\newcommand{\RegVarMG}{\mathrm{RegVarMG}}
\newcommand{\Var}{\mathop{\mathrm{Var}}}
\newcommand{\Varhat}{\widehat{\mathrm{VB}}}
\newcommand{\Varalg}{\mathrm{V}_T}
\newcommand{\Aone}{\widehat{A_T(1)}}
\newcommand{\Azero}{\widehat{A_T(0)}}
\newcommand{\hinv}{h_\mathrm{inv}}
\newcommand{\cF}{\mathcal{F}}
\newcommand{\norm}[1]{\left\lVert#1\right\rVert}
\newcommand{\ClipOGDSC}{ClipOGD$^\mathrm{SC}$\,}
\title{Stronger Neyman Regret Guarantees \\for Adaptive Experimental Design}
\author{Georgy Noarov$^\dag$, Riccardo Fogliato$^\ddagger$, Martin Bertran$^\ddagger$, Aaron Roth$^{\dag\ddagger}$}
\begin{document}

\maketitle

\begin{abstract}
We study the design of adaptive, sequential experiments for unbiased average treatment effect (ATE) estimation in the design-based potential outcomes setting. Our goal is to develop adaptive designs offering \emph{sublinear Neyman regret}, meaning their efficiency must approach that of the hindsight-optimal nonadaptive design.
Recent work \citep{dai2023clip} introduced ClipOGD, the first method achieving $\widetilde{O}(\sqrt{T})$ expected Neyman regret under mild conditions. 
In this work, we propose adaptive designs with substantially stronger Neyman regret guarantees. In particular, we modify ClipOGD to obtain anytime $\widetilde{O}(\log T)$ Neyman regret under natural boundedness assumptions. 
Further, in the setting where experimental units have pre-treatment covariates, we introduce and study a class of contextual ``multigroup'' Neyman regret guarantees: Given any set of possibly overlapping groups based on the covariates, the adaptive design outperforms each group's best non-adaptive designs. In particular, we develop a contextual adaptive design with $\widetilde{O}(\sqrt{T})$ anytime multigroup Neyman regret. We empirically validate the proposed designs through an array of experiments. 
\end{abstract}

\section{Introduction}

Randomized control trials (RCTs) play a central role in a variety of settings where causal effects need to be accurately measured, spanning healthcare and epidemiology, policymaking, the social sciences, econometrics, e-commerce, and beyond. In the classic potential outcomes framework \citep{neyman1923applications,rubin1974estimating}, a central estimand is the average treatment effect (ATE) -- the average individual causal effect across experimental units. To obtain precise estimates of the ATE, we generally seek estimators that are unbiased and have low variance.

\blfootnote{Georgy Noarov conducted part of this work as an intern at Amazon Web Services.}\blfootnote{$^\dagger$Department of Computer Science, University of Pennsylvania. $^\ddagger$Amazon Web Services.}

In many cases, RCTs are run sequentially: Experimental units arrive one by one, and each unit is assigned to treatment or control adaptively, based on previous outcomes or auxiliary information. The data-driven nature and flexibility of these experiments suggest that such adaptive trials can achieve substantial efficiency gains over standard fixed designs, as shown in domains ranging from political science \citep{offer2021adaptive,blackwell2022batch} to medicine \citep{chow2008adaptive,villar2015multi,fda2019adaptive}.
However, so far adaptive experiments have received limited attention \citep{hu2006theory} and have been rarely used in practice due to concerns that adaptivity could invalidate standard statistical guarantees \citep{van2008construction}. Indeed, classic solutions for improving estimator efficiency in the batch setting, such as Neyman allocation \citep{neyman1992two}, can be nontrivial to extend to the sequential setting.

Recently, a growing body of work \citep{hahn2011adaptive, kato2020efficient, li2023double, dai2023clip, cook2023semiparametric} has made progress on this front by introducing multi-stage adaptive designs that estimate the ATE via inverse-probability weighting (IPW)-type estimators with adaptively adjusted propensity scores. 
\footnote{
In parallel, studies that fall into the multi-armed bandits literature have developed adaptive designs for finding \emph{reward-maximizing} treatments (arms) or policies, which is a distinct, and conflicting, objective than estimation efficiency \citep{zhang2020inference,zhang2021statistical,hadad2021confidence,xu2016subgroup,xu2024fallacy}.
}
Our work contributes to this literature by developing novel adaptive sequential designs for IPW-based ATE estimation with efficiency guarantees. 
Crucially, our methods --unlike most existing work-- are developed within the finite-population setting \citep{wager2024causal}, where the ATE is defined as a deterministic function of the observed population rather than a superpopulation parameter. This distinction ensures robustness to treatment effect heterogeneity and temporal data drift, challenges that can undermine conventional superpopulation-based designs. 

\paragraph{Our contributions}

We focus on the design of adaptive RCTs to estimate the ATE as efficiently as the best-in-hindsight IPW design from some benchmark class, up to error terms. Specifically, we aim to minimize the \emph{Neyman regret} \citep{kato2020efficient,dai2023clip} -- a measure comparing the variance of our adaptive estimator to that of the variance-minimizing nonadaptive Bernoulli trial where units are treated with some fixed probability.
 Currently, to our knowledge \citet{dai2023clip}'s ClipOGD method is the only adaptive design achieving sublinear Neyman regret in the finite-population setting. This method guarantees $\widetilde{O}(\sqrt{T})$ expected regret for any $T$-unit trial under moment-bounded potential outcomes.
However, two important questions arise: 
\begin{enumerate}[I.]
    \item Can we develop designs with better regret rates? \citet{dai2023clip} conjectured that $\widetilde{O}(\sqrt{T})$ is the minimax Neyman rate.
    \item Can we develop context-aware designs that use pre-treatment covariates to improve efficiency?
\end{enumerate}
In this work, we answer both of these questions affirmatively as follows.

\paragraph{Contribution I: Exponentially improved noncontextual Neyman regret bound.} We show that, under a natural strengthening of \citet{dai2023clip}'s assumptions on the outcomes,
we can modify ClipOGD to attain an anytime-valid Neyman regret bound of $\widetilde{O}(\log T)$.\footnote{In fact, a lower bounding construction in the very recent work of~\citet{li2024optimal} shows that the best possible Neyman regret is $\Omega(1)$ even in the more relaxed superpopulation setting --- and so our method achieves a \emph{best-of-both-worlds} guarantee, up to logarithmic factors.} To achieve this speedup, we leverage the strong convexity of the Neyman objective under our stricter lower-bounding assumption on the outcomes, which as we show leads to near-logarithmic regret via techniques introduced by~\citep{hazan2007logarithmic}. Moreover, it can be shown that even under the weaker outcome lower bound assumption of \citet{dai2023clip}, our adaptive design can be tweaked to have the asymptotic efficiency of $ \smash{\left(1 + \epsilon \right) V^* + \widetilde{O} \left(\frac{\log T}{T} \right) }$ for any $\epsilon > 0$, where $V^*$ denotes the optimal nonadaptive design variance; the interpretation is that any $(1+\epsilon)$-multiplicative approximation to the optimal variance can be attained at this fast rate. We validate the greater efficiency of our proposed design against that of ClipOGD through a suite of experiments on synthetic and real-world data.

\paragraph{Contribution II: Adaptive designs with contextual Neyman regret guarantees.} 
We next develop a novel adaptive design MGATE (Multi-Group ATE) that leverages pre-treatment covariates to improve efficiency relative to the non-contextual setting. 
In a nutshell, given an arbitrary predefined finite collection $\cG \subseteq 2^\cX$ of contextual groups defined by the covariates (e.g., demographics), we propose a no \mbox{$\cG$-multigroup-Neyman-regret} adaptive design that obtains sublinear regret simultaneously on all subsequences of experimental units corresponding to the groups in $\cG$. Critically, we also allow for overlapping groups, i.e., units can simultaneously belong to multiple groups. 
A key challenge here is to balance the treatment probabilities in a way that balances the efficiency of the ATEs estimates across groups. Our proposed design leverages a variation of the ``sleeping experts'' approach \citep{blum2020advancing,acharyaoracle} used in the online learning literature \citep{lee2022online,deng2024group}, that deals with the limited feedback and the fact that the observed objective values do not live in an a-priori bounded range. 
The method achieves $\widetilde{O}(\sqrt{T})$ multigroup Neyman regret. 
We also empirically validate its performance. 

Our multigroup guarantees can be interpreted through the lens of group ATE (GATE) estimation \citep{chernozhukov2017fisher,semenova2021debiased,zimmert2019nonparametric}. GATE occupies a middle ground between ATE, which measures the average effect over the entire sequence, and CATE (conditional ATE), which measures the ATE conditionally on each covariate vector. 
 Existing  works on GATE, however, are mainly focused on learning data-driven disjoint groups to improve overall ATE estimation. In contrast, our objective is to simultaneously ensure efficient GATE inference for any family of arbitrarily overlapping groups. This is related in motivation (though distinct in technique) to the recent work of \cite{kern2024multi} who use ``multiaccuracy'' to make CATE inference robust to certain kinds of distribution shift.

We expect that such multigroup efficiency guarantees can be broadly useful, and hope future work will study multigroup adaptive designs beyond the sequential finite-population setting that we focus on in this paper.

\paragraph{Organization}

In \cref{sec:prelims}, we introduce our general setting and objectives. In \cref{sec:noncontextual}, we focus on the (vanilla) non-contextual setting, and present and analyze our adaptive design \ClipOGDSC, which achieves near-logarithmic Neyman regret. We prove the main regret bound in \Cref{thm:regret} and then demonstrate further guarantees on the adaptive design.

In \cref{sec:multigroup}, we introduce the notion of multigroup Neyman regret, and present our multigroup adaptive design MGATE (\cref{alg:AMGATE}), which achieves $\widetilde{O}(\sqrt{T})$ multigroup Neyman regret as shown in \Cref{thm:multigroup}. Furthermore, in \cref{app:multigroup} we provide a general multigroup design (\cref{alg:multigroup_general}) that significantly generalizes MGATE. 
In \cref{sec:experiments}, we compare the empirical performance of our adaptive designs to the \citet{dai2023clip} ClipOGD design on an array of real-world and synthetic sequential experimental design tasks.

\section{Preliminaries} \label{sec:prelims}

\paragraph{Setting} We work in the design-based, sequential variant of the potential outcomes setting \citep{neyman1923applications, rubin1974estimating, imbens2015causal}. A finite number of experimental units in the population arrive one by one at rounds $t\in \mathbb{N}_+$. 
Each unit has two associated fixed potential outcomes, only one of which can be observed: treatment outcome $y_t(1) \in \R$ and control outcome $y_t(0) \in \R$. 

In the basic setting, the observed outcome is the only information the experimenter receives about the units. A richer setting is one where before choosing treatment or control for unit $t$, the Experimenter is given access to \emph{pre-treatment covariate} $x_t \in \cX$, where $\cX$ is a feature space of arbitrary nature (e.g.\ $\cX$ may be a finite-dimensional vector space).
In this paper, we will study both settings: the noncontextual setting in \cref{sec:noncontextual} and the contextual one in \cref{sec:multigroup}.

\paragraph{Adaptive design} In a randomized controlled trial (RCT), the experimenter (randomly) decides whether to apply treatment or control to each unit, and observes the corresponding outcome but not the counterfactual. These randomized decisions for all units constitute the experimental design. We study adaptive experimental designs, described as follows.

\begin{tcolorbox}
\begin{center}
\uline{\textbf{ $T$-round Adaptive Design Protocol}}
\end{center}

Potential outcomes $\{(y_t(1), y_t(0))\}_{t \in [T]}$ are generated upfront (but not shown to Experimenter).

Then, sequentially for each unit $t = 1 \ldots T$: 
\begin{enumerate}
    \item (\emph{Contextual} setting only) Experimenter observes pre-treatment covariate $x_t \in \cX$.
    \item Experimenter sets treatment probability $p_t$.
    \item Experimenter flips bias-$p_t$ coin to obtain realized treatment decision: $Z_t \sim \mathrm{Bernoulli}(p_t)$.
    \item Experimenter observes outcome $Y_t = y_t(Z_t)$.
\end{enumerate}
\end{tcolorbox}

By contrast, the standard nonadaptive (Bernoulli) trial fixes upfront the same treatment probability $p_t = p$ for all units $t$, and uses it throughout the experiment without any adjustments. 

Our estimand of interest is the average treatment effect (\emph{ATE}), which corresponds to the difference between the average outcomes of treatment and control units in the population. We provide the formal definition below. 
\begin{definition}[ATE]
The \emph{average treatment effect} for potential outcomes $\smash{\{(y_t(1), y_t(0))\}_{t = 1}^{T}}$ is: 
\[
    \tau_T = \frac{1}{T} \sum_{t=1}^T y_t(1) - y_t(0).
\] 
\end{definition}
A classical estimator of the ATE is the adaptive IPW estimator \citep{horvitz1952generalization}, which employs inverse probability weighting. We define it next. 
 
\begin{definition}[Adaptive IPW Estimator]
The \emph{adaptive IPW estimator} of the ATE $\tau_T$ is:
\[
\hat{\tau}_T = \frac{1}{T} \sum_t Y_t \left( \frac{Z_t}{p_t} - \frac{1-Z_t}{1-p_t} \right). 
\]
\end{definition}
This estimator is unbiased, meaning that 
for any outcomes $\{(y_t(0), y_t(1)\}_{t = 1}^T$ and any adaptive design $(p_t)_{t=1}^T$ with all $p_t \in (0, 1)$, we have 
$\E[\hat{\tau}_T] = \tau_T$. 
Thus, no matter what adaptive design Experimenter employs, the induced adaptive IPW estimator will always be unbiased. However, the estimator's variance will vary based on the design, making some designs more efficient than others.

\paragraph{Objective: minimize variance of ATE estimator}
Our main goal will be to construct adaptive designs that asymptotically approach the variance of the best-in-hindsight experimental design in some benchmark class. 
A basic class of designs is that of nonadaptive designs, parameterized by the choice of fixed propensity $p \in (0, 1)$.
Formally, we measure the \emph{Neyman regret} \citep{kato2020efficient, dai2023clip} of any proposed adaptive design as the (time-rescaled) difference between its IPW estimator variance and the variance of same estimator under the most efficient nonadaptive design.

To define Neyman regret, note (see Proposition~2.2 of \citet{dai2023clip}) that $\Var[\hat{\tau}_T] = \sum_{t=1}^T\E\left[ f_t(p_t) \right]/T^2 - k_\mathrm{ATE}$, where $f_t(p) := y_t(1)^2/p + y_t(0)^2/(1-p)$ is the variance of the propensity-$p$ IPW estimator at unit $t$, and $k_\mathrm{ATE} = \sum_{t=1}^T (y_t(1)-y_t(0))^2/T^2$ is a design-independent term. We are now ready to provide the formal definition.

\begin{definition}[Neyman Regret \citep{kato2020efficient, dai2023clip}] \label{def:regret}
    The Neyman regret of adaptive design $(p_t)_{t=1}^T$ on a potential outcomes sequence $\{(y_t(1), y_t(0))\}_{t=1}^T$ is:\footnote{``Var'' stands for variance, as Neyman regret captures the rescaled estimator variance associated with the design.}
    \begin{align*}\label{eq:neyman_regret}
    \RegVar_T = \max_{p_T^* \in (0, 1)} \sum_{t=1}^T f_t(p_t) - f_t(p_T^*).
    \end{align*}
\end{definition}
Thus the variance of the IPW estimator for a design $(p_t)_{t=1}^T$ differs from that of the best nonadaptive design by exactly $\RegVar_T/T^2$, justifying the Neyman regret definition.

Our goal will be to develop adaptive designs with sublinear expected Neyman regret: $\E \left[\RegVar_T \right] = o(T)$, or equivalently with vanishing average expected Neyman regret: $\E\left[\RegVar_T/T  \right]= o(1)$. We call any design that satisfies this a no-regret design.

\section{Efficient Non-Contextual ATE Estimation} \label{sec:noncontextual}
We now present our first contribution: An adaptive design that achieves $\widetilde{O}(\log T)$ Neyman regret under natural assumptions on the outcomes. We begin by discussing the $\widetilde{O}(\sqrt{T})$-Neyman regret design ClipOGD of \citet{dai2023clip}, and then modifying it to better exploit the strongly convex structure of the Neyman objective. Next, we discuss further guarantees on our method's performance.

\subsection{Adaptive Design with Logarithmic Neyman Regret}

\paragraph{Meta-Design: ClipOGD} The first finite-population design that achieves sublinear Neyman regret, ClipOGD, was introduced by \citet{dai2023clip}. Leveraging the fact that the per-round Neyman objectives $f_t(p)$ are convex in $p$, it performs a modified version of online gradient descent (OGD) on $f_t$ to adaptively modify the treatment probabilities $p_t$. 

The complicating factor is that the gradients of $f_t$ diverge when $p$ is close to 0 or 1: standard OGD analyses typically require explicit or implicit bounds on the gradients of the objective \citep{hazan2016introduction}, so vanilla projected OGD on the entire interval $[0, 1]$ will not work without modification. ClipOGD solves this problem by clipping the OGD iterates $\{p_t\}_{t\in \mathbb{N}_+}$ to be within a nested family $\{[\delta_t, 1-\delta_t]\}_{t\in \mathbb{N}_+}$ of subintervals of $(0, 1)$, which gradually expand to cover the whole interval in the infinite time limit (i.e.,\ $\lim_{t\to\infty} \delta_t=0$). The expansion is needed to handle cases when $p^*_T$ is close to the boundary. 
In view of this, we let $\delta_t = 1 / h(t)$ for all $t \in \mathbb{N}_+$, where $h: \mathbb{N}_+ \to \R_{>0}$ is some strictly increasing function with $\lim_{t\to \infty} h(t) = \infty$. We call $\delta_t$
the \emph{clipping rate}, $h$ the clipping function, and refer to any adaptive design $(p_t)_{t \in \mathbb{N}_+}$ that satisfies $1/h(t) \leq p_t \leq 1-1/h(t)$ for all $t$ as $h$-clipped. \Cref{alg:strong} gives the pseudocode for ClipOGD. Here, $\Proj_S(x)$ denotes the projection of $x$ onto interval $S \subset (0, 1)$.

\begin{algorithm}[ht]
\begin{algorithmic}
\STATE Initialize $p_0 \gets 0.5$ and $g_0 \gets 0$
\FOR{units $t=1, 2, \ldots$}
    \STATE Set step size $\eta_t > 0$ and clipping rate $\delta_t \in (0, 0.5)$
    \STATE Set  treatment probability $p_t \gets \!\!\!\! \Proj\limits_{[\delta_t, 1-\delta_t]}(p_{t-1} - \eta_t \cdot g_{t-1})$
    \STATE Set treatment decision $Z_t \sim \mathrm{Bernoulli}(p_t)$
    \STATE Observe outcome $Y_t \gets y_t(Z_t)$
    \STATE Set gradient estimate: $g_t \gets Y_t^2 \left( -\frac{Z_t}{p_t^3} + \frac{1-Z_t}{(1-p_t)^3} \right)$
\ENDFOR
\end{algorithmic}
\caption{ClipOGD\; \citep{dai2023clip}}
\label{alg:strong}
\end{algorithm}

\paragraph{ClipOGD$^\mathrm{0}$: A $\widetilde{O}(\sqrt{T})$ regret design} In their paper, \citet{dai2023clip} analyzed and provided guarantees for a specific
instantiation of ClipOGD, where $\eta_t = \sqrt{1/T}$ and $\delta_t = 0.5 \cdot t^{-1/\alpha}$ where $\alpha = \sqrt{5 \log T}$ for all $t=1, \dots, T$. For clarity, we call this design ClipOGD$^0$.
Their main result proves that ClipOGD$^0$ has $\widetilde{O}(\sqrt{T})$ Neyman regret under a moment assumption on the outcomes: $\smash{0 < c \leq (\frac{1}{T} \sum_{t=1}^{T} y_i(t)^2)^{1/2}}$ and 
$\smash{(\frac{1}{T} \sum_{t=1}^{T} y_i(t)^4)^{1/4} \leq C}$ for $i \in \{0, 1\}$ and some $c \leq C$. However, the learning rate of ClipOGD$^0$ has several drawbacks. First, it is too conservative, precluding improvement in Neyman regret beyond $\widetilde{O}(\sqrt{T})$. Second, it is horizon-dependent, making it necessary to know (or commit to) $T$ upfront. Finally, it is constant rather than decreasing, so the design probabilities will jump around (rather than gradually converge) during any given run of ClipOGD$^0$.

\paragraph{\ClipOGDSC: Our $\widetilde{O}(\log T)$ regret design} We now present an adaptive design called \ClipOGDSC that addresses these issues: It uses the learning rate $\eta_t \sim 1/t$ that, under \Cref{ass:bounds}, (1) achieves an exponentially improved Neyman regret bound,  (2) is \emph{anytime}, i.e., does not require advance knowledge of the time horizon $T$, and (3) its propensities converge in $L_2$ to the hindsight-best propensity. 
Our Neyman regret bound relies on a stricter assumption than the one made by \citet{dai2023clip}'s, which we detail below. 
\begin{assumption}[Bounds on Potential Outcomes] \label{ass:bounds}
There exist positive constants $c, C$ such that outcomes $\{(y_t(0), y_t(1))\}_{t \geq 1}$ satisfy for all time horizons $T$: 
\begin{equation*}
    \max_{t \geq 1} \{|y_t(0)|, |y_t(1)|\} \leq C, \quad
    c \leq \min \left\{\min_{t \geq 1} \, \left( y_t(0)^2 + y_t(1)^2 \right)^{1/2}, \min_{i \in \{0, 1\}} \left( \frac{1}{T} \sum_{t=1}^T y(i)^2\right)^{1/2} \right\}.
\end{equation*}
\end{assumption}
Next,
let $\hinv$ be the inverse function of $h$, defined via the identity $\hinv \circ h = h \circ \hinv = \mathrm{Id}$. Our main result is the following Neyman regret bound in terms of $T$, $h$, and $\hinv$. 

\begin{theorem}[Stronger Neyman Regret Bound] \label{thm:regret}
Suppose \Cref{ass:bounds} is satisfied with $C$, $c$ the corresponding constants. Let $h: \mathbb{N}_+ \to \R_{> 0}$ be strictly increasing. 
Let \ClipOGDSC be the adaptive design
that instantiates \Cref{alg:strong} with learning rate $\eta_t = 1/(2c^2t)$ and clipping rate $\delta_t = 1/h(t)$. 
Then, \ClipOGDSC attains the following anytime-valid 
Neyman regret bound:
{
\begin{align} \label{eq:main_bound}
\E[\RegVar_T] = O \! \left( \left(h(T)\right)^5 \! \cdot \! \log(T)  \! + \!  \left( \hinv \left(1 \!+\! C/c \right) \right)^2 \right).
\end{align}
}
Since $h$ can be chosen to grow arbitrarily slowly, we can get:
$
\E[\RegVar_T] = \widetilde{O}(\log T).
$
\end{theorem}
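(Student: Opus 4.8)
The plan is to exploit the fact that \Cref{ass:bounds}, unlike the weaker assumption of \citet{dai2023clip}, forces each per-round objective $f_t$ to be \emph{uniformly strongly convex}, and then to run the strongly-convex online gradient descent analysis of \citet{hazan2007logarithmic} with the stochastic gradient estimates $g_t$. First I would record the structural fact driving the whole argument. Since $f_t''(p) = 2y_t(1)^2/p^3 + 2y_t(0)^2/(1-p)^3$ and $p^3, (1-p)^3 \le 1$ on $(0,1)$, the per-unit lower bound $y_t(1)^2 + y_t(0)^2 \ge c^2$ gives
\begin{equation*}
f_t''(p) \ge 2\left(y_t(1)^2 + y_t(0)^2\right) \ge 2c^2 \quad \text{for all } p \in (0,1),
\end{equation*}
so every $f_t$ is $\mu$-strongly convex with $\mu := 2c^2$. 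This is exactly what makes the aggressive learning rate $\eta_t = 1/(\mu t) = 1/(2c^2 t)$ the ``right'' choice and is the source of the exponential speedup over ClipOGD$^0$.

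Next I would verify the two stochastic properties of the gradient estimate, working with the filtration $\cF_{t-1} = \sigma(Z_1, \ldots, Z_{t-1})$ under which $p_t$ is measurable. A direct computation over the two outcomes $Z_t \in \{0,1\}$ shows $\E[g_t \mid \cF_{t-1}] = -y_t(1)^2/p_t^2 + y_t(0)^2/(1-p_t)^2 = f_t'(p_t)$, i.e.\ $g_t$ is an unbiased estimate of the true gradient, and
\begin{equation*}
\E[g_t^2 \mid \cF_{t-1}] = \frac{y_t(1)^4}{p_t^5} + \frac{y_t(0)^4}{(1-p_t)^5} \le 2C^4 h(t)^5,
\end{equation*}
where the final inequality uses $|y_t(i)| \le C$ together with the clipping bound $\min\{p_t, 1-p_t\} \ge \delta_t = 1/h(t)$. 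The clipping is precisely what renders this second moment finite, and the $h(t)^5$ scaling --- not $h(t)^6$ --- comes from the extra factor of $p_t$ (resp.\ $1-p_t$) supplied by the Bernoulli weighting, which is what ultimately produces the $(h(T))^5$ term in \eqref{eq:main_bound}.

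I would then run the strongly-convex OGD telescoping. Combining the strong-convexity inequality $f_t(p_t) - f_t(p^*) \le f_t'(p_t)(p_t - p^*) - c^2 (p_t - p^*)^2$ with the projection (Pythagorean) inequality $(p_{t+1}-p^*)^2 \le (p_t - \eta_t g_t - p^*)^2$, taking conditional expectations and using unbiasedness (so $\E[g_t(p_t-p^*)] = \E[f_t'(p_t)(p_t-p^*)]$, valid because the hindsight-optimal $p^*$ is a deterministic function of the fixed outcomes), one obtains for each round
\begin{equation*}
\E[f_t(p_t) - f_t(p^*)] \le \left(\tfrac{1}{2\eta_t} - c^2\right)\E[(p_t - p^*)^2] - \tfrac{1}{2\eta_t}\E[(p_{t+1}-p^*)^2] + \tfrac{\eta_t}{2}\,\E[g_t^2].
\end{equation*}
With $\eta_t = 1/(2c^2 t)$ the coefficients become $\tfrac{1}{2\eta_t} - c^2 = c^2(t-1)$ and $\tfrac{1}{2\eta_t} = c^2 t$, so the squared-distance terms telescope and cancel, leaving only $\sum_t \tfrac{\eta_t}{2}\E[g_t^2] \le \tfrac{C^4}{2c^2}\sum_{t=1}^T h(t)^5/t \le \tfrac{C^4}{2c^2} h(T)^5 \sum_{t=1}^T 1/t = O(h(T)^5 \log T)$, where I use that $h$ is increasing.

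The hard part is that the Pythagorean step above is only legitimate when $p^*$ lies in the clipping set $[\delta_t, 1-\delta_t]$, which fails during a startup phase. To handle this I would first show that the Neyman-optimal propensity is bounded away from the boundary: writing $A_i = \sum_t y_t(i)^2$, the minimizer is $p^* = \sqrt{A_1}/(\sqrt{A_0}+\sqrt{A_1})$, and \Cref{ass:bounds} gives $A_i \in [Tc^2, TC^2]$, hence $\min\{p^*, 1-p^*\} \ge c/(c+C) = 1/(1+C/c)$. Therefore $p^* \in [\delta_t, 1-\delta_t]$ exactly once $h(t) \ge 1 + C/c$, i.e.\ for $t \ge T_0 := \hinv(1+C/c)$. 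I would run the telescoping only from $T_0$ onward (picking up a boundary term $\tfrac{1}{2\eta_{T_0}}\E[(p_{T_0}-p^*)^2]$) and bound the first $T_0$ startup rounds crudely: there both $p_t$ and $p^*$ are bounded away from $\{0,1\}$ by $c/(c+C)$, so each $f_t(p_t) - f_t(p^*)$ is $O(1)$. Collecting the two pieces yields an additive, $T$-independent overhead controlled by $(\hinv(1+C/c))^2$ (a crude bound suffices here), which proves \eqref{eq:main_bound}. Finally, since $1+C/c$ is a constant, $\hinv(1+C/c)$ is constant in $T$, and choosing $h$ to grow slowly enough that $h(T)^5 = \mathrm{polyloglog}(T)$ (e.g.\ $h(t) \sim (\log\log t)^{1/5}$) makes the first term $\widetilde{O}(\log T)$, giving the stated conclusion. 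The main obstacle throughout is the tension in the choice of $h$: the clipping must stay coarse enough to keep the gradient second moment $h(t)^5$ under control, yet must eventually expand past $p^*$; the strong convexity is what lets the aggressive step size absorb the $h(t)^5$ inflation into only a logarithmic, rather than polynomial, penalty.
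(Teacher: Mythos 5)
Your proposal is correct, and its core is identical to the paper's proof: establish $2c^2$-strong convexity of $f_t$ from \Cref{ass:bounds}, verify unbiasedness and the second-moment bound $\E[g_t^2 \mid \cF_{t-1}] \le 2C^4 h(t)^5$ using the clipping, and run the Hazan-et-al.\ strongly convex OGD telescoping with $\eta_t = 1/(2c^2 t)$ so that the coefficients $c^2(t-1)$ and $c^2 t$ cancel exactly, leaving $O(h(T)^5 \log T)$ plus a startup term. The one place you genuinely diverge is the treatment of the rounds $t < t^* = \hinv(1+C/c)$ where $p^*_T$ may lie outside $[\delta_t, 1-\delta_t]$: the paper keeps a single telescoping sum over all $t$ and absorbs the startup error into the one-step inequality via the clipped-projection lemma of \citet{dai2023clip} (Lemma C.1 there), which introduces the indicator-weighted terms $2\cdot\mathbbm{1}[t \le t^*]\,\delta_t(1/\eta_t + |g_t|/2)$; you instead split the horizon at $T_0 = t^*$, telescope only from $T_0$ onward (paying the boundary term $c^2 T_0\,\E[(p_{T_0}-p^*)^2] \le c^2\hinv(1+C/c)$), and bound each startup round crudely by $2C^2(1+C/c)$ using the observation that for $t < T_0$ the clipping interval is \emph{contained} in $[1/A, 1-1/A]$ with $A = 1+C/c$, so both $p_t$ and $p^*_T$ are bounded away from the boundary. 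Your variant is more self-contained (no external projection lemma needed) and in fact yields a startup overhead linear in $\hinv(1+C/c)$ rather than the paper's quadratic $(\hinv(1+C/c))^2$, so it is if anything slightly tighter; both fit inside the claimed bound \eqref{eq:main_bound}. The only cosmetic caveats are that $\hinv(1+C/c)$ should be read as $\min\{t : h(t) \ge 1+C/c\}$ (the paper shares this imprecision), and your illustrative choice $h(t) \sim (\log\log t)^{1/5}$ must be shifted/offset so that $h(t) > 2$ for all $t$ (i.e.\ $\delta_t < 1/2$), exactly as in the paper's experimental choice of $h$.
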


The proof is contained in \Cref{app:proof-noncontextual}. It exploits the strong convexity of the Neyman objectives $f_t$ enabled by \cref{ass:bounds} (hence the `SC' in \ClipOGDSC), by applying the techniques for analyzing strongly convex gradient descent~\citep{hazan2007logarithmic, rakhlin2011making}.

Compared to the analysis in \citet{dai2023clip}, we make explicit the dependence of the regret of ClipOGD on the clipping rate. Note that the choice of $h$ is flexible in the sense that any $h(t) = o(t^{0.2-\varepsilon})$ for any $\varepsilon > 0$ will result in a regret bound that is sublinear in $T$. 
From a practical standpoint, however, picking $h$ may be a nontrivial affair, as a slower-growing $h$ will have a faster-growing inverse mapping $\hinv$. While the $\hinv$-dependent term in the regret bound is constant in $T$, it can still be large in the constants of the problem.
Intuitively, if $C/c$ is large, the optimal propensity $p^*_T$ may be near the boundary and convergence may be slow. We hope future work will further explore the `well-conditioning' properties of Neyman regret.

\subsection{Convergence of Adaptive Treatment Probabilities}

We now investigate the trajectory of treatment probabilities $(p_t)_{t\geq 1}$ produced by ClipOGD$^\textrm{SC}$. Ideally, these propensities would converge to the optimal probabilities $(p^*_T)_{T\geq 1}$ as $T$ grows large. By tweaking the arguments used in establishing our Neyman regret bounds of \Cref{thm:regret}, we can obtain convergence in squared means (and hence in probability). The next claims formalize this result. 
In particular, we first establish a quantitative bound on the $L_2$ convergence of our propensities to the benchmark ones. (See \Cref{app:proof-noncontextual} for the derivation.)
\begin{lemma}[$L_2$-Deviation from Benchmark Design] \label{lemma:l2deviation}
The deviation of the design probabilities of \ClipOGDSC from the best nonadaptive design probabilities is $L_2$-bounded for all $T$ as:
{\small
\begin{align*}
    \E\left[\left(p_{T} - p^*_T \right)^2\right] \leq -\Theta\left(\frac{\E[\RegVar_T]}{T}\right) + O\left(\frac{ \left(h(T) \right)^2 \log T}{T}\right).
\end{align*}
}
\end{lemma}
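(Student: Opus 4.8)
The plan is to obtain the bound directly from the per-round update of \ClipOGDSC rather than from the aggregate regret analysis, so that the strong convexity granted by \Cref{ass:bounds} surfaces the \emph{negative} $\E[\RegVar_T]$ term appearing in the claim. Two preliminary observations drive the argument. First, because the potential outcomes are fixed before the experiment, the benchmark propensity $p_T^* = \argmin_{p} \sum_{t=1}^T f_t(p)$ is deterministic, and a short calculation from \Cref{ass:bounds} (using $\sum_t y_t(i)^2 \in [Tc^2, TC^2]$) places it in the interior interval $[\,c/(C+c),\, C/(C+c)\,]$. Second, each $f_t$ is $\mu$-strongly convex with $\mu = 2c^2$, since $f_t''(p) = 2y_t(1)^2/p^3 + 2y_t(0)^2/(1-p)^3 \ge 2(y_t(1)^2 + y_t(0)^2) \ge 2c^2$; crucially, the learning rate $\eta_t = 1/(2c^2 t) = 1/(\mu t)$ is matched to this modulus.

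The core step is a one-step contraction on the squared deviation $D_t := \E[(p_t - p_T^*)^2]$. Writing $\cF_{s-1}$ for the history through $Z_1, \dots, Z_{s-1}$ (so $p_s$ is $\cF_{s-1}$-measurable and $\E[g_s \mid \cF_{s-1}] = f_s'(p_s)$ by unbiasedness of the gradient estimate), I would expand $(p_{s+1} - p_T^*)^2$ through the projected update, take conditional then total expectations, and apply strong convexity in the form $f_s'(p_s)(p_s - p_T^*) \ge f_s(p_s) - f_s(p_T^*) + \tfrac{\mu}{2}(p_s - p_T^*)^2$. With $R_s := \E[f_s(p_s) - f_s(p_T^*)]$ and $G_s := \E[g_s^2]$, this yields the recursion $D_{s+1} \le (1 - \mu\eta_{s+1}) D_s - 2\eta_{s+1} R_s + \eta_{s+1}^2 G_s$, i.e. $(s+1)D_{s+1} \le s\, D_s - \tfrac{2}{\mu} R_s + \tfrac{1}{\mu^2 (s+1)} G_s$ after substituting $\eta_{s+1} = 1/(\mu(s+1))$ and multiplying by $s+1$. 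Telescoping this over $s = 1, \dots, T-1$ collapses the left side to $T D_T$ and the middle terms to $-\tfrac{2}{\mu}\sum_{s<T} R_s$; since $\sum_{s=1}^T R_s = \E[\RegVar_T]$ exactly (the deterministic $p_T^*$ is the regret maximizer), the partial sum equals $\E[\RegVar_T]$ up to the single lower-order term $R_T = O(h(T))$. Dividing by $T$ produces precisely the $-\Theta(\E[\RegVar_T]/T)$ contribution of the claim.

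It then remains to control the residual gradient-variance series $\tfrac{1}{\mu^2 T}\sum_{s<T} G_s/(s+1)$. The clipping $\delta_s = 1/h(s)$ guarantees $p_s, 1-p_s \ge 1/h(s)$, so with $|y_t(i)| \le C$ one has $G_s = \E[g_s^2] = y_s(1)^4/p_s^5 + y_s(0)^4/(1-p_s)^5 = O(\mathrm{poly}(h(s)))$; summing against the $1/(s+1)$ weights introduces the $\log T$ factor and, after dividing by $T$, produces the positive error term of the statement, controlled exactly as the analogous series in the proof of \Cref{thm:regret}. Convergence in probability then follows from this $L_2$ bound via Chebyshev's inequality.

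The step I expect to be the main obstacle is the projection/boundary bookkeeping that makes the one-step contraction valid. The inequality $(p_{s+1} - p_T^*)^2 \le (p_s - \eta_{s+1} g_s - p_T^*)^2$ relies on the projection onto $[\delta_{s+1}, 1-\delta_{s+1}]$ being non-expansive toward $p_T^*$, which holds only once the benchmark lies inside the clipping interval, i.e. once $1/h(s+1) \le c/(C+c)$, equivalently $s+1 \ge \hinv(1 + C/c)$. For the initial rounds $s < \hinv(1+C/c)$ the recursion must be patched with the crude bound $D_s \le 1$; this is exactly where the $\bigl(\hinv(1+C/c)\bigr)^2$ constant of \Cref{thm:regret} reappears, and after dividing by $T$ it is absorbed as a lower-order additive term. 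Getting this boundary handling right—together with the harmless off-by-one in the step-size indexing—is the only delicate part; the rest is the telescoping above.
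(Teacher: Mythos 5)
Your proposal is correct and follows essentially the same route as the paper's own proof: the paper likewise derives a one-step inequality for the projected update from $2c^2$-strong convexity (\Cref{claim:onestep}), telescopes it under $\eta_t = 1/(2c^2 t)$ while retaining the terminal deviation term $-c^2(T+1)\,\E[(p_{T+1}-p^*_T)^2]$ (\Cref{claim:summingterms}), and then simply rearranges and divides by $c^2(T+1)$ --- your contraction recursion is the identical inequality with the squared deviation kept on the left throughout, and your patching of the rounds before $\hinv(1+C/c)$ plays exactly the role of the paper's indicator term $\mathbbm{1}[t \leq t^*]$. The one caveat, shared equally by the paper's own derivation, is that the gradient moment bound $\E[g_t^2] = O(C^4 (h(t))^5)$ makes both arguments produce an error term of order $(h(T))^5 \log T / T$, so the exponent $2$ in the lemma statement appears to be a typo rather than something either your proof or the paper's actually establishes.
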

This implies the following $L_2$-convergence result, subject to an assumption on the Neyman regret of \ClipOGDSC which asks for it to not consistently outperform the optimal nonadaptive design. 
\begin{corollary}[$L_2$-Convergence to Benchmark Design]\label{thm:convergenceinl2}
Assume \ClipOGDSC has asymptotically nonnegative Neyman regret: $\liminf_{T \to \infty} \frac{\E[\RegVar_T]}{T} \geq 0$. Then, its propensities $(p_t)_{t\geq 1}$ will converge to the benchmark nonadaptive propensities $(p^*_T)_{T\geq 1}$ in squared means: $\E \left[(p_{T} - p^*_T)^2 \right] \to 0$ as $T\rightarrow\infty$.
\end{corollary}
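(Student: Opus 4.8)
The plan is to read the conclusion off the $L_2$-deviation bound of \Cref{lemma:l2deviation} via an elementary limit-superior argument, using the nonnegative-regret hypothesis only to control the (negated) regret term. Unpacking the $\Theta(\cdot)$ and $O(\cdot)$ in \Cref{lemma:l2deviation}, there exist constants $c_1, c_2 > 0$ such that for every $T$,
\[
\E\left[(p_T - p^*_T)^2\right] \;\leq\; -\,c_1\,\frac{\E[\RegVar_T]}{T} \;+\; c_2\,\frac{(h(T))^2 \log T}{T}.
\]
My argument then has three ingredients: (i) the positive clipping term on the right vanishes for a suitably chosen $h$; (ii) the negated regret term has nonpositive limit superior by hypothesis; and (iii) a squeeze using the nonnegativity of the left-hand side forces the limit to $0$.

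For ingredient (i), I would appeal to the remark following \Cref{thm:regret} that $h$ may be taken to grow arbitrarily slowly. Concretely, the clipping term tends to $0$ whenever $(h(T))^2 \log T = o(T)$, i.e.\ $h(T) = o(\sqrt{T/\log T})$; this is implied by any of the polynomially-slow choices $h(T) = o(T^{0.2-\varepsilon})$ that already make the Neyman regret bound \eqref{eq:main_bound} sublinear, so a single admissible $h$ serves both purposes. Hence $\limsup_{T\to\infty} c_2 (h(T))^2 \log T / T = 0$. For ingredient (ii), since $c_1 > 0$ we have $\limsup_{T\to\infty}\bigl(-c_1 \E[\RegVar_T]/T\bigr) = -c_1 \liminf_{T\to\infty} \E[\RegVar_T]/T$, and the corollary's hypothesis $\liminf_{T\to\infty}\E[\RegVar_T]/T \geq 0$ makes this $\leq 0$. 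This is precisely where the assumption earns its keep: it rules out the regime in which the adaptive design persistently and substantially outperforms the hindsight-optimal nonadaptive propensity, which is the only way the upper bound could fail to vanish.

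Combining the two estimates through subadditivity of $\limsup$ gives
\[
\limsup_{T\to\infty}\E\left[(p_T - p^*_T)^2\right] \;\leq\; -\,c_1 \liminf_{T\to\infty}\frac{\E[\RegVar_T]}{T} \;+\; 0 \;\leq\; 0,
\]
and since $\E[(p_T - p^*_T)^2] \geq 0$ for every $T$, the limit inferior and superior pinch together at $0$, yielding $\E[(p_T - p^*_T)^2] \to 0$ as claimed. The technical content here is light — it is essentially a limit manipulation — so I do not expect a genuine obstacle; the step I would be most careful about is the constant/sign bookkeeping when converting the $\Theta$ and $O$ symbols of \Cref{lemma:l2deviation} into the explicit $c_1, c_2$ above (ensuring the negated regret term is bounded above by $-c_1\E[\RegVar_T]/T$ with a legitimate positive $c_1$), together with verifying that one fixed choice of $h$ simultaneously keeps $\E[\RegVar_T]$ sublinear (so that the hypothesis is even meaningful) and drives the clipping term to zero.
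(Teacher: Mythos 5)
Your proposal is correct and matches the paper's intended argument: the paper derives this corollary directly from \Cref{lemma:l2deviation} in exactly the way you describe, with the nonnegative-regret hypothesis killing the negated regret term, the slowly growing clipping function making the $(h(T))^2\log T/T$ term vanish, and nonnegativity of the left-hand side completing the squeeze. Your extra care about the sign bookkeeping in the $\Theta(\cdot)$ term and about a single admissible $h$ serving both the regret bound and the convergence claim is exactly the right (and only) point requiring attention.
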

In the special case of sequences of potential outcomes that are (i.i.d.) samples from a superpopulation, 
the regret nonnegativity holds automatically, implying that our adaptive design will necessarily converge to the best nonadaptive design without further assumptions.

\begin{corollary}[Convergence in the Superpopulation Setting]
Suppose that the outcomes are drawn i.i.d.\ from a superpopulation: $(y_t(0), y_t(1)) \sim \cD$ for all $t \geq 1$ and any fixed distribution $\cD$. Then, \ClipOGDSC guarantees that
$\E \left[(p_{T} - p^*)^2 \right] \to 0$ at the rate $\widetilde{O}(\log T/T)$, and thus in particular that $p_T \to p^*$ in probability.
\end{corollary}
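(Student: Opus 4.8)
The plan is to deduce this corollary from \Cref{thm:convergenceinl2} together with the quantitative $L_2$ rate of \Cref{lemma:l2deviation}, after supplying two extra ingredients that are specific to the superpopulation regime: (i) the nonnegative-regret hypothesis of \Cref{thm:convergenceinl2} holds \emph{automatically}, and (ii) the random hindsight-optimal propensity $p^*_T$ (to which \Cref{lemma:l2deviation} controls convergence) is itself close to the fixed population optimum $p^*$. Throughout, I would write $A(i) = \E_{(y(0),y(1))\sim\cD}[y(i)^2]$ for the population second moments and $\bar f(p) = A(1)/p + A(0)/(1-p)$ for the per-round expected Neyman objective, whose minimizer is $p^* = \sqrt{A(1)}/(\sqrt{A(1)}+\sqrt{A(0)})$.

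First I would establish nonnegativity of the expected Neyman regret. The key observation is that $p_t$ is \emph{predictable}: it is $\cF_{t-1}$-measurable and is committed before $(y_t(0),y_t(1))$ is used. Under the i.i.d.\ superpopulation draw, $(y_t(0),y_t(1))$ is independent of $\cF_{t-1}$, so $\E[f_t(p_t)\mid \cF_{t-1}] = \bar f(p_t) \geq \bar f(p^*)$, giving $\E\big[\sum_{t=1}^T f_t(p_t)\big] \geq T\,\bar f(p^*)$. On the other hand $p^*$ is deterministic, so $\E\big[\sum_{t=1}^T f_t(p^*)\big] = T\,\bar f(p^*)$, and since $\RegVar_T \geq \sum_t f_t(p_t) - \sum_t f_t(p^*)$ pointwise (the benchmark maximizes over all fixed propensities, including $p^*$), taking expectations yields $\E[\RegVar_T] \geq 0$ for every $T$. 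In particular $\liminf_T \E[\RegVar_T]/T \geq 0$, so the hypothesis of \Cref{thm:convergenceinl2} is met; moreover, plugging $\E[\RegVar_T]\geq 0$ into \Cref{lemma:l2deviation} makes the $-\Theta(\E[\RegVar_T]/T)$ term nonpositive, leaving the bound $\E[(p_T - p^*_T)^2] = \widetilde O(\log T/T)$ for any slowly growing clipping function $h$.

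Second, I would bridge the gap between the empirical optimum $p^*_T$ and the population optimum $p^*$. Writing $\widehat A_T(i) = \tfrac1T\sum_{t=1}^T y_t(i)^2$, we have $p^*_T = g(\widehat A_T(1),\widehat A_T(0))$ and $p^* = g(A(1),A(0))$ for the map $g(a,b) = \sqrt a/(\sqrt a + \sqrt b)$. Since $|y_t(i)|\leq C$, each $\widehat A_T(i)$ is an average of i.i.d.\ bounded variables with mean $A(i)$ and variance $O(1/T)$, so $\E[(\widehat A_T(i)-A(i))^2] = O(1/T)$. \Cref{ass:bounds} forces $\widehat A_T(i), A(i) \geq c^2 > 0$, so $g$ is Lipschitz on the relevant domain, and a direct Lipschitz bound gives $\E[(p^*_T - p^*)^2] = O(1/T)$. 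Combining the two pieces through the $L_2$ triangle inequality, $\E[(p_T - p^*)^2] \leq 2\,\E[(p_T - p^*_T)^2] + 2\,\E[(p^*_T - p^*)^2] = \widetilde O(\log T/T)$, and convergence in probability follows from Chebyshev's inequality.

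I expect the main obstacle to be the bridging step: unlike the earlier results, which compare $p_T$ only to the \emph{random} in-hindsight benchmark $p^*_T$, the corollary targets the \emph{fixed} population optimum $p^*$, so one must separately quantify how fast the empirical minimizer tracks the population minimizer. Making this rigorous requires verifying that the argmin map $g$ is genuinely Lipschitz over the range actually visited — which is where the two-sided control of $\widehat A_T(i)$ and $A(i)$ away from $0$ (via \Cref{ass:bounds}) and from $\infty$ (via boundedness) is essential — and then checking that this $O(1/T)$ rate is dominated by, and hence does not spoil, the $\widetilde O(\log T/T)$ rate inherited from \Cref{lemma:l2deviation}.
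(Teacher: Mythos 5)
Your proof is correct, and its skeleton matches the paper's: establish that expected Neyman regret is automatically nonnegative under i.i.d.\ sampling, then feed this into \cref{lemma:l2deviation} and \cref{thm:convergenceinl2}. Where you genuinely go beyond the paper is your bridging step. The paper's proof is one line: it observes that all per-round expected objectives equal a single function $f$ with common minimizer $p^*$, concludes $\E[\RegVar_T] \geq 0$, and stops --- implicitly treating the random hindsight optimum $p^*_T$ (which is what \cref{lemma:l2deviation} actually compares $p_T$ against) as if it were the fixed population optimum $p^*$ appearing in the corollary's statement. You make this identification rigorous: concentration of the bounded empirical second moments $\widehat{A}_T(i)$ plus Lipschitzness of $g(a,b)=\sqrt{a}/(\sqrt{a}+\sqrt{b})$ on $[c^2,C^2]^2$ (both bounds supplied by \cref{ass:bounds}) give $\E[(p^*_T-p^*)^2]=O(1/T)$, which the triangle inequality absorbs into the $\widetilde{O}(\log T/T)$ rate. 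This addresses a real gap in the paper's argument, and your handling of it is sound --- as is your predictability argument ($p_t$ is $\cF_{t-1}$-measurable while the round-$t$ outcomes are independent of $\cF_{t-1}$), which turns the paper's loose equality $\E[\RegVar_T]=\E[\sum_t(f(p_t)-f(p^*))]$ into the correct inequality $\E[\RegVar_T]\geq\E[\sum_t(f(p_t)-f(p^*))]\geq 0$. For comparison, the completion the authors likely had in mind avoids concentration entirely: apply \cref{claim:summingterms} directly with the fixed comparator $p^*$ (admissible, since the population moments inherit the bounds of \cref{ass:bounds}, so $p^*$ enters the clipped region by the argument of \cref{claim:pstar,claim:pstartime}), and note that by the same predictability argument the term $-\sum_t\E[f_t(p_t)-f_t(p^*)]$ is nonpositive; rearranging then bounds $\E[(p_{T+1}-p^*)^2]$ against $p^*$ itself at rate $\widetilde{O}(\log T/T)$, with no detour through $p^*_T$. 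Both routes are valid: yours is self-contained given the lemmas as stated, while the direct route is shorter but requires reopening the proof of \cref{lemma:l2deviation}.
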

\begin{proof}
    In the superpopulation setting, \emph{any} adaptive design will have nonnegative Neyman regret: $f_t(p) = f(p) = \E[y(1)^2]/p + \E[y(0)^2]/(1-p)$ has the same optimum $p^* = \left(1 + \E[(y_t(0))^2] / \E[(y_t(1))^2]\right)^{-1}$ for all units $t$, 
    so $\E[\RegVar_T] = \E \left[\sum_{t=1}^T \left(f(p_t) - f(p^*) \right) \right] \geq 0$.
\end{proof}

\subsection{Valid CIs for the Adaptive IPW Estimator} 

We now turn to the issue of endowing the IPW estimator $\hat{\tau}_T$ induced by our adaptive design with asymptotically valid confidence intervals (CIs). In general, the existence and construction of valid CIs for $\hat{\tau}_T$ delicately depends on the choice of the design. 
However, we will now see that a construction of \citet{dai2023clip} lends conservative CIs to all $h$-clipped adaptive designs with vanishing regret.

To formalize this result, we make a standard assumption: that the outcome sequences are not perfectly anti-correlated. To state it, define ``empirical second raw moments'' of the two outcome populations as:
$S_T(i)^2 := \frac{1}{T} \sum_{t=1}^T (y_t(i))^2 \text{ for } i \in \{0, 1\}.$

\begin{assumption}[Correlation of Outcome Populations \citep{dai2023clip}] \label{ass:correlation}
For a constant $c_\rho > 0$ and all $T \geq 1$, the running 
correlation 
$\rho_T$ of the sequences $\{(y_t(0), y_t(1))\}_{t \geq 1}$ satisfies:
\[\rho_T \geq -1 + c_\rho, \text{ where } \rho_T := \frac{\frac{1}{T} \sum_{t=1}^T y_t(1) y_t(0)}{S_T(1) S_T(0)}.\]
\end{assumption}

\begin{theorem}[CIs for Clipped Adaptive Designs] \label{thm:confidence}
    Suppose the potential outcomes satisfy \Cref{ass:bounds} and \Cref{ass:correlation}.
    Consider any $h$-clipped adaptive design $(p_t)_{t\geq 1}$  with vanishing Neyman regret: $\lim_{T \to \infty}\RegVar_T = 0$. Let $\mathrm{VB} = \frac{4}{T} S_T(1) S_T(0)$ be a conservative upper bound on the hindsight-best nonadaptive variance. 
    Then, letting $(Z_t)_{t\geq 1}$ be the treatment decisions, the estimator of \citet{dai2023clip} given by: 
    \begin{equation*}
        \Varhat = \frac{4}{T} \sqrt{\left( \frac{1}{T} \sum_{t=1}^T (y_t(1))^2 \frac{Z_t}{p_t} \right) \left( \frac{1}{T} \sum_{t=1}^T (y_t(0))^2 \frac{1-Z_t}{1 - p_t} \right)}
    \end{equation*}
    converges to $\mathrm{VB}$ in probability at rate $O_p \left(\sqrt{h(T)/T} \right)$. 

    Consequently, $\Varhat$ can be used to construct asymptotically valid Chebyshev-type confidence intervals for the adaptive IPW estimator $\hat{\tau}_T$ under any adaptive design satisfying the above conditions. Specifically, for any confidence level $\alpha \in (0, 1]$: 
    \[
    \liminf_{T \to \infty} \Pr\left[\tau_T \in \left[\hat{\tau}_T \pm \alpha^{-1/2} \sqrt{\Varhat} \right]\right] \geq 1 - \alpha.
    \]
\end{theorem}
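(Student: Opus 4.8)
The plan is to prove the convergence $\Varhat \to \mathrm{VB}$ first, and then bootstrap it into the coverage statement via a Chebyshev argument with a data-dependent half-width. Write $\Aone = \frac{1}{T}\sum_{t=1}^T (y_t(1))^2 \frac{Z_t}{p_t}$ and $\Azero = \frac{1}{T}\sum_{t=1}^T (y_t(0))^2 \frac{1-Z_t}{1-p_t}$, so that $\Varhat = \frac{4}{T}\sqrt{\Aone\,\Azero}$. Since $p_t$ is $\cF_{t-1}$-measurable and $\E[Z_t/p_t \mid \cF_{t-1}] = 1$, the centered summands $(y_t(1))^2(Z_t/p_t - 1)$ form a martingale difference sequence; hence $\E[\Aone] = S_T(1)^2$ and, by orthogonality of the increments, $\Var[\Aone] = \frac{1}{T^2}\sum_{t=1}^T (y_t(1))^4\,\Var(Z_t/p_t \mid \cF_{t-1})$. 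The decisive point is that $h$-clipping forces $p_t \ge 1/h(t)$, so $\Var(Z_t/p_t \mid \cF_{t-1}) = (1-p_t)/p_t \le h(t) \le h(T)$, while \Cref{ass:bounds} gives $(y_t(1))^4 \le C^4$; summing over $t$ yields $\Var[\Aone] = O(C^4 h(T)/T)$, and identically for $\Azero$. Chebyshev then gives $\Aone = S_T(1)^2 + O_p(\sqrt{h(T)/T})$ and $\Azero = S_T(0)^2 + O_p(\sqrt{h(T)/T})$. Because the lower bound in \Cref{ass:bounds} keeps $S_T(i)^2 \ge c^2 > 0$, the map $(a,b)\mapsto \sqrt{ab}$ is Lipschitz on the relevant box $[c^2, C^2]^2$, so the rate transfers: $\Varhat = \mathrm{VB}\,(1 + O_p(\sqrt{h(T)/T}))$, which is the first claim.

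Next I would pin down what $\mathrm{VB}$ conservatively bounds. Minimizing $\sum_t f_t(p) = T\,S_T(1)^2/p + T\,S_T(0)^2/(1-p)$ over $p \in (0,1)$ gives the hindsight-best value $T(S_T(1)+S_T(0))^2$; subtracting the design-independent term $k_\mathrm{ATE}$ and rewriting the cross term through $\rho_T$ shows the optimal nonadaptive variance is $V^* = \frac{2(1+\rho_T)}{T} S_T(1) S_T(0)$. Since $\rho_T \le 1$ always, $\mathrm{VB} = \frac{4}{T}S_T(1)S_T(0) \ge V^*$, so $\mathrm{VB}$ is indeed conservative, and \Cref{ass:correlation} certifies $V^* \ge \frac{2 c_\rho}{T} S_T(1) S_T(0) > 0$, bounding the inflation $\mathrm{VB}/V^* = 2/(1+\rho_T) \le 2/c_\rho$ so the interval stays informative. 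Finally, by the variance decomposition recalled before \Cref{def:regret}, the adaptive estimator's variance exceeds $V^*$ by exactly $\E[\RegVar_T]/T^2$; as $\mathrm{VB} = \Theta(1/T)$ and $S_T(1)S_T(0) \ge c^2$, vanishing Neyman regret makes this excess negligible, so $\limsup_{T\to\infty}\Var[\hat\tau_T]/\mathrm{VB} \le 1$.

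To conclude the coverage bound I would combine these pieces through Chebyshev, using $\E[\hat\tau_T] = \tau_T$. The one genuine obstacle is that the half-width $\alpha^{-1/2}\sqrt{\Varhat}$ is itself random, so Chebyshev cannot be applied verbatim. I would handle this with an event-intersection device: fix $\epsilon \in (0,1)$ and split on $\{\Varhat \ge (1-\epsilon)\mathrm{VB}\}$, whose probability tends to $1$ by the first paragraph. On this event the half-width dominates the deterministic threshold $\alpha^{-1/2}\sqrt{(1-\epsilon)\mathrm{VB}}$, so Chebyshev gives $\Pr[\,|\hat\tau_T - \tau_T| \ge \alpha^{-1/2}\sqrt{\Varhat}\,] \le \frac{\alpha}{1-\epsilon}\cdot\frac{\Var[\hat\tau_T]}{\mathrm{VB}} + o(1)$; taking $\limsup_T$ (using the second paragraph) and then $\epsilon \downarrow 0$ yields miscoverage at most $\alpha$, i.e.\ the claimed $\liminf_T$ coverage of at least $1-\alpha$. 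Beyond the random half-width, the other delicate point is that $\Aone$, $\Azero$, and $\mathrm{VB}$ all chase the moving targets $S_T(i)^2$ rather than fixed limits, so the concentration in the first paragraph is really a triangular-array statement, and the exact $O_p(\sqrt{h(T)/T})$ rate depends on summing the time-varying clipping bound $1/p_t \le h(t)$ carefully against the $1/T^2$ normalization.
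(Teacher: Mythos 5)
Your proposal is correct, and its overall blueprint coincides with the paper's: (i) show $\Varhat \to \mathrm{VB}$ in probability at rate $O_p(\sqrt{h(T)/T})$, (ii) use the no-regret hypothesis plus the identity $\Var[\hat\tau_T] - V^* = \E[\RegVar_T]/T^2$ to conclude $\limsup_T \Var[\hat\tau_T]/\mathrm{VB} \le 1$, and (iii) convert these into coverage via Chebyshev. Where you genuinely diverge is in the execution of (i) and (iii). For (i), you treat the two factors $\Aone$ and $\Azero$ separately — martingale-difference orthogonality gives $\Var[\Aone] = O(C^4 h(T)/T)$ since clipping bounds the conditional variance of $Z_t/p_t$ by $h(t)$ — and then transfer the rate through the Lipschitzness of $(a,b)\mapsto\sqrt{ab}$ on a box bounded away from zero. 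The paper instead works with the product $\Aone\Azero$ directly, citing Propositions D.1 and D.2 of \citet{dai2023clip} for its bias ($\le C^4/T$) and variance ($\le 2C^8 h(T)/T$), then applies Chebyshev and the continuous mapping theorem to the square root. Your factor-by-factor route is self-contained and sidesteps the dependence between $\Aone$ and $\Azero$ that the product-moment bounds must absorb. For (iii), you make explicit the issue the paper glosses over — that the half-width $\alpha^{-1/2}\sqrt{\Varhat}$ is random — and resolve it by intersecting with the event $\{\Varhat \ge (1-\epsilon)\mathrm{VB}\}$, applying Chebyshev against the deterministic threshold, and sending $\epsilon \downarrow 0$ after $\limsup_T$; the paper instead factorizes the estimated z-score as $\zeta' = \zeta \cdot \sqrt{\Var(\hat\tau_T)/\mathrm{VB}} \cdot \sqrt{\mathrm{VB}/\Varhat}$ and invokes Slutsky's theorem for stochastic domination. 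The two are equivalent in force, but your event-splitting argument is the more elementary and arguably the more rigorous write-up of that step. You also add a clean derivation of $V^* = \frac{2(1+\rho_T)}{T}S_T(1)S_T(0)$, which the paper only states. Two immaterial quibbles: your variance identity for $\Aone$ should carry an outer expectation around the conditional variance (since $p_t$ is itself random), and the Lipschitz transfer needs the remark that $\Aone, \Azero$ lie near the box $[c^2, C^2]^2$ only with probability tending to one (they can exceed $C^2$ on low-probability events); both are repaired by one sentence and do not affect the conclusion.
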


The proof for \Cref{thm:confidence} is outlined in \Cref{app:confidence}.

\section{Efficient Multigroup ATE Estimation} \label{sec:multigroup}

\paragraph{The contextual setting} 

\cref{sec:noncontextual} covers non-contextual adaptive designs that only observe outcomes. A contextual adaptive design, however, also observes pre-treatment covariates $x_t \in \cX$ at the start of each round, which can help predict potential outcomes $(y_t(0), y_t(1))$. We can leverage this extra information to improve treatment assignments and outcome estimation.

\paragraph{A multigroup formulation} 
We frame the contextual setting in a multigroup way. Before the experiment, we have a finite set of context-defined groups $\cG = \{G_1, G_2, \ldots\}$, each $G \subseteq \cX$, where $\mathcal{X}$ is the feature space. Any covariate vector $x_t$ can belong to none, one, or more groups. The group definition is dependent on the specifics of the task, e.g., in a medical application the features $x_t$ could represent a patient's health history. 

Our objective in a multigroup setting, informally, is to design an adaptive scheme that offers ATE estimation efficiency guarantees (such as Neyman regret guarantees) not only on average over the entire sequence of units but also on each subsequence that results from conditioning on units belonging to a group $G$, simultaneously for all groups $G \in \cG$. 

\subsection{A New Metric: Multigroup Neyman Regret}

We introduce multigroup Neyman regret as a strengthening of (vanilla) Neyman regret. Specifically, given any contextual group collection $\cG$, $\cG$-multigroup Neyman regret will be the maximum Neyman regret that an adaptive design achieves over any group $G$ in the collection. We formalize it next.

\begin{definition}[$\mathcal{G}$-Multigroup Neyman Regret] Given any group collection $\cG \subseteq 2^\cX$, the group-conditional Neyman regret of an adaptive design $\cA$ on any group $G \in \cG$ is defined as: 
\begin{equation*}
\RegVar_T(\cA; G) := \E\left[\max_{p^* \in (0, 1)} \sum_{t=1}^T \mathbbm{1}[x_t \in G] \left(f_t(p_t) - f_t(p^*) \right)\right].
\end{equation*}
The $\cG$-multigroup Neyman regret of $\cA$ is then defined as its maximum group-conditional Neyman regret over all groups $G \in \cG$: 

\[
    \RegVarMG_T(\cA; \cG) := \max_{G \in \cG} \RegVar_T(\cA;G).
\]
\end{definition}

\subsection{Achieving $\widetilde{O}(\sqrt{T})$ Multigroup Neyman Regret}

We now present in \Cref{alg:AMGATE} an adaptive design which we call MGATE (for Multi-Group ATE) and achieves the $\widetilde{O}(\sqrt{T})$ multigroup Neyman regret bound. 

\textbf{Additional Notation:} We use $\odot$ to denote elementwise vector multiplication, and let $\textbf{1}^d, \textbf{0}^d$ be $d$-dimensional all-ones and all-zeros vectors. Also note that the update of $w'_{t+1}$ takes an \emph{elementwise} maximum of the vectors, and assumes that $0/0=0$ to account for the corner case $q_t = 0$.
\begin{algorithm}[ht]
\caption{$\cA_{MGATE}$: Multigroup Adaptive Design}
\label{alg:AMGATE}
\begin{algorithmic}
\STATE Receive clipping function $h: \mathbb{N}_+ \to \R_{>0}$
\STATE Receive number of groups $d = |\cG|$ 
\STATE Set group counts $n_0 \gets \textbf{0}^d$
\STATE Initialize $p_1 \gets 0.5 \cdot \textbf{1}^d$ \texttt{// At round $t$, $p_t = (p_{t, G})_{G \in \cG}$ will contain group propensities}
\STATE Initialize $w'_1 \gets \textbf{1}^d, L_0 \gets \textbf{0}^d, q_0 \gets 0$ \texttt{// Parameters used to update group weights}
\FOR{$t=1, 2, \ldots$}
    \STATE Receive covariate vector $x_t \in \cX$, determine the set of active groups $\cG_t = \{G: x_t \in G, G \in \cG\}$ 
    \STATE Cast $\cG_t$ as indicator vector $a_t \in \{0, 1\}^d$ ($a_{t, G}=1 \iff G \in \cG_t$). Set group counts: $n_{t} \!\gets\! n_{t-1} + a_t$
    \STATE Normalize group weights: $w_{t, \mathrm{eff}} \gets \frac{a_{t} \odot w'_{t}}{\langle a_t, w'_t \rangle}$ \texttt{// Set inactive group weights to $0$}
    \STATE Set effective treatment probability: $p_{t, \mathrm{eff}} \gets \langle w_{t, \mathrm{eff}}, p_t\rangle$ \texttt{// Aggregate group propensities}
    \STATE Set treatment decision: $Z_{t} \sim \mathrm{Bernoulli}(p_{t, \mathrm{eff}})$
    \STATE Receive realized outcome: $Y_t \gets y_t(Z_{t})$
    \FOR{active groups $G \in \cG_t$}
        \STATE \texttt{/* Update group propensities using group-specific \ClipOGDSC-type update */}
        \STATE Set estimated Neyman gradient as: \newline
        $\widetilde{g}_{t, G} \gets Y_t^2 \left( \frac{Z_{t}}{p_{t, \mathrm{eff}}} + \frac{1-Z_{t}}{1-p_{t, \mathrm{eff}}} \right) \left( - \frac{Z_{t}}{p_{t, G}^2} + \frac{1-Z_{t}}{(1-p_{t, G})^2} \right)$
        \STATE Update $p_{t+1, G} \gets \Proj\limits_{[\delta_{t, G}, 1-\delta_{t, G}]}(p_{t, G} - \eta_{t, G} \cdot \widetilde{g}_{t, G})$, where $\eta_{t, G} \gets \frac{1}{2 c^2 \cdot n_{t, G}}$ and $\: \delta_{t, G} \gets \frac{1}{h(n_{t, G})}$
        \STATE  \texttt{/* Get losses used to update group weights */}
        \STATE Set estimated Neyman loss as: \newline
        $\widetilde{\ell}_{t, G} \gets Y_t^2 \left( \frac{Z_{t}}{p_{t, \mathrm{eff}}} + \frac{1-Z_{t}}{1-p_{t, \mathrm{eff}}} \right) \left( \frac{Z_{t}}{p_{t, G}} + \frac{1-Z_{t}}{1-p_{t, G}} \right)$
    \ENDFOR
    \FOR{inactive groups $G \not\in \cG_t$}
        \STATE Set $p_{t+1, G} \gets p_{t, G}$ and $\widetilde{\ell}_{t,G} \gets 0$ \texttt{// Inactive groups are not updated}
    \ENDFOR
    \STATE \texttt{/* Update group weights: Higher cumulative group losses $\to$ larger weights */}
    \STATE Set surrogate loss: $\ell_{t} \gets a_t \odot \left(\widetilde{\ell}_{t} - \langle \widetilde{\ell}_t, w_{t, \mathrm{eff}} \rangle\right)$
    \STATE Set $L_t \gets L_{t-1} + \ell_t$ and $q_t \gets q_{t-1} + \norm{\ell_{t}}^2_2$
    \STATE Update group weights: $w'_{t+1} \gets \max\limits_\mathrm{per-coordinate} \left\{ \textbf{0}^d , - \frac{1}{\sqrt{q_{t}}}L_{t} \right\} $
\ENDFOR
\end{algorithmic}
\end{algorithm}

Given a collection $\cG$ of $d$ groups, in each round MGATE reads off the currently active groups $\cG_t \subseteq \cG$, i.e., those groups that contain $x_t$ ($G \ni x_t$), and then proceeds to determine the new treatment probability by aggregating the `best-guess' probabilities for all active groups $G \in \cG_t$ determined based on the past performance of those groups.
To do so, MGATE maintains group weights $w'_{t, G}$ and group-specific propensities $p_{t, G}$. It comes up with a single effective treatment probability: $p_{t, \mathrm{eff}} \sim \sum_{G \in \cG_t} w'_{t, G} p_{t, G}$ in each round by reweighing the group specific propensities of the active groups. This effective treatment probability should simultaneously satisfy the interests of all active groups. The treatment decision $Z_t$ is then generated according to $p_{t, \mathrm{eff}}$. After the outcome is revealed, MGATE updates all group weights, as well as the propensities of groups that were active. 

We can show that MGATE achieves the following multigroup Neyman regret guarantee. We note that MGATE is anytime valid, meaning that just like our noncontextual design \ClipOGDSC, it does not require advance knowledge of the time horizon $T$. 
\begin{theorem}[Guarantees for \cref{alg:AMGATE}] \label{thm:multigroup}
    Fix any context space $\cX$ and finite group family $\cG \subseteq 2^\cX$. Suppose\footnote{By replacing the \ClipOGDSC propensity updates in MGATE with ClipOGD$^0$-style updates, we can straightforwardly obtain a multigroup design which only relies on the assumptions of \citet{dai2023clip} while keeping $\widetilde{O}(\sqrt{T})$ multigroup Neyman regret. This follows from the generality of our multigroup meta-design presented in \cref{app:multigroup}, which can use a wide variety ``ClipOGD-style'' updates while still obtaining $\widetilde{O}(\sqrt{T})$ multigroup regret.}  \cref{ass:bounds} holds with lower bound constant $c > 0$. Then, for any clipping function $h$, the expected multigroup regret of \cref{alg:AMGATE} will be bounded as:
    \[
    \RegVarMG_T(\cA; \cG)  = O \left( \sqrt{|\cG|} \cdot (h(T))^5 \cdot \sqrt{T} \right).
    \]
\end{theorem}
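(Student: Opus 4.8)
The plan is to fix an arbitrary group $G \in \cG$, bound its group-conditional Neyman regret $\RegVar_T(\cA; G)$, and take the maximum over the $|\cG| = d$ groups at the end. Treating the covariate sequence as fixed (so that the active-round set $T_G := \{t \le T : x_t \in G\}$ and the in-hindsight-optimal fixed propensity $p^*_G := \argmin_{p}\sum_{t \in T_G} f_t(p)$ are deterministic, and the $\max$ over $p^*$ pulls out of the expectation), I would start from the two-level decomposition obtained by inserting the group-specific propensity $p_{t,G}$:
\[
\sum_{t \in T_G}\!\bigl(f_t(p_{t,\mathrm{eff}}) - f_t(p^*_G)\bigr) = \underbrace{\sum_{t \in T_G}\!\bigl(f_t(p_{t,\mathrm{eff}}) - f_t(p_{t,G})\bigr)}_{\text{aggregation regret}} + \underbrace{\sum_{t \in T_G}\!\bigl(f_t(p_{t,G}) - f_t(p^*_G)\bigr)}_{\text{per-group regret}}.
\]
Throughout I would use that $p_{t,\mathrm{eff}} = \langle w_{t,\mathrm{eff}}, p_t\rangle$ is a convex combination of clipped group propensities, so that $p_{t,\mathrm{eff}}, 1-p_{t,\mathrm{eff}} \ge 1/h(t)$, which controls every inverse-propensity factor below.

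For the per-group regret I would argue that $(p_{t,G})_{t \in T_G}$ is exactly a \ClipOGDSC run on the subsequence $T_G$, re-indexed by the local counter $n_{t,G}$ (the learning rate $1/(2c^2 n_{t,G})$ and clip $1/h(n_{t,G})$ match \Cref{alg:strong}). A direct conditional computation gives $\E[\widetilde{g}_{t,G}\mid \cF_{t-1}] = f_t'(p_{t,G})$ (unbiased gradient) and $\E[\widetilde{g}_{t,G}^2 \mid \cF_{t-1}] \le 2C^4 (h(T))^5$ (using $p_{t,\mathrm{eff}}, p_{t,G} \ge 1/h(t)$ and $|y_t(i)| \le C$), while \Cref{ass:bounds} makes each $f_t$ be $2c^2$-strongly convex on $(0,1)$. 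Hence the strongly convex analysis underlying \Cref{thm:regret} applies verbatim with horizon $n_{T,G}\le T$ and squared-gradient bound $O((h(T))^5)$, so the expected per-group regret is $\widetilde{O}((h(T))^5)$ --- only polylogarithmic in $T$. The one point needing care is that $\widetilde{g}_{t,G}$ is reweighted by the global $p_{t,\mathrm{eff}}$ rather than $p_{t,G}$, but this only inflates the gradient's second moment by a bounded factor and leaves the analysis intact.

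The aggregation regret is where the $\sqrt{T}$ enters. First, convexity of $f_t$ and $p_{t,\mathrm{eff}} = \sum_{G'\in\cG_t} w_{t,\mathrm{eff},G'}\,p_{t,G'}$ give the pointwise Jensen bound $f_t(p_{t,\mathrm{eff}}) \le \sum_{G'\in\cG_t} w_{t,\mathrm{eff},G'} f_t(p_{t,G'})$, and combined with the unbiasedness $\E[\widetilde{\ell}_{t,G}\mid\cF_{t-1}] = f_t(p_{t,G})$ this yields, for active $G$, $f_t(p_{t,\mathrm{eff}}) - f_t(p_{t,G}) \le \E[-\ell_{t,G}\mid\cF_{t-1}]$, where $\ell_{t,G} = \widetilde{\ell}_{t,G} - \langle\widetilde{\ell}_t, w_{t,\mathrm{eff}}\rangle$ is the surrogate loss (and $\ell_{t,G}=0$ on inactive rounds). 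Summing and taking expectations reduces the aggregation regret to $-\E[L_{T,G}]$, the negated cumulative surrogate loss of expert $G$. The core step is then to show the weight rule controls this: writing $R_t = -L_t$ with increments $r_t = -\ell_t$, the weights $w_{t,\mathrm{eff}} \propto a_t \odot [R_{t-1}]_+$ are supported on active groups and satisfy $\langle w_{t,\mathrm{eff}}, \ell_t\rangle = 0$ (since $w_{t,\mathrm{eff}}$ sums to one over active groups), which gives the Blackwell condition $\langle[R_{t-1}]_+, r_t\rangle = 0$. The squared-positive-regret potential $\Phi_t = \|[R_t]_+\|_2^2$ then telescopes, $\Phi_T \le \sum_t\|r_t\|_2^2 = q_T$, so $-L_{T,G} \le [R_{T,G}]_+ \le \sqrt{\Phi_T} \le \sqrt{q_T}$ pathwise. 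Finally $\widetilde{\ell}_{t,G}\in[0, C^2(h(t))^2]$ forces $\|\ell_t\|_2^2 \le d\,C^4 (h(t))^4$, whence $\E[\sqrt{q_T}] \le \sqrt{\E[q_T]} \le C^2\sqrt{d}\,(h(T))^2\sqrt{T}$, bounding the aggregation regret by $O(\sqrt{|\cG|}\,(h(T))^2\sqrt{T})$.

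Adding the two contributions and maximizing over $G \in \cG$ gives $\RegVarMG_T(\cA;\cG) = O\bigl(\sqrt{|\cG|}(h(T))^2\sqrt{T} + (h(T))^5\log T\bigr)$, which is subsumed by the stated $O(\sqrt{|\cG|}(h(T))^5\sqrt{T})$. I expect the main obstacle to be the aggregation step rather than the per-group one: unlike a textbook sleeping-experts instance, the surrogate losses $\widetilde{\ell}_{t,G}$ do not lie in an a-priori bounded range (they scale like $(h(t))^2$), so no fixed learning rate is available, and the point of the $q_T$-adaptive regret-matching potential is precisely to yield a self-tuning, data-dependent $\sqrt{q_T}$ guarantee. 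Carefully verifying the Blackwell identity in the presence of the active-set masking $a_t$, the $0/0=0$ convention, the $\sqrt{q_t}$ normalization (which cancels inside $w_{t,\mathrm{eff}}$), and the degenerate $q_t=0$ rounds (where one defaults to uniform play, so that $p_{t,\mathrm{eff}} \ge 1/h(t)$ is preserved) is the delicate part of the argument.
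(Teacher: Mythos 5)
Your proof is correct, and while it uses the same two-term decomposition as the paper (Jensen on $f_t(p_{t,\mathrm{eff}}) \le \sum_{G'} w_{t,\mathrm{eff},G'} f_t(p_{t,G'})$, splitting into an aggregation term and a per-group \ClipOGDSC term, with the per-group analysis handled identically via unbiasedness of $\widetilde{g}_{t,G}$ and the $O((h(T))^5)$ second-moment bound), your treatment of the aggregation term is genuinely different from the paper's. The paper bounds that term by invoking machinery as a black box: it views the weight update as an instance of scale-free SOLO FTRL \citep{orabona2018scale} on the nonnegative orthant with squared-$L_2$ regularizer, composed with the sleeping-experts-to-OLO reduction of \citet{SleepingExpertsOrabona}, and then cites the SOLO FTRL regret theorem to get $\RegSE_T \le 15 C^2\sqrt{|\cG|}\,(h(T))^2\sqrt{T}$. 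You instead prove the same bound from scratch by a Blackwell/regret-matching potential argument: the centering of the surrogate losses gives $\langle w_{t,\mathrm{eff}}, \ell_t\rangle = 0$, hence $\langle [-L_{t-1}]_+, -\ell_t\rangle = 0$ (the $\sqrt{q_{t-1}}$ normalization cancels, and the identity holds trivially in the degenerate rounds since then $[-L_{t-1,G}]_+ = 0$ on all active coordinates), so the potential $\|[-L_t]_+\|_2^2$ telescopes to give the \emph{pathwise} bound $-L_{T,G} \le \sqrt{q_T}$, which is exactly the sleeping-experts regret against expert $G$. Your route is self-contained, yields an almost-sure rather than in-expectation guarantee for the aggregation term, and has better constants ($\approx 1$ vs.\ $15$); what it gives up is modularity --- the paper's black-box route is what powers its generalized meta-design (\cref{alg:multigroup_general}, \cref{thm:multigroup_general}), which allows \emph{any} scale-free sleeping-experts algorithm and any ClipOGD-style base learner to be swapped in, whereas your potential argument is tied to this particular weight update.

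Two small caveats, neither fatal. First, the degenerate rounds you flag are those where \emph{all active} groups have zero weight $\langle a_t, w'_t\rangle = 0$ (which can occur with $q_t > 0$, not only when $q_t = 0$); as you note, defaulting to uniform play over active groups fixes both the clipping property $p_{t,\mathrm{eff}} \ge 1/h(t)$ and preserves the Blackwell identity, but be aware the paper's stated $0/0=0$ convention does not by itself resolve this --- this is a gap shared by (indeed inherited from) the paper's own algorithm description. Second, applying the \ClipOGDSC comparator analysis on the group subsequence implicitly requires the moment lower bounds of \cref{ass:bounds} to hold restricted to each group (so that $p^*_{T,G}$ lies in the clipped region for $n_{t,G} \ge t^*$); the paper makes the same implicit assumption, so this is again not a defect of your argument relative to the paper's.
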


\subsection{Technical Overview}
The full analysis of \cref{alg:AMGATE} is contained in \cref{app:multigroup}. It builds on several tools recently developed in the online learning literature, which are formally introduced in \cref{app:multigroup-se}, and we briefly survey them here. The central tool is the sleeping experts algorithmic framework \citep{blum2020advancing}, which has recently been shown to be able to combine the wisdom of multiple sub-learners (or experts) into a meta-algorithm with performance on par with each of the sub-learners. The key difference from typical online aggregation schemes is that each sub-learner is allowed to be inactive (asleep) on some rounds, on which it does not give advice to the meta-algorithm. At a high level, to obtain multigroup Neyman regret, we would thus like to use a sleeping experts algorithm to aggregate propensities suggested by $|\cG| = d$ copies of \ClipOGDSC that are respectively active on all groups $G \in \cG$; the aggregated design would then perform comparably to each copy of \ClipOGDSC on its group $G$. Then, since that copy of \ClipOGDSC will have no regret on group $G$, neither will the aggregated design.

\paragraph{Challenges and solutions} Past work on sleeping experts does not fully address the combination of difficulties present in our setting: (1) stochastic (realized outcome) feedback rather than full-information (both outcomes) feedback; (2) the need to perform clipping of the iterates (propensities) to explicitly restrict them from approaching the feasible set's boundary too fast; and (3) the fact that the gradient feedback magnitude grows unboundedly as $T \to \infty$, even with clipping. 

While there are a limited number of ``sleeping bandits'' algorithms in the literature (e.g., see \citet{nguyen2024near}) that address the stochastic feedback, they don't naturally extend to cover both of the latter two issues. Therefore, we design from scratch a new sleeping experts algorithm tailored to all of these challenges. It employs \emph{scale-free} updates of the group weights $w'_t$ so as to control the loss and gradient feedback magnitudes; we achieve this by deploying an instance of the seminal scale-free SOLO FTRL algorithm of \citet{orabona2018scale} and endowing it with sleeping experts regret guarantees via a recent reduction of \citet{SleepingExpertsOrabona}. To clip the effective probability magnitudes, our algorithm aggregates over the suggested per-group probabilities via convex combinations rather than via sampling from their mixture. Finally, to ensure that the per-group propensity updates remain valid under stochastic gradient feedback and despite the aggregator using a different propensity than the suggested per-group one, MGATE uses a combination of unbiased first-order ($\widetilde{g}_{t,G}$) and zeroth-order ($\widetilde{\ell}_{t, G}$) per-group feedback estimators, which depend on both $p_{t, \mathrm{eff}}$ and $p_{t,G}$.

\paragraph{A generalized meta-design} Our analysis in \cref{app:multigroup} generalizes beyond MGATE (\cref{alg:AMGATE}). Indeed, our approach more generally allows the use of any scale-free sleeping experts algorithm to update group weights, and any ClipOGD-style (see \cref{app:multigroup-first_order}) no-regret adaptive designs to update the groupwise treatment probabilities. Thus, we more generally provide a meta-design that reduces multigroup designs to a broad class of non-contextual, no-regret designs. This generalized meta-design is given as \cref{alg:multigroup_general} in \cref{app:multigroup_general}, and \cref{thm:multigroup_general} contains its regret bound, of which \cref{thm:multigroup} above is a corollary. 

\begin{figure*}[t]
    \includegraphics[width=0.99\textwidth]{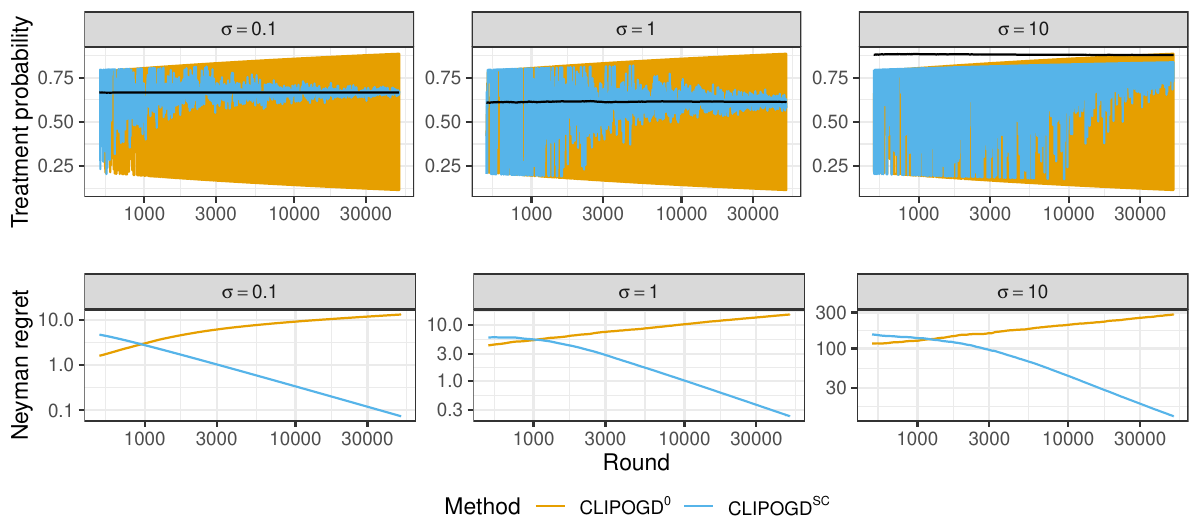}
    \caption{
         \textbf{Treatment probabilities and Neyman regret of ClipOGD on Gaussian data} for different noise ($\sigma$) levels. As $\sigma$ increases, ClipOGD$^\textrm{SC}$ converges more slowly. Its regret remains high, and the treatment probabilities do not settle within the observed time horizon ($T\approx 50{,}000$). The black line in the treatment probabilities indicates the Neyman optimal probability.
         }
        \label{fig:gaussian}
        \vspace{-1em}
\end{figure*}

\section{Experimental Results} \label{sec:experiments}

We first present the results for the non-contextual setting and then turn to the analysis of the performance for the contextual algorithm. Our code will be made available at the following link: \href{https://github.com/amazon-science/adaptive-abtester}{https://github.com/amazon-science/adaptive-abtester}. 

\subsection{Non-Contextual Experiments}

\paragraph{Tasks} We compare our method ClipOGD$^\textrm{SC}$ with ClipOGD$^\textrm{0}$ \citep{dai2023clip} on multiple tasks. Below, we show two key datasets (one synthetic and one real-world) used in our experiments, with full details in \cref{app:datasets}. The first is a synthetic dataset is generated as follows: $\smash{y_t(i) \overset{\text{iid}}{\sim} \mathcal{N}(\mu_i, \sigma^2)}$ for $t=1, \dots, T$ and $i=0,1$ with $\mu_0=1$ and $\mu_1=2$. We vary $\sigma_i\in\mathbb{R}_+$ to showcase where our method succeeds and where it struggles. 
The second dataset comes from Egypt’s largest microfinance organization \citep{groh2016macroinsurance}, covering 2,961 clients. Here, the treatment is a new insurance product, and the outcome is how much individuals invest in machinery. Following \citet{dai2023clip}, we fill missing values with Gaussian noise and resample each unit five times to increase the population size. We also present experiments 
on the ASOS Digital Experiments Dataset \citep{liu2021datasets},
 and on question-answering tasks for large language models (e.g., BigBench \citep{srivastava2022beyond}) in the Appendix. 

\paragraph{Experimental setup} In our simulation, each unit is randomly assigned to treatment or control using the treatment probability from our method or ClipOGD$^\textrm{0}$. We repeat this process 10,000 times, generating many different treatment-control paths. We then measure the Neyman regret  by averaging the regret across these probabilities obtained at each time step.

\paragraph{Hyperparameter choices} Throughout the experiments, we use the following hyperparameters. For our method, we set $\eta_t = 2/t$, and we set the clipping rate $\delta_t = 1/h(t)$, where the clipping function is $h(t) = \exp\bigl((\log (t+2))^{1/4}\bigr)$. For ClipOGD$^\textrm{0}$, we follow \citet{dai2023clip} with a constant learning rate $\eta_t = 1/\sqrt{T}$ and clipping rate $\delta_t = 0.5 \cdot t^{-1/\sqrt{5 \log T}}$.

\paragraph{Results} We analyze three synthetic data settings where we vary $\sigma$ as $\{0.1,1,10\}$. As $\sigma$ increases, the ratio $C/c$ also grows, so by \cref{eq:main_bound}, we expect slower convergence of our algorithm. We set $T=50{,}000$. \Cref{fig:gaussian} shows the Neyman regret across these settings, matching our theoretical expectations: when $\sigma=0.1$, the regret of ClipOGD$^\textrm{SC}$ drops to 0 quickly, but for larger $\sigma$, the regret remains high and converges later. The regret of ClipOGD$^\textrm{0}$ instead keeps increasing with time. Nonetheless, in line with \Cref{thm:convergenceinl2}, \Cref{fig:gaussian} also shows that our method's adaptively chosen propensities ultimately converge to the Neyman optimal probability in all three cases. By contrast, the propensities of ClipOGD$^0$ only converge when $\sigma=10$, which happens to match the initial probability of 0.5. Next, we turn to examine the results on the microfinance data. \Cref{fig:microfinance} illustrates the treatment probabilities and Neyman regret for both algorithms. On average, each design assigns probabilities near the Neyman probability. However, those of ClipOGD$^\textrm{0}$ exhibit higher variance compared to ClipOGD$^\textrm{SC}$. This translates into greater Neyman regret in later rounds, which never converges to 0. The probabilities assigned by our method, instead, converge to the Neyman probability, yielding vanishing average Neyman regret.

\begin{figure}[t]
    \includegraphics[width=0.99\columnwidth]{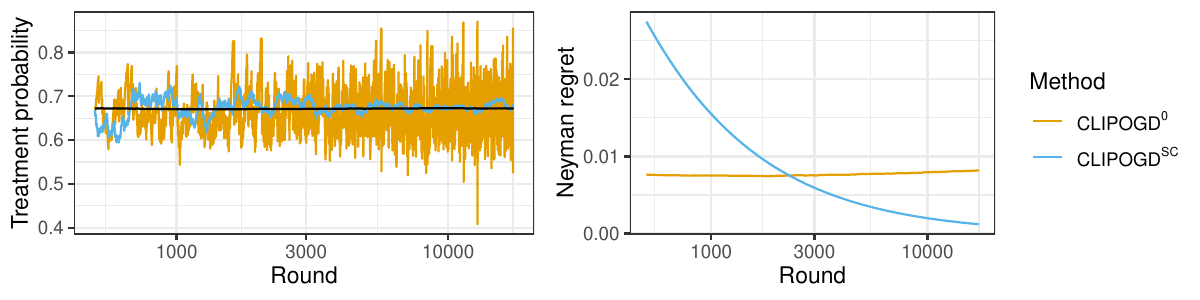}
    \caption{
        \textbf{Treatment probabilities and Neyman regret of ClipOGD on microfinance data} for $T\approx 15{,}000$ rounds.
    }
        \label{fig:microfinance}
        \vspace{-1em}
\end{figure}

\subsection{Contextual Experiments}

\begin{figure}[t]
    \includegraphics[width=.99\columnwidth]{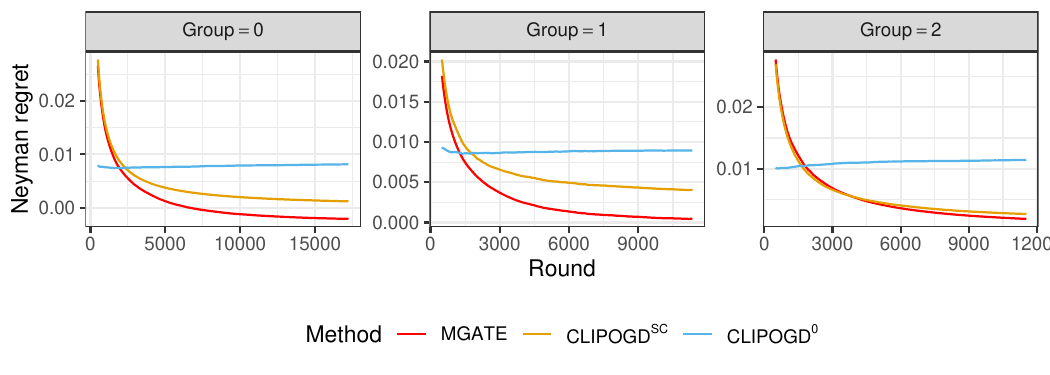}
    \caption{
         \textbf{Group-conditional Neyman regret of ClipOGD and MGATE on microfinance data}. MGATE produces the lowest \emph{$\cG$-multigroup} Neyman regret as desired, and in this case dominates the non-contextual ClipOGD variants for each group, including the noncontextual group $G_0=\mathcal{X}$.
         }
        \label{fig:microfinance_grouped}
        \vspace{-1em}
\end{figure}

Here we present our contextual results using \cref{alg:AMGATE} over the previously-described datasets. To standardize the contextual groups in each experiment, we design simple, synthetic post-hoc groups by scoring each sample as $\smash{s_t = 1 / \left(1 + \frac{y_t(0)^2}{y_t(1)^2+\epsilon}\right)}$ (the optimal Neyman sampling probability for the single sample). Our groups are computed by checking whether sample $t$ belongs to some predetermined quantile of the score function $\smash{G_0=\mathcal{X}, G_1=\mathbbm{1}[F^{-1}(s_t) \le \frac{2}{3}], G_2=\mathbbm{1}[\frac{1}{3}\le F^{-1}(s_t)]}$. We note that these groups are overlapping and informative since $G_1$ is guaranteed to have lower or equal optimal sampling probability than $G_2$.

We stress that these groups are included for illustrative purposes and rely on information that would be unobservable in a real ATE experiment, but nonetheless showcase the potential for high-quality contextual information for multi-group ATE. \Cref{fig:microfinance_grouped} shows the Neyman regret for ClipOGD$^\textrm{0}$, ClipOGD$^\textrm{SC}$, and MGATE on the microfinance dataset on each group; our MGATE method achieves the lowest group-conditional regret out of all the methods, effectively minimizing the \emph{$\cG$-multigroup} Neyman regret, and thereby validating our theoretical results. Additional contextual experiments are provided in the Appendix.

\section{Conclusion}

In this paper, we studied adaptive designs for unbiased ATE estimation with finite-population guarantees. We introduced a modification of the ClipOGD algorithm that provably yields vanishing Neyman regret, achieving an anytime-valid $\widetilde{O}(\log T)$ Neyman regret, improving upon previous \(\widetilde{O}(\sqrt{T})\) guarantees. We also extend our framework to incorporate contextual information by introducing a multigroup formulation. Our proposed multigroup adaptive design ensures \(\widetilde{O}(\sqrt{T})\) regret for each predefined group, enabling efficiency improvements for subgroup ATE estimation. Experimental results corroborate these findings. 

Overall, these results suggest that adaptive experimentation can achieve strong finite-population efficiency guarantees, offering practical advantages for a wide range of applications. Future work could explore extensions to other experimental designs and further reductions in regret rates.

\section*{Acknowledgments}
G.N. thanks Vanessa Murdock for the support throughout this project. The authors thank Lorenzo Masoero, Blake Mason, and James McQueen for useful feedback.

\bibliographystyle{plainnat}
\bibliography{sample}

\appendix

\onecolumn
\appendix

\clearpage

\section*{Organization}

The Appendix is organized as follows.

\begin{itemize}
    \item \Cref{app:proof-noncontextual} contains proofs of our noncontextual method's convergence.
    \item \Cref{app:confidence} discusses confidence interval guarantees for adaptive IPW estimators induced by our design.
    \item \Cref{app:multigroup} presents the general multigroup adaptive design framework and proves its efficiency guarantees.
    \item \Cref{app:datasets} describes additional empirical results. 
\end{itemize}

\section{Non-Contextual Setting: Proof of Theorem~\ref{thm:regret} and of Lemma~\ref{lemma:l2deviation}}\label{app:proof-noncontextual}

\subsection{Neyman Regret Analysis for \ClipOGDSC: Proof of Theorem~\ref{thm:regret}}

We establish \Cref{thm:regret} via a sequence of claims.

\begin{claim}[Optimal Probability Bounds; Lemma C.2 of \citet{dai2023clip}] \label{claim:pstar}
    The optimal fixed probability $p^*_T$ for any time horizon $T$ satisfies, under \cref{ass:bounds}, the following inequality, defining the constant $A = 1 + C/ c \geq 2$:
    \[
    \frac{1}{A} \leq p^*_T \leq 1 - \frac{1}{A}.
    \]
\end{claim}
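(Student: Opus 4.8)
The plan is to compute the minimizer $p^*_T$ in closed form and then directly substitute the bounds furnished by \cref{ass:bounds}. First I would note that by definition $p^*_T$ maximizes $\sum_{t=1}^T \left( f_t(p_t) - f_t(p) \right)$ over $p$, equivalently it minimizes the aggregate Neyman objective
\[
F_T(p) := \sum_{t=1}^T f_t(p) = \frac{M_1}{p} + \frac{M_0}{1-p}, \qquad M_i := \sum_{t=1}^T y_t(i)^2,
\]
over $p \in (0,1)$. Each $f_t$ is strictly convex on $(0,1)$, since $f_t''(p) = 2 y_t(1)^2/p^3 + 2 y_t(0)^2/(1-p)^3 > 0$ whenever $y_t(1)^2 + y_t(0)^2 > 0$, which is guaranteed by the per-unit lower bound in \cref{ass:bounds}. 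Hence $F_T$ is strictly convex and its minimizer is the unique stationary point.

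Next I would solve the first-order condition $F_T'(p) = -M_1/p^2 + M_0/(1-p)^2 = 0$. This gives $(1-p)/p = \sqrt{M_0/M_1}$, and therefore
\[
p^*_T = \frac{\sqrt{M_1}}{\sqrt{M_1} + \sqrt{M_0}} = \frac{S_T(1)}{S_T(1) + S_T(0)},
\]
where in the last step I rewrite using $S_T(i) := (\frac1T \sum_{t=1}^T y_t(i)^2)^{1/2}$, since the common factor of $T$ inside the square roots cancels. One takes the positive square root throughout, which is the relevant branch for $p \in (0,1)$.

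The final step is to bound the ratios $S_T(0)/S_T(1)$ and $S_T(1)/S_T(0)$ using \cref{ass:bounds}. The upper bound $\max_{t}\{|y_t(0)|,|y_t(1)|\} \le C$ gives $S_T(i) \le C$, while the lower bound gives $S_T(i) \ge c$, so that $S_T(0)/S_T(1) \le C/c$ and symmetrically $S_T(1)/S_T(0) \le C/c$. Writing $p^*_T = (1 + S_T(0)/S_T(1))^{-1}$ then yields $p^*_T \ge (1 + C/c)^{-1} = 1/A$, and writing $1 - p^*_T = (1 + S_T(1)/S_T(0))^{-1}$ yields $1 - p^*_T \ge 1/A$, i.e.\ $p^*_T \le 1 - 1/A$, as claimed. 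The bound $A = 1 + C/c \ge 2$ follows since $C \ge c$.

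There is no serious obstacle here; the computation is elementary once the closed form is in hand. The only points requiring care are (i) verifying that the stationary point is genuinely the minimizer via strict convexity, which relies precisely on the lower-bound half of \cref{ass:bounds} to rule out $y_t(0) = y_t(1) = 0$, and (ii) correctly tracking which of $S_T(0), S_T(1)$ appears in the numerator versus the denominator, so that the bounds $c \le S_T(i) \le C$ are applied in the right direction for each of the two inequalities.
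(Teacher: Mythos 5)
Your proof is correct, and it is essentially the standard argument: the paper itself does not prove this claim but imports it as Lemma C.2 of \citet{dai2023clip}, whose derivation is exactly your closed-form computation $p^*_T = S_T(1)/(S_T(1)+S_T(0))$ followed by the bounds $c \le S_T(i) \le C$ from \cref{ass:bounds}. The only point worth making explicit is that the interior stationary point is the minimizer precisely because $S_T(0), S_T(1) \ge c > 0$ forces both $M_0, M_1 > 0$ (so $F_T$ blows up at both endpoints of $(0,1)$); you gesture at this via strict convexity, and the ingredient is already in your step (iii), so nothing is missing.
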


\begin{claim}[How Quickly Optimal Probability Enters Admissible Region] \label{claim:pstartime}
    Under \Cref{ass:bounds}, let $A = 1 + C/ c \geq 2$. Then, for any time horizon $T$, the optimal probability $p^*_T$ will satisfy:
    \[
    t \geq t^* \implies p^*_T \in [\delta_t, 1-\delta_t], \quad \text{ where } t^* := \hinv(A).
    \]
\end{claim}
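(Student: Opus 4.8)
The plan is to chain together the two-sided bound on $p^*_T$ from \Cref{claim:pstar} with a monotonicity argument on the clipping function $h$. \Cref{claim:pstar} already places the optimal probability inside a fixed sub-interval, $\frac{1}{A} \le p^*_T \le 1 - \frac{1}{A}$, that does not depend on $t$. The admissible region at round $t$ is $[\delta_t, 1-\delta_t]$ with $\delta_t = 1/h(t)$, which \emph{widens} as $t$ grows because $h$ is increasing. Hence the entire statement reduces to showing that, for $t \ge t^*$, this widening admissible region has already grown large enough to contain the fixed interval $[\frac{1}{A}, 1-\frac{1}{A}]$.

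Concretely, I would first observe that the nesting $[\frac{1}{A}, 1-\frac{1}{A}] \subseteq [\delta_t, 1-\delta_t]$ is equivalent to the single scalar inequality $\delta_t \le \frac{1}{A}$, i.e.\ $h(t) \ge A$: the left endpoint condition $\delta_t \le \frac1A$ and the right endpoint condition $1 - \delta_t \ge 1 - \frac1A$ are the same constraint, and the resulting interval is nondegenerate precisely because $A \ge 2$ guarantees $\frac{1}{A} \le 1 - \frac{1}{A}$. The second step is to translate $h(t) \ge A$ into a threshold on $t$ using the definition $t^* = \hinv(A)$. Since $h$ is strictly increasing, so is $\hinv$, and applying $h$ to both sides of $t \ge \hinv(A)$ together with the identity $h \circ \hinv = \mathrm{Id}$ yields $h(t) \ge h(\hinv(A)) = A$, hence $\delta_t = 1/h(t) \le 1/A$.

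Combining these gives the chain
\[
\delta_t \;\le\; \frac{1}{A} \;\le\; p^*_T \;\le\; 1 - \frac{1}{A} \;\le\; 1 - \delta_t,
\]
so $p^*_T \in [\delta_t, 1-\delta_t]$ for every $t \ge t^*$, as desired; note the argument is uniform in the horizon $T$ since the interval from \Cref{claim:pstar} is $T$-independent.

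The only point requiring care — and the nearest thing to an obstacle — is the bookkeeping around $\hinv$, since $h$ is nominally defined only on $\mathbb{N}_+$ while $A$ is a real number, so $\hinv(A)$ and $t^*$ need not be integers. I would address this by interpreting $h$ as its strictly increasing (continuous) extension to $\R_{>0}$, under which $\hinv$ is a genuine two-sided inverse and the monotonicity step $t \ge \hinv(A) \Rightarrow h(t) \ge A$ is immediate; the conclusion then holds for all integer $t \ge \lceil t^* \rceil$, which is exactly what the downstream regret analysis uses. Apart from this routine extension, the proof is a direct monotonicity/containment argument with no substantive difficulty.
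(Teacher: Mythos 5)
Your proposal is correct and follows essentially the same route as the paper: invoke \Cref{claim:pstar} to place $p^*_T$ in the fixed interval $[1/A, 1-1/A]$, reduce the containment to the single inequality $\delta_t \leq 1/A$ (equivalently $h(t) \geq A$), and convert this to $t \geq \hinv(A)$ via the strict monotonicity of $h$. Your extra care about $\hinv(A)$ possibly being non-integer is a reasonable refinement the paper glosses over, but it does not change the argument.
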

\begin{proof}
    With \ref{claim:pstar} in hand, we have that as soon as $\delta_t \leq 1/A$, the optimal probability $p^*_T$ (for any $T$) is guaranteed to be in the admissible interval $[\delta_t, 1-\delta_t]$. This is equivalent to requiring $h(t) \geq A$, which by definition of $\hinv$ and by the strictly increasing nature of $h$ is equivalent to $t \geq \hinv(A)$.
\end{proof}

\begin{claim}[Gradient Raw Moment Bounds] \label{claim:moments}
    Under \Cref{ass:bounds}, for every $t \geq 1$ we have the following bounds in expectation wrt.\ the design's randomness: \[\E[|g_t|] \leq 2 C^2 h(t)^2, \quad \E[g_t^2] \leq 2 C^4 h(t)^5.\]
\end{claim}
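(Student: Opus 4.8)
The plan is to condition on the history through round $t-1$ (equivalently, on the $\sigma$-algebra generated by $Z_1, \dots, Z_{t-1}$, with respect to which the realized propensity $p_t$ is measurable) and then compute the conditional moments of $g_t$ over the single fresh Bernoulli draw $Z_t \sim \mathrm{Bernoulli}(p_t)$, before removing the conditioning by the tower rule. The key structural observation is that $g_t$ takes only two values: since $Z_t \in \{0,1\}$ forces $Y_t^2 = y_t(Z_t)^2$ and the identity $Z_t(1-Z_t)=0$ kills any cross term, one has $g_t = -\,y_t(1)^2/p_t^3$ on the event $\{Z_t = 1\}$ and $g_t = y_t(0)^2/(1-p_t)^3$ on $\{Z_t = 0\}$.

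Given this, I would directly evaluate the conditional absolute first moment,
\[
\E[|g_t| \mid p_t] = p_t \cdot \frac{y_t(1)^2}{p_t^3} + (1-p_t)\cdot \frac{y_t(0)^2}{(1-p_t)^3} = \frac{y_t(1)^2}{p_t^2} + \frac{y_t(0)^2}{(1-p_t)^2},
\]
and the conditional second moment, which carries one extra power in each denominator because squaring $1/p_t^3$ and reweighting by the Bernoulli mass $p_t$ leaves $1/p_t^5$:
\[
\E[g_t^2 \mid p_t] = \frac{y_t(1)^4}{p_t^5} + \frac{y_t(0)^4}{(1-p_t)^5}.
\]

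It then remains only to bound each term using the two ingredients of \Cref{ass:bounds} and the clipping. The boundedness $|y_t(i)| \leq C$ gives $y_t(i)^2 \leq C^2$ and $y_t(i)^4 \leq C^4$, and the fact that \ClipOGDSC is $h$-clipped guarantees $p_t,\, 1-p_t \geq \delta_t = 1/h(t)$, hence $1/p_t^k,\, 1/(1-p_t)^k \leq h(t)^k$. Substituting $k=2$ into the first display yields $\E[|g_t| \mid p_t] \leq 2 C^2 h(t)^2$, and substituting $k=5$ into the second yields $\E[g_t^2 \mid p_t] \leq 2 C^4 h(t)^5$. Since these conditional bounds are uniform over all histories, taking total expectation by the tower rule preserves them, which proves the claim.

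There is no genuine obstacle here; the only point worth flagging is the bookkeeping that squaring inflates the clipping exponent from $2$ to $5$ (a factor $1/p_t^6$ from squaring, partially offset by the Bernoulli weight $p_t$), and this is precisely the source of the $h(T)^5$ dependence that later propagates into the main regret bound \eqref{eq:main_bound}. I would also be careful to state explicitly that the averaging is only over the fresh coin $Z_t$, so that the unbiasedness identity $\E[g_t \mid p_t] = f_t'(p_t)$ and these moment bounds rest on the same conditioning and can be reused together in the subsequent regret analysis.
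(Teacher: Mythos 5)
Your proposal is correct and takes essentially the same approach as the paper: the paper's proof simply defers to Lemma C.5 of \citet{dai2023clip}, which is precisely this direct expansion of the conditional moments of the two-valued estimator $g_t$ over the fresh Bernoulli draw, followed by bounding each term via $|y_t(i)| \leq C$ and the clipping $p_t,\, 1-p_t \geq 1/h(t)$. Your write-up just carries out explicitly (and correctly, including the exponent-5 bookkeeping) the computation the paper cites.
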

\begin{proof}
    The bounds follow as shown in Lemma C.5 of \citet{dai2023clip}, by just expanding out the first and second raw absolute moment of the gradient estimator defined above; we will get $\E[|g_t|] \sim \delta^{-2}_t (y_t(1)^2 + y_t(0)^2)$, and $\E[g_t^2] \sim \delta^{-5}_t (y_t(1)^4 + y_t(0)^4)$, so the statement follows from our \Cref{ass:bounds}, or from \citet{dai2023clip}'s assumption on the boundedness of the second and fourth moments of the two populations.
\end{proof}

\begin{claim}[Strong Convexity of Objective] \label{claim:strong}
For any round $t \geq 1$, and for any $p, p' \in (0, 1)$, the objective function will satisfy:
\[
f_t(p) - f_t(p') \leq f'(p) \cdot (p - p') - c^2 (p-p')^2.
\]
\end{claim}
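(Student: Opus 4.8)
The plan is to recognize this claim as a restatement of the fact that each per-round Neyman objective $f_t$ is $2c^2$-strongly convex on $(0,1)$, and then invoke the standard second-order characterization of strong convexity. Recall that a twice-differentiable function $g$ on an interval is $\mu$-strongly convex precisely when $g''(p) \geq \mu$ at every interior point; equivalently, it satisfies the first-order inequality $g(p') \geq g(p) + g'(p)(p'-p) + \tfrac{\mu}{2}(p'-p)^2$ for all $p, p'$. Rearranging this inequality with $p$ as the base point, and using $(p'-p)^2 = (p-p')^2$, immediately yields the stated form. Thus the whole claim reduces to verifying that $f_t'' \geq 2c^2$ uniformly on $(0,1)$ (here $f'$ in the statement should be read as $f_t'$).

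First I would differentiate $f_t(p) = y_t(1)^2/p + y_t(0)^2/(1-p)$ twice, obtaining
\[
f_t''(p) = \frac{2\,y_t(1)^2}{p^3} + \frac{2\,y_t(0)^2}{(1-p)^3}.
\]
Next, since $p \in (0,1)$ forces both $p < 1$ and $1-p < 1$, we have $p^3 \leq 1$ and $(1-p)^3 \leq 1$, hence $1/p^3 \geq 1$ and $1/(1-p)^3 \geq 1$. This gives the pointwise lower bound
\[
f_t''(p) \geq 2\,y_t(1)^2 + 2\,y_t(0)^2 = 2\left( y_t(0)^2 + y_t(1)^2 \right).
\]
The final ingredient is \cref{ass:bounds}, which supplies exactly the per-unit lower bound $\min_{t \geq 1}(y_t(0)^2 + y_t(1)^2)^{1/2} \geq c$, i.e.\ $y_t(0)^2 + y_t(1)^2 \geq c^2$ for every $t$. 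Combining, $f_t''(p) \geq 2c^2$ for all $p \in (0,1)$, which establishes $2c^2$-strong convexity, and the claim follows by the rearrangement above.

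There is essentially no substantive obstacle here; the computation is routine. The only point requiring care is to invoke the correct half of \cref{ass:bounds}: the bound must come from the per-unit term $\min_{t \geq 1}(y_t(0)^2 + y_t(1)^2)^{1/2} \geq c$, not from the averaged second-moment term, since strong convexity of $f_t$ is an individual, per-round property that must hold for each fixed $t$. I would also double-check the sign bookkeeping when passing from the symmetric first-order strong-convexity inequality to the asymmetric form stated in the claim, ensuring the quadratic term enters with the correct (negative) sign and coefficient $c^2$.
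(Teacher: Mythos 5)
Your proof is correct and follows essentially the same route as the paper's: both verify the second-order condition $f_t''(p) = 2\bigl(y_t(1)^2/p^3 + y_t(0)^2/(1-p)^3\bigr) \geq 2\bigl(y_t(0)^2 + y_t(1)^2\bigr) \geq 2c^2$ using $p, 1-p \in (0,1)$ together with the per-unit lower bound from \cref{ass:bounds}, and then pass to the stated first-order inequality via the standard characterization of strong convexity. If anything, you are slightly more explicit than the paper in spelling out the rearrangement step and in flagging that the per-unit (not averaged) half of \cref{ass:bounds} is the one needed.
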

\begin{proof}
    To show this, it suffices to establish $2c^2$-strong convexity of $f_t(p) = \frac{y_t(0)^2}{p} + \frac{y_t(1)^2}{1-p}$, and we will do so by verifying that $f''(p) \geq 2c^2$ for all $p \in (0, 1)$. Indeed, note that $f''(p) = 2 \left(\frac{y_t(0)^2}{p^3} + \frac{y_t(1)^2}{(1-p)^3} \right) \geq 2(y_t(0)^2 + y_t(1)^2) \geq 2c^2$ since $p \in (0, 1)$ and by definition of $c$ in \Cref{ass:bounds}.
\end{proof}

\begin{claim} \label{claim:onestep}
    For any $t \geq 1$, any setting of $\eta_t > 0$, $\delta_t = 1/h(t)$, and for any point $p^* \in \{p^*_t\}_{t\geq 1}$, we have in expectation over the randomness of the design:
    \begin{align*}
       \E[f_t(p_t) - f_t(p^*)] \leq \left(\frac{1}{2\eta_t} - c^2\right) \E[(p_t - p^*)^2] - \frac{1}{2 \eta_t} \E[(p_{t+1}-p^*)^2] + \eta_t \cdot (C h(t))^5  \\+ 2 \cdot 1[t \leq t^*] \cdot \left(\frac{1}{\eta_t \cdot h(t)} + (C h(t))^2\right).
    \end{align*}
\end{claim}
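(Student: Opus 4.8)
The plan is to run the textbook one-step analysis for strongly convex online gradient descent, but with extra bookkeeping to absorb the rounds on which the hindsight-optimal probability $p^*$ lies outside the current clipping interval $S_t := [\delta_t, 1-\delta_t]$. Throughout I adopt the (harmless) reindexing convention, matching the claim's statement, that the update producing $p_{t+1}$ uses step size $\eta_t$ and clipping $\delta_t = 1/h(t)$, i.e.\ $p_{t+1} = \Proj_{S_t}(p_t - \eta_t g_t)$. I would first combine the strong convexity of $f_t$ (\Cref{claim:strong}) with the unbiasedness of the gradient estimator. Evaluating \Cref{claim:strong} at $p^*$ gives the pointwise bound $f_t(p_t) - f_t(p^*) \le f_t'(p_t)(p_t - p^*) - c^2 (p_t - p^*)^2$. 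A direct computation shows $g_t$ is conditionally unbiased: given $p_t$, summing over $Z_t \in \{0,1\}$ yields $\E[g_t \mid p_t] = f_t'(p_t)$. Since $p_t$ is fixed before the round-$t$ randomness and $p^*$ is deterministic (it depends only on the fixed outcome sequence), the tower rule lets me replace $f_t'(p_t)$ with $g_t$ in expectation, obtaining $\E[f_t(p_t) - f_t(p^*)] \le \E[g_t(p_t - p^*)] - c^2 \E[(p_t - p^*)^2]$.

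Next I would bound $\E[g_t(p_t - p^*)]$ using the projected update. When $p^* \in S_t$ --- which by \Cref{claim:pstartime} holds for every $t \ge t^* = \hinv(A)$ --- nonexpansiveness of the Euclidean projection onto the convex set $S_t$ gives $(p_{t+1}-p^*)^2 \le (p_t - \eta_t g_t - p^*)^2$; expanding and rearranging yields the familiar one-step inequality $g_t(p_t - p^*) \le \frac{1}{2\eta_t}\big[(p_t-p^*)^2 - (p_{t+1}-p^*)^2\big] + \frac{\eta_t}{2} g_t^2$. Substituting this into the previous display and invoking the second raw-moment bound $\E[g_t^2] \le 2 C^4 h(t)^5 \le 2 (Ch(t))^5$ of \Cref{claim:moments} (taking $C \ge 1$ without loss of generality, since the outcome range can always be enlarged) produces exactly the first three terms of the claim, with the coefficient $\left(\tfrac{1}{2\eta_t} - c^2\right)$ appearing on $\E[(p_t-p^*)^2]$ after the $-c^2\E[(p_t-p^*)^2]$ strong-convexity term is merged in.

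The main obstacle is the complementary case $t < t^*$, where $p^*$ may fall outside $S_t$ and projection is no longer nonexpansive toward $p^*$. To handle it I would introduce the projected benchmark $\tilde p^* := \Proj_{S_t}(p^*) \in S_t$, run the nonexpansiveness argument against $\tilde p^*$, and then pay for the substitution via the split $g_t(p_t - p^*) = g_t(p_t - \tilde p^*) + g_t(\tilde p^* - p^*)$. The first piece is controlled exactly as in the feasible case; the correction $g_t(\tilde p^* - p^*) \le |g_t|\,|\tilde p^* - p^*|$ is bounded in expectation using the first-moment estimate $\E[|g_t|] \le 2 C^2 h(t)^2 = 2(Ch(t))^2$ of \Cref{claim:moments}, yielding the $(Ch(t))^2$ correction. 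Finally, converting the two squared-distance terms from $\tilde p^*$ back to $p^*$ costs at most $\frac{1}{2\eta_t}\cdot O(|\tilde p^* - p^*|)$ each, and since $|\tilde p^* - p^*| \le \delta_t = 1/h(t)$ this produces the $\frac{1}{\eta_t h(t)}$ correction. Gating both corrections by $\mathbbm{1}[t \le t^*]$ --- a conservative over-count of the rounds where $p^* \notin S_t$, harmless because the correction terms are nonnegative --- gives the final additive term and completes the claim. The only delicate points are keeping the reindexing consistent and ensuring expectations are taken after, not before, the unbiased substitution of $g_t$ for $f_t'(p_t)$.
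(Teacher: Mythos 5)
Your proposal is correct and takes essentially the same approach as the paper's proof: strong convexity (\Cref{claim:strong}) plus conditional unbiasedness of $g_t$, the one-step projected-OGD inequality, the gradient moment bounds of \Cref{claim:moments}, and \Cref{claim:pstartime} to gate the boundary corrections by $\mathbbm{1}[t \le t^*]$. The only difference is cosmetic: where the paper invokes Lemma C.1 of \citet{dai2023clip} to absorb rounds with $p^* \notin [\delta_t, 1-\delta_t]$, you re-derive the same $\delta_t$-sized correction via the projected benchmark $\tilde{p}^* = \Proj_{[\delta_t, 1-\delta_t]}(p^*)$, and your constants land within the claimed bound once you note that $(p_t - \tilde{p}^*)^2 \le (p_t - p^*)^2$ (projection nonexpansiveness, as $p_t$ lies in the clipping interval), so only one of the two squared-distance conversions actually costs anything.
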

\begin{proof}
    By \Cref{claim:strong} applied to $p = p_t$ and $p' = p^*$, we have $f_t(p_t) - f_t(p^*) \leq f'(p_t) \cdot (p_t - p^*) - c^2 (p_t-p^*)^2$. Now, we can bound the first term on the right-hand side as follows. 

    First, start with the inequality: $|p_{t+1} - p^*| \leq |p_t - \eta_t g_t - p^*| + \delta_t \cdot 1[p^* \not\in [\delta_t, 1-\delta_t]]$, which follows by Lemma C.1 in \citet{dai2023clip}. By \Cref{claim:pstartime}, we have that $1[p^* \not\in [\delta_t, 1-\delta_t]] = 0$ for all $t \geq t^*$, implying that $1[p^* \not\in [\delta_t, 1-\delta_t]] \leq 1[t \leq t^*]$. Thus, we have $|p_{t+1} - p^*| \leq |p_t - \eta_t g_t - p^*| + \delta_t \cdot 1[t \leq t^*]$. Squaring this inequality, we arrive, after rearranging terms and using the triangle inequality, at
    \[
    (p_{t+1} - p^*)^2 \leq (p_t-p^*)^2 + \eta_t^2 g_t^2 - 2 \eta_t g_t (p_t - p^*) + 4 \cdot 1[t \leq t^*] \cdot \eta_t \cdot \delta_t \left(\frac{1}{\eta_t} + \frac{|g_t|}{2}\right).
    \]
    Rearranging terms once again, we get:
    \[
    2 \eta_t g_t (p_t - p^*)  \leq (p_t-p^*)^2 + \eta_t^2 g_t^2 - (p_{t+1} - p^*)^2 + 4 \cdot 1[t \leq t^*] \cdot \eta_t \cdot \delta_t \left(\frac{1}{\eta_t} + \frac{|g_t|}{2}\right).
    \]
    Dividing this by $\eta_t > 0$, we get:
    \[
    2 g_t (p_t - p^*)  \leq \frac{1}{\eta_t} \left((p_t-p^*)^2 - (p_{t+1} - p^*)^2\right) + \eta_t g_t^2  + 4 \cdot 1[t \leq t^*] \cdot \delta_t \left(\frac{1}{\eta_t} + \frac{|g_t|}{2}\right).
    \]
    Noting that $\E[g_t | \cF_t] = f'_t(p_t)$ by definition of $g_t$, as well as using the bounds on the expected gradient moments from \Cref{claim:moments}, we can take the expectation of the last inequality to obtain:
    \begin{align*}
        2 f'_t(p_t) (p_t - p^*) &\leq \frac{1}{\eta_t} \left((p_t-p^*)^2 - \E[(p_{t+1} - p^*)^2 | \cF_t] \right) + \eta_t \E[g_t^2 | \cF_t]  + 4 \cdot 1[t \leq t^*] \cdot \delta_t \left(\frac{1}{\eta_t} + \frac{\E[|g_t| | \cF_t]}{2}\right) \\
        &\leq \frac{1}{\eta_t} \left((p_t-p^*)^2 - \E[(p_{t+1} - p^*)^2 | \cF_t]\right) + \eta_t \cdot 2C^4 h(t)^5  + 4 \cdot 1[t \leq t^*] \cdot \delta_t \left(\frac{1}{\eta_t} + C^2 h(t)^2\right).
    \end{align*}

    Returning to the strong convexity-induced inequality above, we thus have:
    \begin{align*}
        f_t(p_t) - f_t(p^*) &\leq f'(p_t) \cdot (p_t - p^*) - c^2 (p_t-p^*)^2\\
        &\leq \frac{1}{2\eta_t} \left((p_t-p^*)^2 - \E[(p_{t+1} - p^*)^2 | \cF_t]\right) + \eta_t \cdot C^4 h(t)^5  \\
        &\qquad + 2 \cdot 1[t \leq t^*] \cdot \delta_t \left(\frac{1}{\eta_t} + C^2 h(t)^2\right) - c^2 (p_t-p^*)^2\\
        &= \left(\frac{1}{2\eta_t} - c^2\right) (p_t - p^*)^2 - \frac{1}{2 \eta_t} \E[(p_{t+1}-p^*)^2 | \cF_t] + \eta_t \cdot C^4 h(t)^5  \\
        &\qquad+ 2 \cdot 1[t \leq t^*] \cdot \delta_t \cdot \left(\frac{1}{\eta_t} + C^2 h(t)^2\right).
    \end{align*}

    Now, taking expectation again, now with respect to the randomness up through $\cF_t$, we obtain the statement of this claim.
\end{proof}

\begin{claim}[Convergence Bound] \label{claim:summingterms}
    For any time horizon $T$, and any $p^* \in \{p^*_t\}_{t\geq 1}$, we have:
    \begin{align*}
        &\sum_{t=1}^T \E[f_t(p_t) - f_t(p^*)] \\ 
        &\leq - c^2 (T+1) \E[(p_{T+1} - p^*)^2] + \frac{C^5}{2c^2} h(T)^5(\log (T+1) + 1)  + 2 C^2 \left(1 + \frac{C}{c} \right)^2 \hinv\left(1+\frac{C}{c}\right) \\ 
        &\qquad+ 2c^2 \left(\hinv\left(1+\frac{C}{c}\right) + 1\right)^2.
    \end{align*}
\end{claim}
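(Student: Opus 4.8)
The plan is to obtain the claim by summing the per-round inequality of \Cref{claim:onestep} over $t = 1, \dots, T$ and exploiting the specific learning rate $\eta_t = 1/(2c^2 t)$, which is engineered precisely so that the quadratic $\E[(p_t - p^*)^2]$ terms telescope.

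First I would substitute $\eta_t = 1/(2c^2 t)$ into the bound of \Cref{claim:onestep}. This gives $\frac{1}{2\eta_t} = c^2 t$, so the two quadratic terms become $c^2(t-1)\,\E[(p_t - p^*)^2] - c^2 t\,\E[(p_{t+1}-p^*)^2]$. Writing $u_t := c^2(t-1)\,\E[(p_t-p^*)^2]$, the summand is exactly $u_t - u_{t+1}$, so summing over $t = 1, \dots, T$ telescopes to $u_1 - u_{T+1} = -c^2 T\,\E[(p_{T+1}-p^*)^2]$, with the $t=1$ term vanishing. Since $p_{T+1}, p^* \in (0,1)$ force $(p_{T+1}-p^*)^2 \leq 1$, I can rewrite $-c^2 T\,\E[(p_{T+1}-p^*)^2] = -c^2(T+1)\,\E[(p_{T+1}-p^*)^2] + c^2\,\E[(p_{T+1}-p^*)^2]$ and fold the small residual $c^2\,\E[(p_{T+1}-p^*)^2] \leq c^2$ into the final constant.

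Next I would handle the two remaining (nonnegative) error contributions. The ``gradient-variance'' terms sum to $\sum_{t=1}^T \eta_t (Ch(t))^5 = \frac{C^5}{2c^2}\sum_{t=1}^T \frac{h(t)^5}{t}$; using that $h$ is increasing (so $h(t) \leq h(T)$) together with the harmonic bound $\sum_{t=1}^T 1/t \leq \log(T+1)+1$, this is at most $\frac{C^5}{2c^2}\,h(T)^5\,(\log(T+1)+1)$. For the ``boundary-correction'' terms, the indicator $1[t \leq t^*]$ with $t^* = \hinv(1+C/c)$ (from \Cref{claim:pstartime}) restricts the sum to $t \leq t^*$. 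On this range, monotonicity of $h$ gives $h(t) \leq h(t^*) = 1 + C/c$, so $\sum_{t \leq t^*} 2(Ch(t))^2 \leq 2C^2\,(1+C/c)^2\,\hinv(1+C/c)$; and since $\delta_t < 1/2$ forces $h(t) > 2$, hence $1/h(t) < 1$, the terms $\frac{2}{\eta_t h(t)} = \frac{4c^2 t}{h(t)}$ sum to at most $2c^2\sum_{t\leq t^*} t \leq 2c^2\,(\hinv(1+C/c)+1)^2$. Combining the four pieces yields exactly the stated bound.

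I do not expect a genuine obstacle here: given \Cref{claim:onestep,claim:pstartime,claim:moments}, the result is essentially a bookkeeping exercise. The one step that requires care is the telescoping, which works cleanly only because of the $1/t$ shape of the learning rate, and I must track the coefficients $c^2(t-1)$ versus $c^2 t$ correctly to land on the $-c^2(T+1)$ factor; retaining this negative term is precisely what later powers the $L_2$-deviation bound of \Cref{lemma:l2deviation}. The other mild point is keeping the harmonic-sum and monotonicity bounds loose enough to absorb the off-by-one and the leftover $c^2\,\E[(p_{T+1}-p^*)^2]$ term into the constants.
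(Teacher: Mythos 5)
Your proposal is correct and takes essentially the same route as the paper's proof: sum \Cref{claim:onestep} over $t=1,\dots,T$, telescope the quadratic terms using $\eta_t = 1/(2c^2 t)$, bound $\sum_{t=1}^T \eta_t (Ch(t))^5$ via monotonicity of $h$ and the harmonic sum, and control the boundary-correction terms via $t^* = \hinv(1+C/c)$ with $h(t^*) = 1+C/c$. If anything, your telescoping is slightly more careful than the paper's: the paper writes the final coefficient as $\tfrac{1}{2\eta_{T+1}} = c^2(T+1)$ where the sum actually produces $\tfrac{1}{2\eta_T} = c^2 T$, and your explicit rewriting $-c^2 T = -c^2(T+1) + c^2$, with the residual $c^2\E[(p_{T+1}-p^*)^2] \leq c^2$ absorbed into the slack of the constant terms, handles this off-by-one cleanly.
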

\begin{proof}
    Summing the inequality in \Cref{claim:onestep} from $t=1$ to $t=T$, we obtain via telescoping sums:
    \begin{align*}
        &\sum_{t=1}^T \E[f_t(p_t) - f_t(p^*)] \\
        &\leq \sum_{t=1}^T \left(\frac{1}{2\eta_t} - c^2\right) \E[(p_t - p^*)^2] - \sum_{t=1}^T \frac{1}{2 \eta_t} \E[(p_{t+1}-p^*)^2] + \sum_{t=1}^T \eta_t \cdot (C h(t))^5  \\ 
        &\qquad + \sum_{t=1}^T 2 \cdot 1[t \leq t^*] \cdot \left(\frac{1}{\eta_t \cdot h(t)} + (C h(t))^2\right) \\
        &\leq \sum_{t=1}^T \left(\frac{1}{2\eta_t} - c^2\right) \E[(p_t - p^*)^2] - \sum_{t=1}^T \frac{1}{2 \eta_t} \E[(p_{t+1}-p^*)^2] + \sum_{t=1}^T \eta_t \cdot (C h(t))^5  \\ 
        &\qquad+ 2 \sum_{t=1}^{t^*} \left(\frac{1}{\eta_t \cdot h(t)} + (C h(t))^2\right) \\
        &\leq \sum_{t=1}^T \left(\frac{1}{2\eta_t} - c^2\right) \E[(p_t - p^*)^2] - \sum_{t=1}^T \frac{1}{2 \eta_t} \E[(p_{t+1}-p^*)^2] \\ &\qquad + (C h(T))^5 \sum_{t=1}^T \eta_t  + 2 t^* \cdot (C h(t^*))^2 + 2 \sum_{t=1}^{t^*} \frac{1}{\eta_t \cdot h(t)} \\
        &= \left(\frac{1}{2\eta_1}-c^2 \right)\E[(p_1 - p^*)^2] - \frac{1}{2\eta_{T+1}} \E[(p_{T+1} - p^*)^2] \\
        &+ \sum_{t=2}^T \left(\frac{1}{2\eta_t} - \frac{1}{2\eta_{t-1}} - c^2\right) \E[(p_t - p^*)^2] + (C h(T))^5 \sum_{t=1}^T \eta_t  + 2 t^* \cdot (C h(t^*))^2 + 2 \sum_{t=1}^{t^*} \frac{1}{\eta_t \cdot h(t)} \\
        &\leq - c^2 (T+1) \E[(p_{T+1} - p^*)^2] + \frac{(C h(T))^5}{2c^2} (\log (T+1) + 1)  + 2 t^* \cdot (C h(t^*))^2 + 4c^2 \sum_{t=1}^{t^*} \frac{t}{h(t)}.
    \end{align*}
    Finally, recalling the definition of $t^* = \hinv(A) = \hinv(1+C/c)$ and substituting it in, we obtain the desired claim.
\end{proof}

Finally, with the result of \Cref{claim:summingterms} in hand, we observe that (1) the term $- c^2 (T+1) \E[(p_{T+1} - p^*)^2]$ is nonpositive and can thus be ignored, (2) the second term on the right hand side is asymptotically $O((h(T))^2 \cdot \log T)$, and (3) the third and fourth terms on the right hand side are constant with respect to $T$ and only a function of the constants $C, c$ of the problem. This gives the desired result.

\subsection{Convergence of Treatment Probabilities of \ClipOGDSC: Proof of Lemma~\ref{lemma:l2deviation}}

We will make use of \Cref{claim:summingterms} from the previous subsection. Simply rearranging the terms, we obtain the following bound for the deterministic setting:
    \[
    c^2 (T+1) \E[(p_{T+1} - p^*_T)^2] \leq - \sum_{t=1}^T \E[f_t(p_t) - f_t(p^*_T)] + \frac{C^2}{2c^2} h(T)^2(\log (T+1) + 1)  + O(1),
    \]
    where the $O(1)$ term hides terms in the bound that do not depend on $T$. Dividing through by $c^2 \cdot (T+1)$ and reindexing for convenience, we obtain the desired result:
    \[
        \E[(p_{T} - p^*_T)^2] \leq -\Theta\left(\frac{\E[\Reg_T]}{T}\right) + O\left(\frac{(h(T))^2 \log T}{T}\right). \qedhere
    \]

\section{Confidence Interval Guarantees: Proof Sketch for Theorem~\ref{thm:confidence}} \label{app:confidence}

\begin{remark}[Chebyshev vs.\ Wald Confidence Intervals]
As \citet{dai2023clip} point out, it appears that ClipOGD may lead to an asymptotically normal distribution of the IPW estimator. If this were true, that would allow us to get Wald-type confidence intervals for the IPW estimator based on the variance estimator $\Varhat$, which would be narrower than Chebyshev-type ones. Through some simulations, we observed that asymptotically, the z-score of the IPW estimator induced by our adaptive scheme appears to satisfy asymptotic normality. However, below we only prove the validity of Chebyshev-type confidence intervals, and leave Wald-type CIs to be explored in future work. \qed
\end{remark}

We will convince ourselves that the techniques employed in \citet{dai2023clip} for proving the validity of this variance estimator apply to a broad class of adaptive sampling schemes. \citet{dai2023clip} state this result for their particular adaptive design but mention that it may apply to other learning rate and clipping rate settings. And indeed, we find that while their approach does depend on the adaptive design having sufficiently slowly decaying clipping rate and vanishing Neyman regret, it is oblivious to hyperparameters such as the learning rate. Moreover, we find that the condition of having asymptotically nonnegative Neyman regret, which \citet{dai2023clip} impose on the design, is also not necessary to ensure that the variance estimator $\Varhat$ is conservatively valid.

For easier tracking of the relevant quantities, recall the notation: $S_T(i) := \sqrt{\frac{1}{T} \sum_{t=1}^T y_t(i)^2} \text{ for } i \in \{0, 1\}.$ Following \citep{dai2023clip}, we define the quantities $A_T(1) = (S_T(1))^2, A_T(0) = (S_T(0))^2$, as well as the quantities $\widehat{A_T(1)} = \frac{1}{T} \sum_t y_t(1)^2 \frac{Z_t}{p_t}$, $\widehat{A_T(0)} = \frac{1}{T} \sum_t y_t(0)^2 \frac{1-Z_t}{1 - p_t}$ that estimate them in an unbiased way. Recalling that the variance of the optimal nonadaptive design (i.e., the variance of the IPW estimator that uses $p^*_T$ as its fixed sampling probability on all rounds $t=1\ldots T$) is \[ \frac{2}{T} (1+\rho) S_T(1) S_T(0) \leq \mathrm{VB} := \frac{4}{T} \sqrt{A_T(1) A_T(0)},\] we can see that $\Varhat = \frac{4}{T} \sqrt{\Aone \Azero}$ simply aims to approximate the upper bound $\mathrm{VB}$ on the optimal fixed-probability sampling scheme's variance. And given that our design has a no-regret guarantee with respect to this benchmark, $\Varhat$ thus also asymptotically approximates the upper bound on our (and any other such) design's induced IPW estimator variance $\Varalg$. This is the blueprint of the proof, and we will now briefly revisit the technical steps in \citet{dai2023clip} that make this blueprint argument work.

    First, Proposition D.1 of \citet{dai2023clip} proves that
    \[
    \left| \E[\Aone \Azero] - A_T(1) A_T(0) \right| \leq \frac{C^4}{T},
    \]
    which by tracking the proof can be seen to not depend on the sampling scheme.

    Second, by generalizing the result and steps of Proposition D.2 of \citet{dai2023clip}, we can bound the variance of the (normalized version of the) estimator $\Varhat$ as:
    \[
    \Var(\Aone \Azero) \leq \frac{C^8 \cdot h(T)}{T} + \frac{C^8 \cdot (h(T))^2}{T^2} \leq \frac{2 C^8 \cdot h(T)}{T}.
    \]
    Thus, applying Chebyshev's inequality to this variance bound and using the preceding in-expectation bound, we conclude that $\Aone \Azero \to A_T(1) A_T(0)$ in probability at the rate $O_p((h(T)/T)^{1/2})$.

    Now, as in the proof of Theorem 5.1 of \citet{dai2023clip}, we can observe that a Continuous Mapping Theorem can be applied to this in-probability convergence result to give the implication that $\sqrt{\Aone \Azero} \to \sqrt{A_T(1) A_T(0)}$ at the same asymptotic rate $O_p((h(T)/T)^{1/2})$. Indeed, since the target random variable $A_T(1) A_T(0)$ is bounded below by $c^2$ by \Cref{ass:bounds}, the square root transformation will be Lipschitz on the relevant range (i.e., away from zero).

    Finally, to establish the validity of the Chebyshev-type confidence intervals given above, it suffices to look at the z-score statistic $\zeta = \frac{\tau_T - \hat{\tau}_T}{\sqrt{\Var(\hat{\tau}_T)}}$ and the estimated z-score statistic $\zeta' = \frac{\tau_T - \hat{\tau}_T}{\sqrt{\Varhat}}$ and establish that $\zeta$ stochastically dominates $\zeta'$. Towards this, note as in \citet{dai2023clip} that:
    \[
    \zeta' = \zeta \cdot \left( \sqrt{\frac{\Var(\hat{\tau}_T)}{\mathrm{VB}}} \cdot \sqrt{\frac{T \cdot \mathrm{VB}}{T \cdot \Varhat}} \right).
    \]
    First, since the estimator $\hat{\tau}_T$ is induced by a no-regret adaptive design and since $\mathrm{VB}$ is an upper bound on the variance of the best fixed SRS scheme (which serves as the benchmark of the design's regret performance), we have that $\limsup_{T\to \infty} \frac{\Var(\hat{\tau}_T)}{\mathrm{VB}} \leq 1$. Second, from what we just obtained, $T \cdot \Varhat \to T \cdot \mathrm{VB}$ in probability, which in view of $T \cdot \Varhat$ being lower-bounded by a constant by \Cref{ass:bounds} implies by the Continuous Mapping Theorem that $\sqrt{\frac{T \cdot \mathrm{VB}}{T \cdot \Varhat}}$ converges to $1$ in probability. By Slutsky's theorem, this proves the desired stochastic domination and thus implies that the proposed confidence interval construction is asymptotically (conservatively) valid.

\section{Multigroup Adaptive Design: Proofs and Details} \label{app:multigroup}

\subsection{OLO Primitives} \label{app:multigroup-se}

Our multigroup design will rely on a sequence of reductions, derived with the help of some online learning machinery: a recent reduction of \citet{SleepingExpertsOrabona} and scale-free algorithms by \citet{orabona2018scale}.
First, we spell out the algorithmic primitives that we will require.

\begin{definition}[OLO algorithm; OLO regret]
    An \emph{OLO (online linear optimization) algorithm} $\cA$ over domain $V \subseteq \R^d$, where $d \geq 1$ is the dimension of the problem, sequentially receives vectors $\ell_t \in \R^d$, $t = 1, 2, \ldots$. Each $\ell_t$ is interpreted as the ``gradient'', or the ``loss'', that $\cA$ suffers at round $t$.
    
    Each round, before seeing $\ell_t$, algorithm $\cA$ outputs iterate $v_t \in V$ as a function of past history. The algorithm's \emph{regret} at any time $T$ is defined as the total loss incurred by its iterates minus the total loss of the best-in-hindsight admissible solution:
    \[
    \Reg_T(\cA) := \max_{v \in V} \Reg_T(\cA; v), \quad \text{where } \Reg_T(\cA; v) = \sum_{t=1}^T \langle \ell_t, v_t - v \rangle \text{ for } v \in V.
    \]
\end{definition}

\begin{definition}[Sleeping Experts algorithm; SE regret]
    A \emph{sleeping experts (SE) algorithm} $\cA$ over domain $V \subseteq \R^d$, where $d \geq 1$ is the number of ``sleeping experts'', sequentially receives vectors $a_t \in \{0, 1\}^d$ and $\ell_t \in \R^d$ at rounds $t = 1, 2, \ldots$. The vector $a_t$ has the interpretation that $a_{t, i} \in \{0, 1\}$ (for each $i \in [d]$) denotes whether expert $i$ is ``active'' (1) or ``inactive'' (0) in round $t$. The vector $\ell_t$ has the interpretation that at any round $t$, for all active experts $i$ (i.e., $a_{t, i} = 1$), expert $i$'s loss is $\ell_{t, i}$, while for all inactive experts $i$ the loss coordinate $\ell_{t, i}$ is (arbitrarily) equal to $0$.
    
    Each round, after seeing $a_t$ but before seeing $\ell_t$ (i.e., after seeing which experts are active but before seeing their losses), the algorithm outputs a distribution $v_t \in \Delta_d$ as a function of past history, such that $v_{t, i} = 0$ for all inactive experts (i.e., for all $i$ such that $a_{t, i} = 0$). In words, at each round the algorithm is required to output a distribution $v_t$ over the currently active experts only. 
    
    The algorithm's \emph{Sleeping Experts regret} at any time $T$ is defined as the upper bound, over all experts $i \in [d]$, on its performance relative to expert $i$ over those rounds $t$ \emph{on which $i$ was active}:
    \[
    \RegSE_T(\cA) := \max_{i \in [d]} \sum_{t=1}^T a_{t, i} \cdot (\langle \ell_t, v_t \rangle - \ell_{t, i}).
    \]
\end{definition}

\paragraph{Scale-Free OLO} We will make use of a \emph{scale-free} OLO algorithm \citep{orabona2018scale} to design a base algorithm for our multigroup regret algorithm. The property of any such algorithm is that its regret bound does not require the norms of the gradients $\ell_t$ to be bounded in $[0, 1]$ for some norm (like standard OLO methods typically require).

\begin{fact}[Theorem 1 of~\citep{orabona2018scale}] \label{fact:scale-free}
   Fix any norm $\norm{\cdot}$ and its dual norm $\norm{\cdot}_*$. Then, \cref{alg:SOLO_original} called SOLO FTRL  achieves, for any convex closed set $V \subseteq \R^d$, the following regret bound \emph{to any point $v \in V$} that scales with the magnitude of the losses/gradients:
   \[
   \Reg_T(\mathrm{SOLO \ FTRL}; v) \leq \left(R(v) + 2.75\right) \sqrt{\sum_{t=1}^T \norm{\ell_t}^2_* } + 3.5 \min \left\{ \sqrt{T-1}, \mathrm{diam}(V) \right\} \max_{t \in [T]} \norm{\ell_t}_*.
   \]
   where $\mathrm{diam}(V) = \sup_{v_1, v_2 \in V} \norm{v_1-v_2}$, and where SOLO FTRL is parameterized by an arbitrary nonnegative continuous $1$-strongly-convex regularizer $R: V \to \R$.
\end{fact}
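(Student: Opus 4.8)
The plan is to reproduce the scale-free FTRL analysis of \citet{orabona2018scale}, whose defining feature is a regularizer whose strength adapts to the observed gradient magnitudes. First I would recall that SOLO FTRL plays $v_t = \argmin_{v \in V}\left\{\langle \sum_{s=1}^{t-1}\ell_s,\, v\rangle + \lambda_t R(v)\right\}$ with the \emph{lagged} multiplier $\lambda_t = \sqrt{\sum_{s=1}^{t-1}\norm{\ell_s}_*^2}$; the fact that $\lambda_t$ depends only on past gradients is what makes the algorithm well-defined online, but it is also the root of the main technical difficulty below. I set $\psi_t := \lambda_t R$, noting that $\{\psi_t\}_t$ is pointwise nondecreasing (since $R \ge 0$ and $\lambda_t$ is nondecreasing) and that each $\psi_t$ is $\lambda_t$-strongly convex with respect to $\norm{\cdot}$, because $R$ is $1$-strongly convex.

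Next I would invoke the standard FTRL regret inequality for a nondecreasing sequence of regularizers, which bounds the regret to any $v \in V$ by a \emph{penalty} term plus a sum of \emph{stability} terms: $\Reg_T(\mathrm{SOLO\ FTRL}; v) \le \psi_{T+1}(v) + \sum_{t=1}^T \langle \ell_t, v_t - v_{t+1}\rangle$, where the $-\min_u \psi_1(u)$ contribution vanishes since $\psi_1 = 0$. The penalty is immediately $\psi_{T+1}(v) = R(v)\sqrt{\sum_{t=1}^T \norm{\ell_t}_*^2}$, supplying the $R(v)$ part of the leading coefficient. To bound each stability term I would control the per-round iterate movement $\norm{v_t - v_{t+1}}$ in two complementary ways: the $\lambda_t$-strong convexity of the round-$t$ objective (of which $v_t$ is the minimizer) yields $\langle \ell_t, v_t - v_{t+1}\rangle \le \frac{\norm{\ell_t}_*^2}{\lambda_t}$, up to lower-order regularizer-increment terms that telescope into the penalty, while the hard constraint $v_t, v_{t+1} \in V$ gives the a priori bound $\langle \ell_t, v_t - v_{t+1}\rangle \le \mathrm{diam}(V)\,\norm{\ell_t}_*$. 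Combining, each stability term is at most $\min\left\{\frac{\norm{\ell_t}_*^2}{\lambda_t},\, \mathrm{diam}(V)\,\norm{\ell_t}_*\right\}$, where crucially $\lambda_t = \sqrt{\sum_{s<t}\norm{\ell_s}_*^2}$ is the \emph{lagged} gradient sum.

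The main obstacle is bounding the sum $\sum_{t=1}^T \min\left\{\frac{\norm{\ell_t}_*^2}{\lambda_t},\, \mathrm{diam}(V)\,\norm{\ell_t}_*\right\}$, and this is exactly where the lagged denominator bites. On ``tame'' rounds, where $\norm{\ell_t}_*^2 \lesssim \sum_{s<t}\norm{\ell_s}_*^2$, the partial sums up to $t-1$ and up to $t$ are comparable, so $\frac{\norm{\ell_t}_*^2}{\sqrt{\sum_{s<t}\norm{\ell_s}_*^2}}$ can be charged against the standard telescoping/integral estimate $\sum_t \frac{a_t}{\sqrt{\sum_{s\le t}a_s}} \le 2\sqrt{\sum_t a_t}$ (with $a_t = \norm{\ell_t}_*^2$) at the cost of only a constant factor, contributing the $2.75\sqrt{\sum_t \norm{\ell_t}_*^2}$ piece. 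The difficulty is the ``spiky'' rounds, where $\norm{\ell_t}_*$ dominates the accumulated mass and the lagged denominator is too small to absorb the numerator; here I would instead use the $\mathrm{diam}(V)\,\norm{\ell_t}_*$ branch of the minimum, together with a counting argument bounding the number of such rounds (which yields the alternative $\sqrt{T-1}$ factor), to control their total contribution by $3.5\min\{\sqrt{T-1}, \mathrm{diam}(V)\}\max_t \norm{\ell_t}_*$. This two-regime summation estimate is precisely the technical heart of \citet{orabona2018scale} (their Lemma~3), and executing the case split with the stated constants is the crux of the argument; assembling the penalty and stability bounds then yields the claimed inequality.
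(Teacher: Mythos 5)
This statement is not proved in the paper at all: it is imported verbatim as a \emph{Fact}, with the proof deferred entirely to Theorem~1 of \citet{orabona2018scale}, so there is no in-paper argument to compare against. Your sketch is a faithful reconstruction of how that external result is actually proved: the FTRL penalty/stability decomposition with the pointwise nondecreasing regularizers $\lambda_t R$, the per-round stability bound $\min\bigl\{\norm{\ell_t}_*^2/\lambda_t,\ \mathrm{diam}(V)\,\norm{\ell_t}_*\bigr\}$ obtained from strong convexity and the hard constraint respectively, and the identification of the \emph{lagged} normalizer $\lambda_t = \bigl(\sum_{s<t}\norm{\ell_s}_*^2\bigr)^{1/2}$ (including the degenerate round where $\lambda_t = 0$) as the technical crux. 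Two caveats: first, you do not actually execute the crux --- the two-regime summation with the stated constants $2.75$ and $3.5$ is exactly the cited Lemma~3 of \citet{orabona2018scale}, so your write-up is a reduction to that lemma rather than a self-contained proof; second, the $\sqrt{T-1}$ alternative in the minimum does not arise from counting ``spiky'' rounds as you suggest, but from a separate summation estimate that bounds the stability sum without reference to $\mathrm{diam}(V)$, after which the two bounds are combined by taking the minimum. Neither caveat is a flaw in the approach; given that the paper itself treats this as a black-box citation, your outline is as complete a comparison point as exists here.
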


\begin{algorithm}
\caption{$\cA_{SOLO}$: SOLO FTRL \citep{orabona2018scale}}
\label{alg:SOLO_original}

\begin{algorithmic}
\STATE Receive domain $V \subseteq \R^d$ base regularizer $R(w)$, and norm $\norm{\cdot}$.
\STATE Initialize $L_0 \gets \textbf{0}^d, q_0 \gets 0$.
\FOR{$t = 1, 2, \ldots$}
    \STATE Compute new weights $w_t \gets \argmin\limits_{w \in V} \left\{\langle L_{t-1}, w \rangle + R_t(w) \right\}$, where $R_t(w) = \sqrt{q_{t-1}} \cdot R(w)$.
    \STATE Receive loss vector $\ell_t$.
    \STATE Set $L_t \gets L_{t-1} + \ell_t$.
    \STATE Set $q_t \gets q_{t-1} + \norm{\ell_{t}}^2_*$.
\ENDFOR
\end{algorithmic}
\end{algorithm}

\subsection{Designing a Scale-Free Sleeping Experts Algorithm}

Now, let us instantiate the above \cref{fact:scale-free} appropriately. First, set the norm for the regret bound to be the $2$-norm: $\norm{\cdot} = \norm{\cdot}_* = \norm{\cdot}_2$. Second, set the regularizer to be $R(v) := \norm{v}^2_2$ for $v \in V$, which is $1$-convex with respect to the $2$-norm. Third, set the domain of the algorithm to be the non-negative orthant: $V = \R^d_{\geq 0}$. 
We then arrive at the following guarantee.

\begin{corollary}[of Fact~\ref{fact:scale-free}] \label{cor:SOLO}
    With the nonnegative orthant $V = \R^d_{\geq 0}$ as domain and the squared $L_2$-norm as regularizer, SOLO FTRL achieves the following scale-free regret bound for all $v \in V$:
    \[
    \Reg_T(\mathrm{SOLO \ FTRL}; v) \leq \left(\norm{v}^2_2 + 6.25\right) \max_{t \in [T]} \norm{\ell_t}_2 \sqrt{T}.
    \]
    The instantiation of SOLO FTRL for these specific choices is given in Algorithm~\ref{alg:SOLO}.
\end{corollary}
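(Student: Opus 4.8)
The plan is to directly instantiate the general scale-free bound of \cref{fact:scale-free} with the three choices announced just before the corollary ($\norm{\cdot} = \norm{\cdot}_* = \norm{\cdot}_2$, $R(v) = \norm{v}^2_2$, and $V = \R^d_{\geq 0}$), and then simplify the two resulting terms using the structure of the nonnegative orthant. First I would verify that the prerequisites of \cref{fact:scale-free} are met by this regularizer: $R(v) = \norm{v}^2_2$ is nonnegative and continuous on $V$, and it is $1$-strongly-convex with respect to $\norm{\cdot}_2$ under the normalization adopted in \citep{orabona2018scale}. With these substitutions, \cref{fact:scale-free} yields, for every $v \in V$,
\[
\Reg_T(\mathrm{SOLO\ FTRL}; v) \leq \left(\norm{v}^2_2 + 2.75\right)\sqrt{\sum_{t=1}^T \norm{\ell_t}^2_2} + 3.5 \min\left\{\sqrt{T-1},\, \mathrm{diam}(V)\right\} \max_{t\in[T]} \norm{\ell_t}_2 .
\]

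Next I would bound each of the two terms so that they share a common factor. For the first term, since $\norm{\ell_t}^2_2 \leq \max_{s\in[T]}\norm{\ell_s}^2_2$ for every $t$, summing over $t$ and taking square roots gives $\sqrt{\sum_{t=1}^T \norm{\ell_t}^2_2} \leq \sqrt{T}\,\max_{t\in[T]}\norm{\ell_t}_2$. For the second term, the key observation is that the domain $V = \R^d_{\geq 0}$ is unbounded, so $\mathrm{diam}(V) = +\infty$ and hence $\min\{\sqrt{T-1},\, \mathrm{diam}(V)\} = \sqrt{T-1} \leq \sqrt{T}$. Substituting both bounds back in, each term carries the common factor $\sqrt{T}\,\max_{t\in[T]}\norm{\ell_t}_2$.

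Finally I would collect constants: the coefficient of $\sqrt{T}\,\max_{t\in[T]}\norm{\ell_t}_2$ becomes $(\norm{v}^2_2 + 2.75) + 3.5 = \norm{v}^2_2 + 6.25$, which is exactly the claimed bound. I do not anticipate a genuine obstacle, since the argument is a direct specialization of an already-stated theorem; the only two points that require care are (i) matching the strong-convexity normalization of $\norm{\cdot}^2_2$ to the precise convention under which \cref{fact:scale-free} assumes a $1$-strongly-convex regularizer, and (ii) correctly handling the infinite diameter of the unbounded orthant, which is exactly what causes the $\min\{\cdot,\cdot\}$ to collapse to $\sqrt{T-1}$ rather than to an (here vacuous) diameter-dependent term.
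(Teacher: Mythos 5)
Your proposal is correct and is essentially the paper's own argument: the paper likewise instantiates Fact~\ref{fact:scale-free} with the $2$-norm, the regularizer $R(v)=\norm{v}^2_2$ (which, having Hessian $2I$, is indeed $1$-strongly convex w.r.t.\ $\norm{\cdot}_2$), and the unbounded orthant $V=\R^d_{\geq 0}$, then bounds $\sqrt{\sum_{t=1}^T\norm{\ell_t}^2_2}\leq\sqrt{T}\max_{t\in[T]}\norm{\ell_t}_2$ and collapses $\min\{\sqrt{T-1},\mathrm{diam}(V)\}$ to $\sqrt{T-1}\leq\sqrt{T}$, collecting $2.75+3.5=6.25$. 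Both of the points you flag as requiring care are handled exactly as the paper intends, so there is no gap.
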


\begin{algorithm}
\caption{$\cA_{SOLO}$: Instantiation for scale-free sleeping experts}
\label{alg:SOLO}

\begin{algorithmic}[1]
\STATE Initialize $L_0 \gets \textbf{0}^d, q_0 \gets 0$.
\FOR{$t = 1, 2, \ldots$}
    \STATE Set weights $w_t \gets \max \left\{ \textbf{0}^d , - \frac{1}{\sqrt{q_{t-1}}}L_{t-1} \right\}$ (coordinate-wise maximum).
    \STATE Receive loss vector $\ell_t \in \R^d_{\geq 0}$.
    \STATE Set $L_t \gets L_{t-1} + \ell_t$.
    \STATE Set $q_t \gets q_{t-1} + \norm{\ell_{t}}^2_2$.
\ENDFOR
\end{algorithmic}
\end{algorithm}

We note that the update for $w_t$ in \cref{alg:SOLO} is the solution to the original argmax problem in \cref{alg:SOLO_original}, with the nonnegative orthant as domain and the rescaled $L_2$-norm as regularizer.

\paragraph{Scale-Free Sleeping Experts} Now, we will turn this just obtained scale-free OLO regret guarantee into a scale-free sleeping experts regret guarantee. We will utilize a recent black-box reduction mechanism of \citet{SleepingExpertsOrabona}, which proceeds as follows.

\begin{fact}[Sleeping Experts to OLO Reduction~\citep{SleepingExpertsOrabona}]
    Consider a sleeping experts setting with $d$ experts. Define any base OLO algorithm $\cA$ with nonnegative orthant $V = \R^d_{\geq 0}$ as the domain. Then Algorithm~\ref{alg:SE-OLO-Red}, which we refer to as $\cA_{OLO\to SE}$, constructs a sequence $v_1, v_2, \ldots$ of distributions over active experts that attains the following sleeping experts regret bound:
    \[
    \RegSE_T(\cA_{OLO\to SE}) = \max_{v \in \mathrm{SB}(\R^d)}\Reg_T \left( \cA \left( \left\{\widetilde{\ell}_t \right\}_{t \in [T]} \right); v \right).
    \]
    Here, $\mathrm{SB}(\R^d)$ as the collection of the $d$ standard basis (unit) vectors of $\R^d$; 
    and the vectors $\{\widetilde{\ell}_t\}_{t \in [T]}$, defined in Algorithm~\ref{alg:SE-OLO-Red}, are surrogate loss vectors. Note that these surrogate losses satisfy $\norm{\widetilde{\ell}_t}_\infty \leq 2 \norm{\ell_t}_\infty$ relative to the original losses $\{\ell_t\}_{t \in [T]}$.
\end{fact}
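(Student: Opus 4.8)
The plan is to prove the stated equality by exhibiting, round by round, an exact correspondence between the OLO regret of the base algorithm $\cA$ on the surrogate losses and the sleeping-experts regret of $\cA_{OLO\to SE}$. Recall the construction of \cref{alg:SE-OLO-Red}: at each round $t$ the reduction reads the base algorithm's iterate $w_t \in \R^d_{\geq 0}$, and, after observing the activity mask $a_t$, outputs the renormalized distribution $v_t = (a_t \odot w_t)/\langle a_t, w_t\rangle$ supported on the active experts; once the true loss $\ell_t$ arrives, it feeds $\cA$ the centered surrogate loss $\widetilde{\ell}_t = a_t \odot \left(\ell_t - \langle \ell_t, v_t\rangle \, \textbf{1}^d\right)$, which zeroes out inactive coordinates and shifts each active coordinate by the current weighted-average loss. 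My goal is to show that this centering makes the surrogate loss orthogonal to $w_t$, which collapses the OLO regret against any standard basis vector into exactly the per-expert sleeping regret.

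The crux of the argument is the single identity $\langle \widetilde{\ell}_t, w_t\rangle = 0$, which I would verify first. Writing $c_t = \langle a_t, w_t\rangle$, the definition of $v_t$ gives $\sum_j a_{t,j} w_{t,j}\,\ell_{t,j} = c_t \langle \ell_t, v_t\rangle$; subtracting the centering term $\langle \ell_t, v_t\rangle$ from each active coordinate and weighting by $w_{t,j}$ therefore exactly cancels this sum, so $\langle \widetilde{\ell}_t, w_t\rangle = c_t\langle \ell_t, v_t\rangle - c_t \langle \ell_t, v_t\rangle = 0$. (In the degenerate case $c_t=0$, nonnegativity of $w_t$ forces $a_{t,j}w_{t,j}=0$ for every $j$, so the identity holds trivially.) This orthogonality is the only place the specific normalization defining $v_t$ is used, and it is the key step.

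With the identity in hand, the rest is bookkeeping. For any standard basis vector $e_i \in \mathrm{SB}(\R^d)$,
\[
\Reg_T(\cA; e_i) = \sum_{t=1}^T \langle \widetilde{\ell}_t, w_t - e_i\rangle = \sum_{t=1}^T \left(0 - \widetilde{\ell}_{t,i}\right) = \sum_{t=1}^T a_{t,i}\left(\langle \ell_t, v_t\rangle - \ell_{t,i}\right),
\]
where the last equality is just the definition of $\widetilde{\ell}_{t,i}$. The right-hand side is precisely the sleeping-experts regret of $\cA_{OLO\to SE}$ relative to expert $i$; note in particular that on rounds where $i$ is inactive we have $a_{t,i}=0$, so those rounds drop out of both sides, matching the sleeping convention exactly. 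Taking the maximum over $i \in [d]$, i.e.\ over $\mathrm{SB}(\R^d)$, yields the claimed equality $\RegSE_T(\cA_{OLO\to SE}) = \max_{v \in \mathrm{SB}(\R^d)} \Reg_T(\cA(\{\widetilde{\ell}_t\}); v)$.

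Finally, for the norm bound I would note that $\langle \ell_t, v_t\rangle$ is a convex combination of the active coordinates of $\ell_t$ (since $v_t \in \Delta_d$ is supported on the active experts), hence $|\langle \ell_t, v_t\rangle| \leq \norm{\ell_t}_\infty$; the triangle inequality applied coordinatewise then gives $\norm{\widetilde{\ell}_t}_\infty \leq \norm{\ell_t}_\infty + |\langle \ell_t, v_t\rangle| \leq 2\norm{\ell_t}_\infty$. The main, and essentially only, obstacle is the orthogonality identity: once one sees that centering against $\langle \ell_t, v_t\rangle$ is exactly what annihilates the inner product with $w_t$, and that it is legitimate to restrict the OLO comparator to standard basis vectors (because the sleeping objective competes only against single experts), the equality follows with no further estimates.
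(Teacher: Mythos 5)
Your proof is correct. Note that the paper itself never proves this statement: it is stated as a \emph{Fact} imported verbatim from the cited reference \citep{SleepingExpertsOrabona}, so there is no in-paper argument to compare against. Your derivation is the standard one underlying that reduction, and every step checks out: the centering of the surrogate loss makes $\langle \widetilde{\ell}_t, w_t\rangle = 0$ (including the degenerate case $\langle a_t, w_t\rangle = 0$, where nonnegativity of $w_t$ kills every active coordinate), which collapses $\Reg_T(\cA; e_i) = \sum_t \langle \widetilde{\ell}_t, w_t - e_i\rangle$ into exactly $\sum_t a_{t,i}\left(\langle \ell_t, v_t\rangle - \ell_{t,i}\right)$, i.e.\ the per-expert sleeping regret, and taking the maximum over $i \in [d]$ gives the claimed equality; the bound $\norm{\widetilde{\ell}_t}_\infty \leq 2\norm{\ell_t}_\infty$ follows, as you say, because $\langle \ell_t, v_t\rangle$ is a convex combination of active coordinates of $\ell_t$.
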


\begin{algorithm}[ht]
\begin{algorithmic}
\STATE Initialize any base OLO algorithm $\cA$ with nonnegative orthant $V = \R^d_{\geq 0}$ as domain.
\FOR{$t=1, 2, \ldots$}
    \STATE Get unscaled prediction $w_t \in \R^d_{\geq 0}$ from $\cA$.
    \STATE Receive indicator vector describing which experts are active: $a_t \in \{0, 1\}^d$.
    \STATE Construct distribution $v_t \in \Delta_d$ as: $v_{t, i} = \frac{a_{t, i} w_{t, i}}{\langle a_t, w_t \rangle}$ for $i \in [d]$.
    \STATE Receive loss vector $\ell_t \in \R^d$.
    \STATE Construct surrogate loss vector $\widetilde{\ell}_t$ as $\widetilde{\ell}_{t, i} = a_{t, i} (\ell_{t, i} - \langle \ell_t, v_t \rangle)$ for $i \in [d]$, and send it to $\cA$.
\ENDFOR
\end{algorithmic}
\caption{$\cA_{OLO\to SE}$: Sleeping Experts to OLO Reduction~\citep{SleepingExpertsOrabona}}
\label{alg:SE-OLO-Red}
\end{algorithm}

To obtain sleeping experts regret bounds scaling with the norm of the losses, we can implement this reduction with the scale-free \cref{alg:SOLO} at its base. Formally, we have the following statement.

\begin{theorem}[Scale-Free Sleeping Experts Algorithm] \label{thm:scalefree-SE}
    Consider a sleeping experts setting with $d$ experts. Initialize Algorithm~\ref{alg:SE-OLO-Red} using \cref{alg:SOLO} (an instance of SOLO FTRL with settings described in Corollary~\ref{cor:SOLO}) as its base OLO subroutine. Call the resulting sleeping experts algorithm $\cA_\text{SOLO SE}$, with the pseudocode given in \cref{alg:SE}. Then, SOLO SE obtains the following sleeping experts regret bound on any sequence of losses $\{\ell_t\}_{t \in [T]}$:
    \[
    \RegSE_T \left(\cA_\text{SOLO SE} \left(\{\ell_t\}_{t \in [T]}\right) \right) \leq 15 \max_{t \in [T]} \norm{\ell_t}_\infty \sqrt{d T}.
    \]
\end{theorem}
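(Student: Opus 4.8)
The plan is to chain the two black-box results already established---the Sleeping Experts to OLO reduction and the scale-free regret bound of Corollary~\ref{cor:SOLO}---and then translate the resulting $L_2$-norm bound into the desired $L_\infty$-norm bound via a dimension-dependent norm inequality. Since $\cA_\text{SOLO SE}$ is by construction the reduction $\cA_{OLO\to SE}$ instantiated with \cref{alg:SOLO} as its base OLO subroutine, the reduction Fact applies verbatim and gives the exact identity $\RegSE_T(\cA_\text{SOLO SE}) = \max_{v \in \mathrm{SB}(\R^d)} \Reg_T(\cA_{SOLO}(\{\widetilde{\ell}_t\}_{t \in [T]}); v)$, where the $\widetilde{\ell}_t$ are the surrogate losses fed to the base algorithm.

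First I would bound the right-hand side using Corollary~\ref{cor:SOLO}, which controls the OLO regret of SOLO FTRL to any comparator $v \in \R^d_{\geq 0}$. The crucial simplification is that the comparators here range only over the standard basis vectors $\mathrm{SB}(\R^d)$, for each of which $\norm{v}^2_2 = 1$; hence the comparator-dependent factor $\norm{v}^2_2 + 6.25$ collapses to the constant $7.25$, uniformly over the maximum. This yields $\RegSE_T(\cA_\text{SOLO SE}) \leq 7.25 \cdot \max_{t \in [T]} \norm{\widetilde{\ell}_t}_2 \cdot \sqrt{T}$.

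Next I would convert the $L_2$ norm of the surrogate losses into the $L_\infty$ norm of the original losses. Using the generic inequality $\norm{\widetilde{\ell}_t}_2 \leq \sqrt{d}\, \norm{\widetilde{\ell}_t}_\infty$ for $d$-dimensional vectors, together with the surrogate-loss magnitude guarantee $\norm{\widetilde{\ell}_t}_\infty \leq 2 \norm{\ell_t}_\infty$ supplied by the reduction Fact, I obtain $\max_{t \in [T]} \norm{\widetilde{\ell}_t}_2 \leq 2\sqrt{d}\, \max_{t \in [T]} \norm{\ell_t}_\infty$. Substituting this in gives $\RegSE_T(\cA_\text{SOLO SE}) \leq 7.25 \cdot 2\sqrt{d} \cdot \max_{t \in [T]} \norm{\ell_t}_\infty \cdot \sqrt{T} = 14.5 \max_{t \in [T]} \norm{\ell_t}_\infty \sqrt{d T} \leq 15 \max_{t \in [T]} \norm{\ell_t}_\infty \sqrt{d T}$, completing the argument.

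Since both input results are taken as given, there is no genuinely hard analytical step; the only point demanding care is the norm bookkeeping in the last paragraph. In particular, the SOLO FTRL bound is stated in the self-dual $L_2$ norm, whereas both the target bound and the surrogate-loss control are in $L_\infty$, so the $\sqrt{d}$ conversion factor (and the factor $2$ from the surrogate losses) must be inserted at exactly the right place to land the clean constant $15$. I would also double-check that the reduction's guarantee indeed restricts the comparator set to $\mathrm{SB}(\R^d)$, since that restriction is precisely what makes the $\norm{v}^2_2 = 1$ simplification legitimate and prevents the comparator norm from blowing up the bound.
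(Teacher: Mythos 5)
Your proof is correct and follows exactly the route the paper intends: chaining the Sleeping-Experts-to-OLO reduction (with its comparator set restricted to standard basis vectors and the factor-2 surrogate-loss bound) with Corollary~\ref{cor:SOLO}, then paying the $\sqrt{d}$ factor to pass from $\norm{\cdot}_2$ to $\norm{\cdot}_\infty$, which recovers the constant as $7.25 \times 2 = 14.5 \leq 15$. The paper leaves this derivation implicit, so your write-up is a faithful (and slightly more explicit) version of the same argument.
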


\begin{algorithm}[ht]
\caption{$\cA_\mathrm{SOLO \ SE}$: Sleeping Experts Algorithm}
\label{alg:SE}

\begin{algorithmic}[1]
\STATE Initialize $\cA_{SOLO}$, an instance of Algorithm~\ref{alg:SOLO}.
\FOR{$t=1, 2, \ldots$}
    \STATE Receive unscaled weights $w_t \in \R^d_{\geq 0}$ from $\cA_{SOLO}$.
    \STATE Receive indicator vector describing which experts are active: $a_t \in \{0, 1\}^d$.
    \STATE Set rescaled weights $v_t \in \Delta_d$ as: $v_{t, i} = \frac{a_{t, i} w_{t, i}}{\langle a_t, w_t \rangle}$ for $i \in [d]$.
    \STATE Receive loss vector $\ell_t \in \R^d$.
    \STATE Set surrogate loss vector $\widetilde{\ell}_t$ as $\widetilde{\ell}_{t, i} = a_{t, i} (\ell_{t, i} - \langle \ell_t, v_t \rangle)$ for $i \in [d]$.
    \STATE Send $\widetilde{\ell}_t$ to $\cA_{SOLO}$.
\ENDFOR
\end{algorithmic}
\end{algorithm}

\subsection{First-Order Neyman Regret Minimization} \label{app:multigroup-first_order}

We now formalize (and generalize) how the ClipOGD design operates. This formalization will define the scope of noncontextual adaptive designs that can be used to estimate group propensities for all groups in our multigroup design.

\begin{definition}[First-order Neyman Regret Minimization] \label{def:first_order_ATE}
Recall the Neyman objectives: $f_t(p) = \frac{y_t(1)^2}{p} + \frac{y_t(0)^2}{1-p}$ for $p \in (0, 1)$, $t \geq 1$, where .$\{(y_t(1), y_t(0))\}_{t \geq 1}$ are the potential outcomes. 

A first-order Neyman regret minimization algorithm $\cA_\mathrm{ATE}$ follows the following protocol for sequential ATE estimation: At each round $t = 1, 2, \ldots$, $\cA_\mathrm{ATE}$ decides on a treatment probability $p_t \in (1/h(t), 1-1/h(t))$, where $h: \mathbb{N}_+ \to \R_{> 0}$ is a strictly increasing clipping function. After that, $\cA_\mathrm{ATE}$ receives \emph{first-order feedback} $\widetilde{g}_t$ from the environment, which is a random variable that satisfies the following properties: (1) It is adapted to the natural filtration $\{\cF_t\}_{t \geq 1}$ of the process, i.e., the distribution of $\widetilde{g}_t$ is determined by all prior history up to and including determining $p_t$; (2) It is an unbiased estimator of $f'_t(p_t)$, in that $\E[ \widetilde{g}_t | \cF_{t-1}] = f'_t(p_t) = - \frac{y_t(1)^2}{p_t^2} + \frac{y_t(0)^2}{(1-p_t)^2}$. 
\end{definition}
It is easy to observe that \Cref{alg:strong} conforms to Definition~\ref{def:first_order_ATE}.  \Cref{alg:strong} is written as requiring direct access to the selected outcome $Y_t$, but this outcome is only used to compute the unbiased gradient estimator $f'_t(p_t)$.

\subsection{Multigroup-Adaptive Design via Sleeping Experts} \label{app:multigroup_general}

We are now ready to present a context-aware algorithm for online ATE estimation. It uses scale-free sleeping experts as derived above, as well first-order Neyman regret minimization algorithms as base learners. The following theorem states its most general guarantees (as well as the specific instantiation that gives MGATE). The proof is presented in the next subsection.

\begin{theorem}[Guarantees for Algorithm~\ref{alg:multigroup_general}] \label{thm:multigroup_general}
    Consider any first-order Neyman regret minimization algorithm $\cA_\mathrm{ATE}$ and any scale-free sleeping experts algorithm $\cA_\mathrm{SE}$. Fix any context space $\cX$ and any finite group family $\cG \subseteq 2^\cX$. If the base learners for all $G \in \cG$ are copies of $\cA_\mathrm{ATE}$, Algorithm~\ref{alg:multigroup_general}'s expected multigroup regret will be bounded for all $G \in \cG$ as:
    \begin{align*}
        &\E \left[\RegVar_T(\cA; G) \right]
        \leq \E \left[\RegSE_T(\cA_\mathrm{SE}) \right] + \E \left[\RegVar_T(\cA_\mathrm{ATE}(G)) \right].
    \end{align*}
    Moreover, Algorithm~\ref{alg:multigroup_general} is \emph{anytime}, as it does not require advance knowledge of the time horizon $T$.
    
    Instantiate Algorithm~\ref{alg:multigroup_general} using $h$-clipped \ClipOGDSC as the base ATE algorithm, for some strictly increasing $h$, and use $\cA_\text{SOLO SE}$ (\cref{alg:SE}) as the scale-free SE algorithm. Then, we obtain the MGATE design (\Cref{alg:AMGATE}) that simultaneously offers the following guarantees for all $G \in \cG$:
    \[
    \E \left[\RegVar_T(\cA; G) \right] = O \left( \sqrt{|\cG|} \cdot (h(T))^5 \cdot \sqrt{T} \right).
    \]
\end{theorem}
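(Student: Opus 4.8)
The plan is to first prove the general (black-box) regret decomposition and then plug in the two concrete guarantees of Theorem~\ref{thm:regret} and Theorem~\ref{thm:scalefree-SE}. Fix a group $G \in \cG$ and let $p^*$ be its maximizing benchmark propensity. The key idea is to insert the group's own suggested propensity $p_{t,G}$ as an intermediate comparator on each active round ($x_t \in G$), splitting the excess cost into an \emph{aggregation cost} $f_t(p_{t,\mathrm{eff}}) - f_t(p_{t,G})$ and a \emph{base-learner cost} $f_t(p_{t,G}) - f_t(p^*)$. Because the aggregation cost is independent of $p^*$, the maximum over $p^*$ factors out, and summing over $G$'s active rounds gives
\begin{align*}
\RegVar_T(\cA; G) &= \E\Bigl[\sum_{t:\, x_t \in G}\bigl(f_t(p_{t,\mathrm{eff}}) - f_t(p_{t,G})\bigr)\Bigr] \\
&\quad + \E\Bigl[\max_{p^*}\sum_{t:\, x_t \in G}\bigl(f_t(p_{t,G}) - f_t(p^*)\bigr)\Bigr].
\end{align*}
The second term is by definition $\E[\RegVar_T(\cA_\mathrm{ATE}(G))]$, the Neyman regret of group $G$'s copy of $\cA_\mathrm{ATE}$ on the subsequence of rounds where $G$ is active, so everything reduces to controlling the aggregation cost.

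Bounding the aggregation cost is where the sleeping-experts reduction enters, and it hinges on two structural facts about Algorithm~\ref{alg:AMGATE}. First, since $p_{t,\mathrm{eff}} = \langle w_{t,\mathrm{eff}}, p_t\rangle$ is a \emph{convex combination} of the active groups' propensities (rather than a sample from their mixture), convexity of $f_t$ yields the pointwise Jensen bound $f_t(p_{t,\mathrm{eff}}) \le \sum_{G' \in \cG_t} w_{t,\mathrm{eff}, G'}\, f_t(p_{t,G'})$. Second, a short computation over the single coin flip $Z_t \sim \mathrm{Bernoulli}(p_{t,\mathrm{eff}})$ shows that the zeroth-order estimator is unbiased for the Neyman value at the group's propensity, $\E[\widetilde{\ell}_{t,G}\mid\cF_{t-1}] = f_t(p_{t,G})$ (and, analogously, $\E[\widetilde{g}_{t,G}\mid\cF_{t-1}] = f_t'(p_{t,G})$, so that each base learner indeed receives valid first-order feedback in the sense of Definition~\ref{def:first_order_ATE}). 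Since $w_{t,\mathrm{eff}}$ and all propensities are measurable with respect to the history $\cF_{t-1}$ that sees $x_t$ but precedes $Z_t$, taking conditional expectations converts the Jensen bound into $f_t(p_{t,\mathrm{eff}}) - f_t(p_{t,G}) \le \E[\langle \widetilde{\ell}_t, w_{t,\mathrm{eff}}\rangle - \widetilde{\ell}_{t,G}\mid\cF_{t-1}]$. Summing over $G$'s active rounds, the right-hand side is exactly the per-expert term of the sleeping-experts regret (with $v_t = w_{t,\mathrm{eff}}$ and activity indicators $a_{t,G} = \mathbbm{1}[x_t \in G]$); maximizing over $G$ bounds the aggregation cost by $\E[\RegSE_T(\cA_\mathrm{SE})]$, completing the general decomposition.

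To obtain the MGATE bound, I would substitute the two concrete rates. Group $G$'s copy of \ClipOGDSC makes its $\tau$-th update on the $\tau$-th round it is active, with local step size $1/(2c^2 n_{t,G})$ and clipping $1/h(n_{t,G})$, so Theorem~\ref{thm:regret} applies with effective horizon $n_{T,G} \le T$. The one wrinkle is that its gradient estimator $\widetilde{g}_{t,G}$ uses the effective propensity $p_{t,\mathrm{eff}}$ in the importance weight while differentiating at $p_{t,G}$; re-deriving the analogue of Claim~\ref{claim:moments} gives $\E[\widetilde{g}_{t,G}^2\mid\cF_{t-1}] = O(C^4\, h(t)\, h(n_{t,G})^4) = O(C^4 h(T)^5)$, which feeds through the strongly-convex analysis to give $\E[\RegVar_T(\cA_\mathrm{ATE}(G))] = O(h(T)^5\log T)$. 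For the aggregator, since every propensity (hence $p_{t,\mathrm{eff}}$, a convex combination) lies in $[1/h(t), 1-1/h(t)]$ and $|Y_t|\le C$ by Assumption~\ref{ass:bounds}, the losses fed to $\cA_\text{SOLO SE}$ satisfy $\norm{\widetilde{\ell}_t}_\infty \le C^2 h(T)^2$, so Theorem~\ref{thm:scalefree-SE} gives $\E[\RegSE_T] = O(C^2 h(T)^2 \sqrt{|\cG|\,T})$. Adding the two and absorbing lower-order terms (using $\log T = O(\sqrt T)$ and $h(T)\ge 1$) gives the claimed $O(\sqrt{|\cG|}\,(h(T))^5\sqrt{T})$; anytimeness is immediate since neither subroutine references $T$.

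The step I expect to be the main obstacle is the interface between aggregator and base learners, where two design choices are load-bearing and must be checked rather than assumed. Aggregating by convex combination (not sampling) is simultaneously what validates the Jensen inequality above and what keeps $p_{t,\mathrm{eff}}$ inside the clipped interval $[1/h(t), 1-1/h(t)]$; and the ``mixed'' estimators $\widetilde{\ell}_{t,G}, \widetilde{g}_{t,G}$, built from \emph{both} $p_{t,\mathrm{eff}}$ and $p_{t,G}$, are precisely what make the feedback unbiased even though the realized $Z_t$ is drawn from $p_{t,\mathrm{eff}}$ and not from $p_{t,G}$. Getting the conditioning and measurability right --- so that $w_{t,\mathrm{eff}}$ and all propensities are frozen before $Z_t$ is sampled --- is what makes both the unbiasedness identities and the moment bounds hold, and is the most delicate part of the argument.
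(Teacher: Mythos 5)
Your proposal is correct and follows essentially the same route as the paper's proof: the same decomposition of group-conditional regret into an aggregation cost (bounded via Jensen's inequality on the convex-combination propensity and the unbiasedness $\E[\widetilde{\ell}_{t,G}\mid\cF_{t-1}] = f_t(p_{t,G})$, yielding the sleeping-experts regret) plus the base learner's Neyman regret on $G$'s active rounds, followed by the same instantiation with the re-derived mixed-estimator moment bounds and the scale-free SE bound. Your refined second-moment bound $O(C^4 h(t)\, h(n_{t,G})^4)$ is in fact slightly sharper in its attribution of factors than the paper's stated $O(C^4 h(t)^4 h(T))$, but both collapse to $O(C^4 h(T)^5)$ and feed into the identical final rate.
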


\begin{algorithm}[ht]
\begin{algorithmic}
\STATE \textbf{Input:} First-order Neyman regret minimization algorithm $\cA_\mathrm{ATE}$.
\STATE \textbf{Input:} Scale-free Sleeping Experts algorithm $\cA_\mathrm{SE}$.
\STATE \textbf{Input:} Feature space $\cX$, group family $\cG \subseteq 2^\cX$.
\STATE Initialize $|\cG|$ copies of $\cA_\mathrm{ATE}$: $\{\cA_\mathrm{ATE}(G)\}_{G \in \cG}$.
\FOR{$t=1, 2, \ldots$}
    \STATE Get context $x_t \in \cX$, let $\cG_t = \{G \in \cG : x_t \in G \}$.
    \FOR{active groups $G \in \cG_t$}
        \STATE Get group-specific advice $p_{t, G}$ from $\cA_\mathrm{ATE}(G)$.
    \ENDFOR
    \STATE Get weights $\{w_{t, G}\}_{G \in \cG_t}$ of active groups from $\cA_\mathrm{SE}$.
    \STATE Set treatment probability: $p_{t, \mathrm{eff}} \gets \sum_{G \in \cG_t} w_{t, G} \cdot p_{t, G}$.
    \STATE Set treatment decision: $Z_{t} \sim \mathrm{Bernoulli}(p_{t, \mathrm{eff}})$.
    \STATE Observe realized outcome: $Y_t \gets y_t(Z_{t})$.
    \FOR{active groups $G \in \cG_t$}
        \STATE Set estimated loss of $\cA_\mathrm{ATE}(G)$ as: 
        $\widetilde{\ell}_{t, G} \gets Y_t^2 \left( \frac{Z_{t}}{p_{t, \mathrm{eff}}} + \frac{1-Z_{t}}{1-p_{t, \mathrm{eff}}} \right) \left( \frac{Z_{t}}{p_{t, G}} + \frac{1-Z_{t}}{1-p_{t, G}} \right)$.
        \STATE Set estimated gradient of $\cA_\mathrm{ATE}(G)$ as:
        $\widetilde{g}_{t, G} \gets Y_t^2 \left( \frac{Z_{t}}{p_{t, \mathrm{eff}}} + \frac{1-Z_{t}}{1-p_{t, \mathrm{eff}}} \right) \left( - \frac{Z_{t}}{p_{t, G}^2} + \frac{1-Z_{t}}{(1-p_{t, G})^2} \right)$.
        \STATE Send estimated gradient $\widetilde{g}_{t, G}$ back to $\cA_\mathrm{ATE}(G)$.
    \ENDFOR   
    \STATE Send estimated losses $\{\widetilde{\ell}_{t, G}\}_{G \in \cG_t}$ back to $\cA_\mathrm{SE}$.
\ENDFOR
\end{algorithmic}
\caption{General Multigroup Adaptive Design}
\label{alg:multigroup_general}
\end{algorithm}

\subsection{Proof of Theorem~\ref{thm:multigroup_general}} \label{app:multigroup-proof}

    First, note that with the Neyman objective defined, as always, via $f_t(p) = \frac{y_t(1)^2}{p} + \frac{y_t(0)^2}{1-p}$ for $p \in (0, 1)$, we have for any group $G \in \cG$:
    \begin{align*}
        \RegVar_T(\cA; G) &= \sum_{t=1}^T \mathbbm{1}[x_t \in G] \left( f_t(p_{t, \mathrm{eff}}) - f_t(p^*_{T, G}) \right) \\
        &= \sum_{t=1}^T \mathbbm{1}[x_t \in G] \left( f_t \left(\sum_{G' \in \cG_t} w_{t, G'} \cdot p_{t, G'} \right) - f_t(p^*_{T, G}) \right) \\
        &\leq \sum_{t=1}^T \mathbbm{1}[x_t \in G] \left( \sum_{G' \in \cG_t} w_{t, G'} \cdot f_t \left(p_{t, G'} \right) - f_t(p^*_{T, G}) \right) \\
        &= \underbrace{\sum_{t=1}^T \mathbbm{1}[x_t \in G] \left( \sum_{G' \in \cG_t} w_{t, G'} \cdot f_t \left(p_{t, G'} \right) - f_t(p_{t, G}) \right)}_\text{Term 1: Sleeping Experts Regret of Aggregation Scheme}
        \\&\qquad + \underbrace{\sum_{t=1}^T \mathbbm{1}[x_t \in G] \left( f_t(p_{t, G}) - f_t(p^*_{T, G}) \right)}_\text{Term 2: ATE Neyman regret on Group $G$}.
    \end{align*}
    Here, $p^*_{T, G}$ denotes the best-in-hindsight static treatment allocation probability on the set of rounds up to round $T$ that correspond to group $G$. The inequality holds by convexity of the objective $f_t$.

    What we just did is partition the multigroup regret expression into two terms. The expectation of the second term is bounded by the expected regret of the group-specific ATE Neyman regret minimization algorithm: $\E[\text{Term 2}] \leq \E \left[\RegVar_T(\cA_\mathrm{ATE}) \right]$. The first term will be bounded by the sleeping experts regret of the aggregation algorithm.  
    
    To continue the analysis, we first collect the properties of the estimated outcomes, losses, and gradients. Namely, we have for any round $t$ and for any group $G \in \cG_t$:
    \begin{itemize}
        \item $\E \left[\widetilde{\ell}_{t, G} \, \Big|  \, \{Z_\tau\}_1^{t-1} \right] 
        = p_{t, \mathrm{eff}} \cdot \frac{(y_t(1))^2}{p_{t, \mathrm{eff}} \cdot p_{t, G}} + (1 - p_{t, \mathrm{eff}}) \cdot \frac{(y_t(0))^2}{(1 - p_{t, \mathrm{eff}}) \cdot (1-p_{t, G})} 
        = f_t(p_{t, G})$;
        \item $ \norm{\widetilde{\ell}_{t}}_\infty = \max_{G \in \cG_t} \left| \widetilde{\ell}_{t, G} \right| 
        \leq \max_{G \in \cG_t} \max\left\{ \frac{(y_t(1))^2}{p_{t, \mathrm{eff}} \cdot p_{t, G}}, \frac{(y_t(0))^2}{(1 - p_{t, \mathrm{eff}}) \cdot (1-p_{t, G})} \right\} 
        \leq C^2 h(t)^2$; 
        \item $\E \left[\widetilde{g}_{t, G} \, \Big|  \, \{Z_\tau\}_1^{t-1} \right] 
        = (1 - p_{t, \mathrm{eff}}) \cdot \frac{(y_t(0))^2}{(1 - p_{t, \mathrm{eff}}) \cdot (1-p_{t, G})^2} 
        - p_{t, \mathrm{eff}} \cdot \frac{(y_t(1))^2}{p_{t, \mathrm{eff}} \cdot p^2_{t, G}} 
        = f'_t(p_{t, G})$;
        \item $\E \left[|\widetilde{g}_{t, G}| \, \Big|  \, \{Z_\tau\}_1^{t-1} \right] 
        = (1 - p_{t, \mathrm{eff}}) \cdot \frac{(y_t(0))^2}{(1 - p_{t, \mathrm{eff}}) \cdot (1-p_{t, G})^2} 
        + p_{t, \mathrm{eff}} \cdot \frac{(y_t(1))^2}{p_{t, \mathrm{eff}} \cdot p^2_{t, G}} 
        \leq 2 C^2 h(t)^2$;
        \item $\E \left[\widetilde{g}_{t, G}^2 \, \Big|  \, \{Z_\tau\}_1^{t-1} \right] 
        = (1 - p_{t, \mathrm{eff}}) \cdot \left( \frac{(y_t(0))^2}{(1 - p_{t, \mathrm{eff}}) \cdot (1-p_{t, G})^2} \right)^2
        + p_{t, \mathrm{eff}} \cdot \left( \frac{(y_t(1))^2}{p_{t, \mathrm{eff}} \cdot p^2_{t, G}} \right)^2
        \leq 2C^4 h(t)^5$.
    \end{itemize}
    The last calculation holds owing to the fact that at any round $t$, the aggregated probability $p_{t, \mathrm{eff}}$ is a convex combination of the probabilities $p_{t, G}$ for all $G \in \cG_t$. Indeed that implies 
    \[\min\left\{p_{t, \mathrm{eff}}, \left(1-p_{t, \mathrm{eff}} \right)\right\} \geq \min_{G \in \cG_t} \left\{p_{t, G}, \left(1-p_{t, G} \right)\right\} \geq 1/h(t),\] 
    leading to the bound $\max \{1/p_{t, \mathrm{eff}}, 1/ \left(1-p_{t, \mathrm{eff}} \right)\} \leq h(t)$.

    By the first of these properties, we can bound the expectation of Term~1 as follows:
    \begin{align*}
        \E[\text{Term 1}] &= \E \left[\sum_{t=1}^T \mathbbm{1}[x_t \in G] \left( \sum_{G' \in \cG_t}  w_{t, G'} \cdot f_t \left(p_{t, G'} \right) - f_t \left(p_{t, G} \right) \right) \right] \\
        &= \E \left[\sum_{t=1}^T \mathbbm{1}[x_t \in G] \left( \sum_{G' \in \cG_t}  w_{t, G'} \cdot \E\left[\widetilde{\ell}_{t, G'} \, \Big|  \, \{Z_\tau\}_1^{t-1} \right] - \E\left[ \widetilde{\ell}_{t, G} \, \Big|  \, \{Z_\tau\}_1^{t-1} \right] \right) \right] \\
        &= \E \left[\sum_{t=1}^T \mathbbm{1}[x_t \in G] \cdot \E\left[ \sum_{G' \in \cG_t}  w_{t, G'} \cdot \widetilde{\ell}_{t, G'} - \widetilde{\ell}_{t, G} \, \Big|  \, \{Z_\tau\}_1^{t-1} \right] \right] \\
        &= \E \left[\sum_{t=1}^T \mathbbm{1}[x_t \in G] \cdot \E\left[ \langle  w_t, \widetilde{\ell}_t \rangle - \widetilde{\ell}_{t, G} \, \Big|  \, \{Z_\tau\}_1^{t-1} \right] \right] \\
        &= \sum_{t=1}^T \mathbbm{1}[x_t \in G] \cdot \E \left[ \E\left[ \langle  w_t, \widetilde{\ell}_t \rangle - \widetilde{\ell}_{t, G} \, \Big|  \, \{Z_\tau\}_1^{t-1} \right] \right] \\
        &= \E \left[ \sum_{t=1}^T \mathbbm{1}[x_t \in G] \left(\langle  w_t, \widetilde{\ell}_t \rangle - \widetilde{\ell}_{t, G} \right) \right] \\
        &\leq \E \left[\RegSE_T(\cA_\mathrm{SE}) \right].
    \end{align*}
    Thus, in combination with the above we indeed have:
    \[\E \left[\RegVarMG_T(\cA; \cG) \right] \leq \E \left[\RegSE_T(\cA_\mathrm{SE}) \right] + \E \left[\RegVar_T(\cA_\mathrm{ATE}) \right].\]

    We are now going to instantiate this regret bound with the following concrete choices. $\cA_\mathrm{SE}$ will be instantiated as the scale-free sleeping experts Algorithm~\ref{alg:SE}. For each copy of the first-order ATE Neyman regret minimization method, we will use the modification of \ClipOGDSC which uses, instead of its originally specified gradient estimator $g_t$, the gradient estimator $\widetilde{g}_t$ specified in our multigroup \cref{alg:multigroup_general}.

    Our specific choice of $\cA_\mathrm{SE}$ thus leads, by \Cref{thm:scalefree-SE} with our above bound on $\norm{\widetilde{\ell}_{t}}_\infty$ plugged in, to the following bound:
    \[
    \E \left[\RegSE_T(\cA_\mathrm{SE}) \right] \leq 15 \max_{t \in [T]} \norm{\ell_t}_\infty \sqrt{d T} \leq 15 C^2 \sqrt{|\cG|} \cdot (h(T))^2 T^{1/2}. 
    \]

    Now we update the regret bound of \ClipOGDSC to use $\widetilde{g}_t$ instead of $g_t$ at each round $t$.
    From the analysis of \Cref{claim:onestep} and \Cref{claim:summingterms} in the proof of \Cref{thm:regret}, we can distill the following inequality holding for \emph{any} unbiased gradient estimators $\{\widetilde{g}_t\}_{t \geq 1}$ and for the optimal $p^*$:
    \[
    \E[\RegVar_T(\cA_\mathrm{ATE})] = \sum_{t=1}^T \E[f_t(p_t) - f_t(p^*)] \leq
    \sum_{t=1}^T \eta_t \cdot \E[ \widetilde{g}_t^2 | \cF_{t-1}]  + 2 \sum_{t=1}^{t^*} \left(\frac{1}{\eta_t \cdot h(t)} + \E[ |\widetilde{g}_t| | \cF_{t-1}] \right).
    \]
    So it suffices to bound the first and second absolute raw moment of $\widetilde{g}_t$ in terms of the overlap function $h$ in order to obtain concrete regret bounds. From the facts established above, we have at any time horizon $T$ of the multigroup algorithm: $\E \left[\widetilde{g}_{t, G}^2 \, \Big|  \, \{Z_{\tau}\}_1^{t-1} \right] \leq 2C^4 h(t)^4 h(T)$, and $\E \left[|\widetilde{g}_{t, G}| \, \Big|  \, \{Z_{\tau}\}_1^{t-1} \right] \leq 2 C^2 h(t) h(T)$. Plugging this in and recalling the learning rate setting $\eta_t = \frac{1}{2 c^2 t}$, we obtain:
    \begin{align*}
        &\E[\RegVar_T(\cA_\mathrm{ATE})] \leq \frac{(C h(T))^5}{2c^2} (\log (T+1) + 1)  + 2 t^* \cdot C^2 h(t^*) h(T) + 4c^2 \sum_{t=1}^{t^*} \frac{t}{h(t)} \\
        &\leq \frac{(C h(T))^5}{2c^2} (\log (T+1) + 1) + 2 C^2 \left(1 + \frac{C}{c} \right) \hinv\left(1+\frac{C}{c}\right) \cdot h(T) + 2c^2 \left(\hinv\left(1+\frac{C}{c}\right) + 1\right)^2 \\
        &= O \left( (h(T))^5 \log T \right).
    \end{align*}
    Thus, asymptotically in $T$ this modification of \ClipOGDSC obtains the same rate. The only difference is non-asymptotic; we now acquire an additional term that depends on $T$ in a low-order way: $2C^2 (1+C/c) \, \hinv(1+C/c) \cdot h(T)$, which formerly had an additional factor of $(1+C/c)$ instead of $h(T)$. Even though this term is lower-order in $T$, but nonetheless it merits a mention, as here $h(T)$ coexists with the inverse clipping rate mapping, $\hinv$, being evaluated at a ``critical point'' $1+C/c$. Since the inverse mapping $\hinv$ will grow fast when $h$ grows slowly, this term can practically speaking become influential in the regret bound if the problem is not well-conditioned (if $C/c$ is very large). Thus, in practice this term may merit a tradeoff in choosing $h$ to not be too slowly-growing.

    To conclude the proof, note by collecting the above two bounds that:
    \begin{align*}
        \E \left[\RegVarMG_T(\cA; \cG) \right] &\leq \E \left[\RegSE_T(\cA_\mathrm{SE}) \right] + \E \left[\RegVar_T(\cA_\mathrm{ATE}) \right] \\
        &\leq 15 C^2 \sqrt{|\cG|} \cdot (h(T))^2 T^{1/2} + O \left( (h(T))^5 \log T \right) \\
        &\leq O \left( \sqrt{|\cG|} \cdot (h(T))^5 \cdot \sqrt{T} \right)
    \end{align*}

\section{Additional Experimental Results} \label{app:datasets}

In this section, we present experiments on additional real-world dataset. We list them below along with their descriptions.
We then turn to the description of the results.

\subsection{Task and Dataset Descriptions}

\paragraph{Large Language Model Benchmarking}
We test our methods on (a subset of) large language model (LLM) benchmarking data that was examined by \citet{fogliato2024precise}, and includes BigBench \citep{srivastava2022beyond}, MedMCQA \citep{pal2022medmcqa}, XCOPA \citep{ponti2020xcopa}, HellaSwag \citep{zellers2019hellaswag}, MMLU \citep{hoffmann2022training}, and XNLI \citep{conneau2018xnli}. Multiple LLMs are compared on these datasets, with each model producing a vector of logits for each data point. These logits are turned into probability vectors. Since each task is supervised, we know the correct answers. We select two models—Mistral-7B-Instruct-v0.2 \citep{jiang2023mistral} (treatment) and Google Gemma-2b \citep{team2024gemma} (control)—and build two parallel outcome sequences using the model accuracies. More specifically, for each data point, the model's chosen answer is the class with the highest predicted probability, and we define accuracy as 1 if this chosen class matches the correct answer and 0 otherwise.

\paragraph{ASOS Digital Dataset} 
We use the sequential experiments dataset from \citet{liu2021datasets}, gathered by ASOS.com between 2019 and 2020. It has 24{,}153 rows from 78 online controlled experiments. Each row represents a group of users who arrived during a certain time span and shows the average treatment and control outcomes for those users. Across all experiments, there are 99 different treatments and one control. The dataset tracks 4 consistent metrics; each row focuses on one of these metrics. This structure naturally creates 4 subsets of rows (each with about 6{,}000 rows). We treat each subset as a separate dataset and feed each of these 4 pairs of treatment and control outcome sequences into \ClipOGDSC and ClipOGD$^0$. This setup keeps outcome definitions consistent within each subset, while mixing different experiments and thus providing a varied environment for evaluating sequential ATE estimation methods.

\subsection{Experimental Results}

\subsubsection{LLM Benchmarking}

\Cref{fig:llbench} shows the experimental results across these six tasks (BigBench–MC, HellaSwag, MedMCQA, MMLU, XCOPA, XNLI). The top row shows that the treatment probabilities of ClipOGD\textsuperscript{0} (orange) fluctuate more, while the treatment probabilities of ClipOGD\textsuperscript{SC} (blue) settle closer to a stable value. Although our algorithm's assigned probabilities may initially jump around more because of the more aggressive clipping rate, they also stabilize more quickly. The second row shows the variances and tells a similar story: the variance ClipOGD\textsuperscript{SC} is smaller and decreases faster compared to that of ClipOGD\textsuperscript{0}. As seen in the bottom row, the Neyman regret of ClipOGD\textsuperscript{0} stays away from zero, whereas the regret of ClipOGD\textsuperscript{SC} shrinks toward zero or remains lower throughout. This pattern suggests that ClipOGD\textsuperscript{SC} converges to the Neyman-optimal probabilities with less fluctuation and lower regret than ClipOGD\textsuperscript{0}.

\begin{figure*}[ht]
    \includegraphics[width=0.99\textwidth]{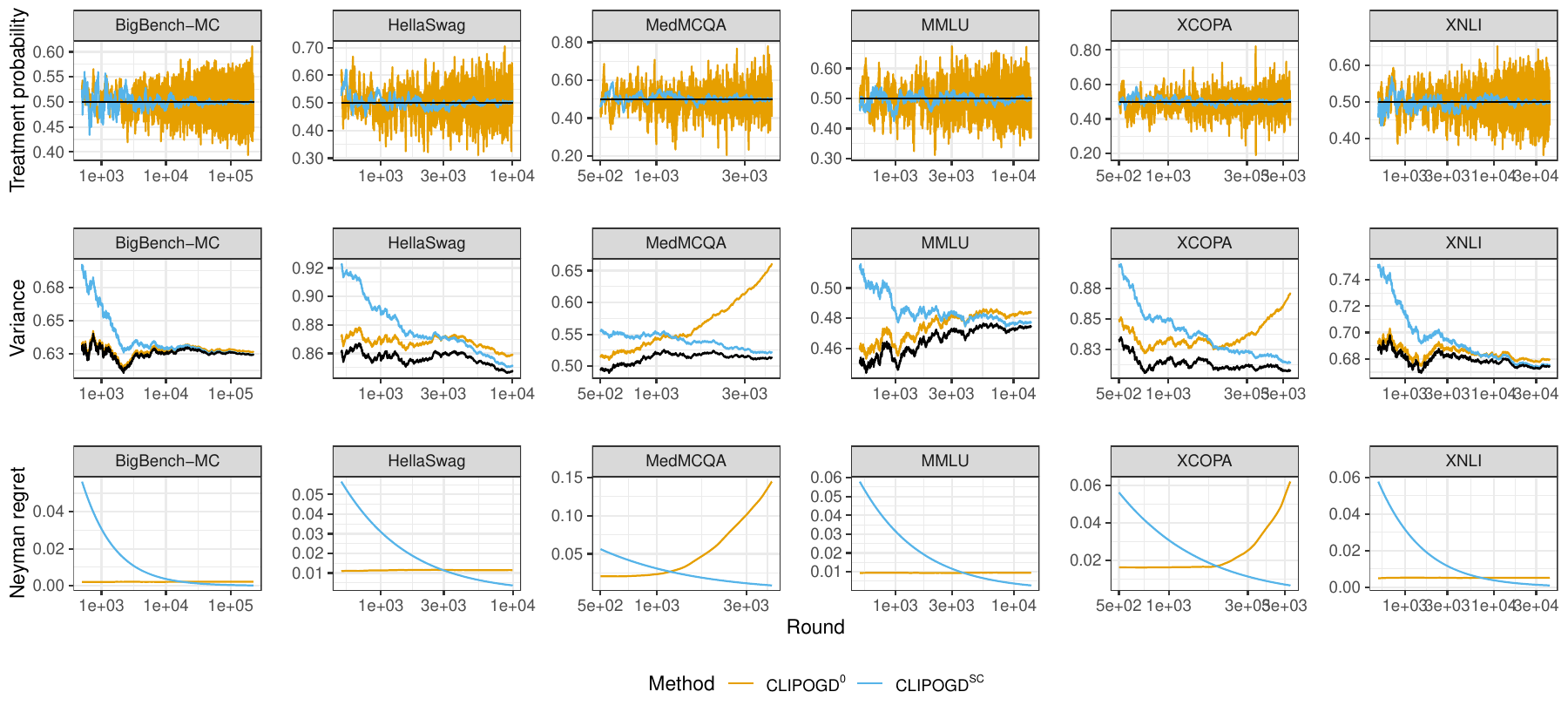}
    \caption{
         \textbf{Treatment probabilities, variance of the ATE, and Neyman regret of ClipOGD on LLM benchmarking data}. The solid black line in the treatment probabilities indicates the Neyman optimal probability.
         }
        \label{fig:llbench}
        \vspace{-1em}
\end{figure*}

Additionally, we show the per-group Neyman regret of MGATE and ClipOGD in the contextual experiments. 
\begin{figure*}[ht]
    \includegraphics[width=0.99\textwidth]{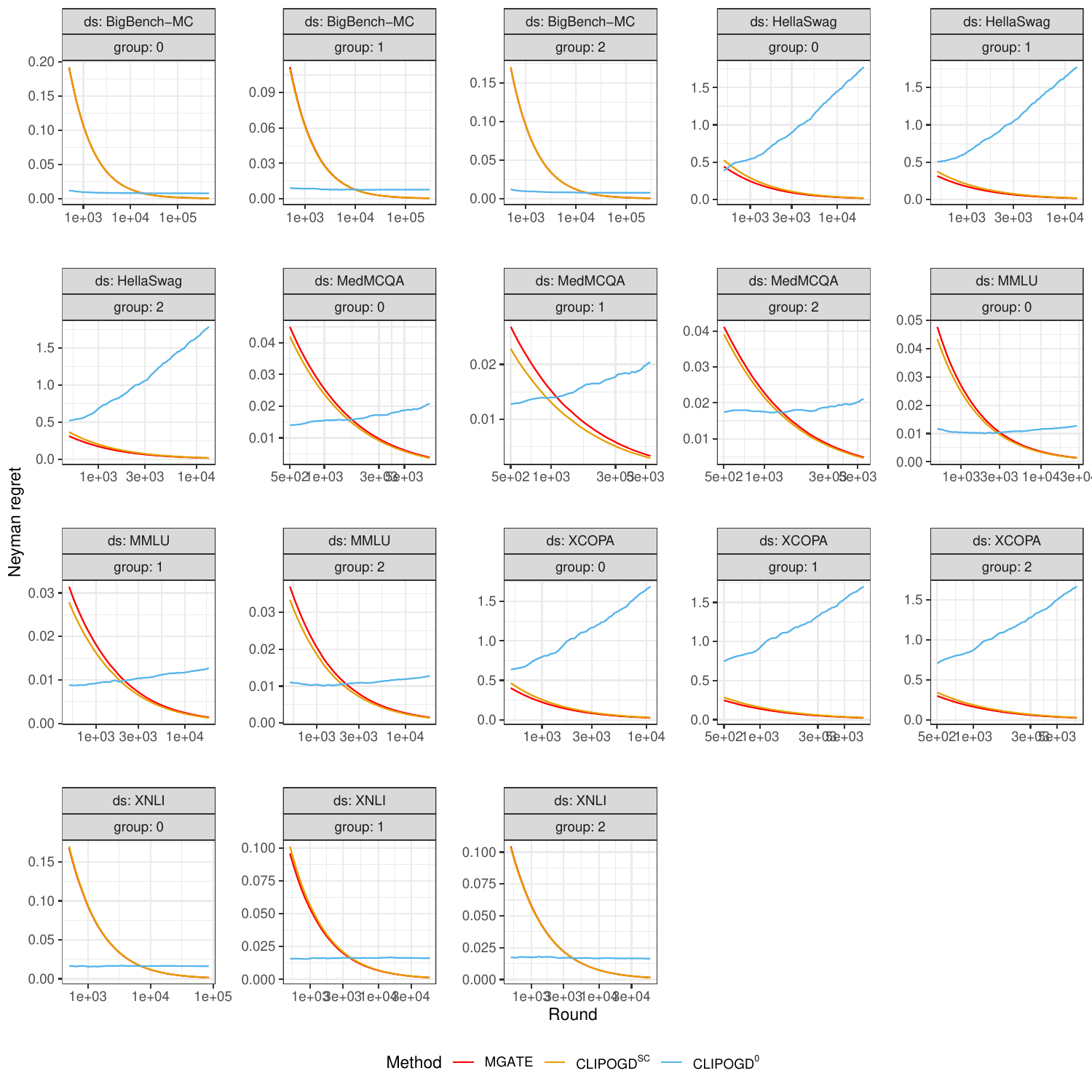}
    \caption{
         \textbf{Group-conditional Neyman regret of ClipOGD and MGATE on the LLM Benchmarking data}. 
         }
        \label{fig:llbench_multigroup}
        \vspace{-1em}
\end{figure*}

\clearpage

\subsubsection{ASOS Digital Dataset}

\Cref{fig:asos} shows the Neyman regret on this dataset.  Across all four metrics, ClipOGD\textsuperscript{SC} (blue) steadily reduces Neyman regret, whereas ClipOGD\textsuperscript{0} (orange) remains higher or grows over time. Although the regret levels vary by metric,  
ClipOGD\textsuperscript{SC} consistently converges closer to the Neyman-optimal probabilities as shown by the shrinking regret.

\begin{figure*}[ht]
    \includegraphics[width=0.99\textwidth]{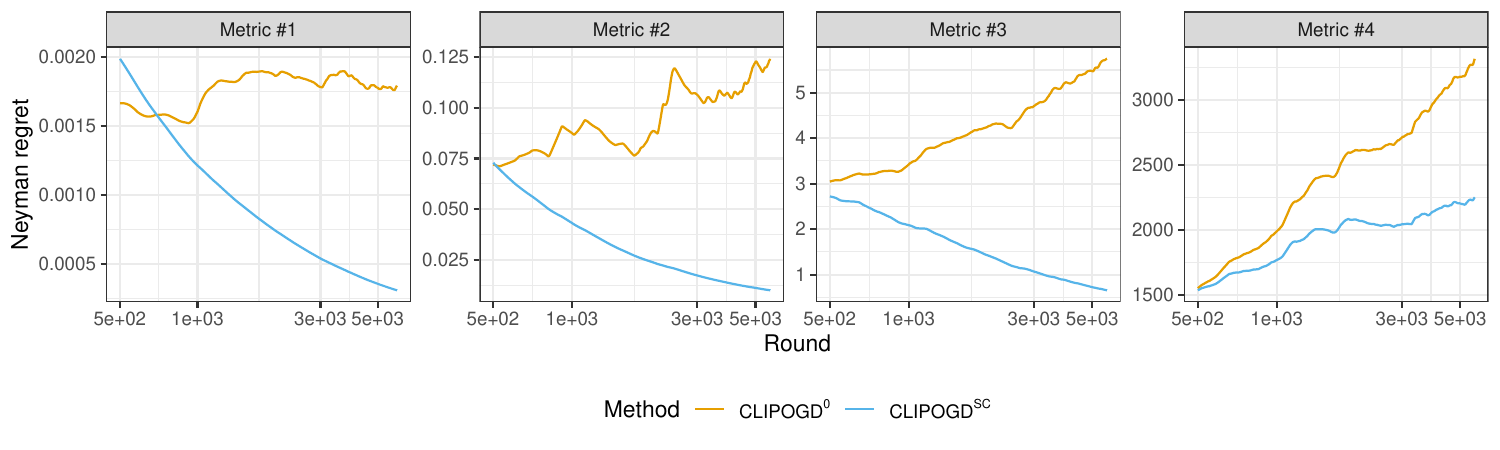}
    \caption{
         \textbf{Neyman regret of ClipOGD on the ASOS Digital Dataset}. 
         }
        \label{fig:asos}
        \vspace{-1em}
\end{figure*}

Additionally, we show the per-group Neyman regret of MGATE and ClipOGD in the contextual experiments. Here we observe that MGATE and ClipOGD$^\textrm{SC}$ attain close to optimal Neyman regret guarantees on all groups.
\begin{figure*}[ht]
    \includegraphics[width=0.99\textwidth]{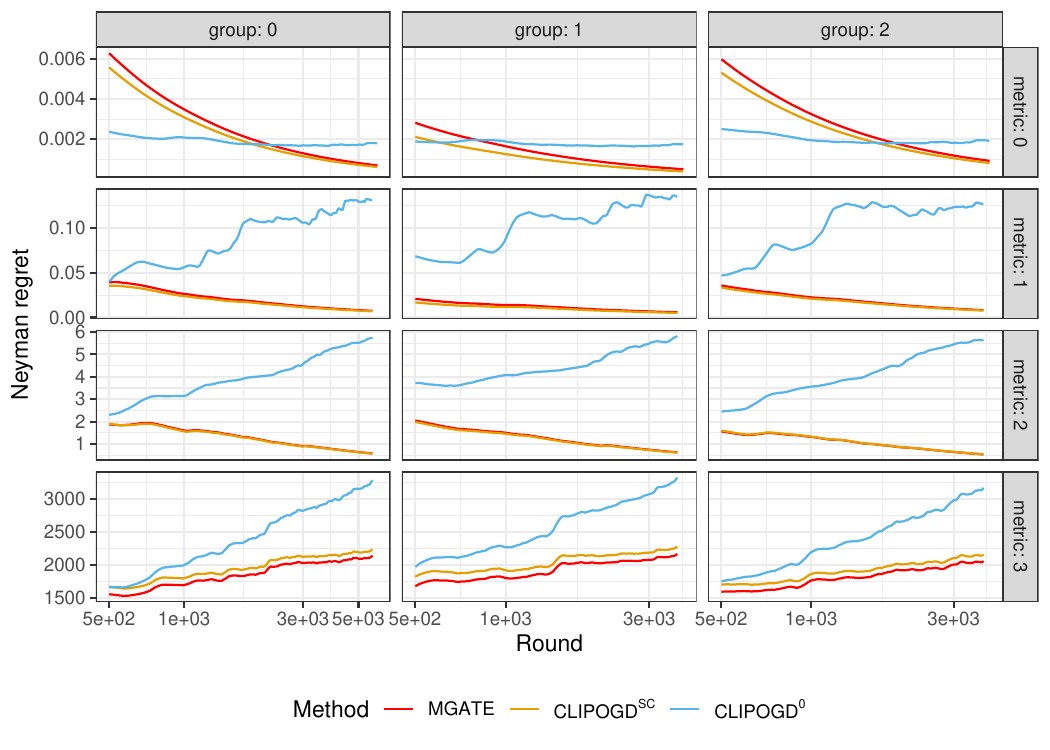}
    \caption{
         \textbf{Group-conditional Neyman regret of ClipOGD and AMGATE on the ASOS Digital Dataset}. 
         }
        \label{fig:asos_multigroup}
        \vspace{-1em}
\end{figure*}

\end{document}